%% file: main.tex
\documentclass{arxiv}
\usepackage{graphicx,amsmath,amssymb,amsfonts,amsthm,mathrsfs}
\numberwithin{equation}{section}
\usepackage{appendix}
\usepackage{booktabs,array,tabularx,enumitem,tikz,capt-of}
\usepackage[section]{placeins}
\usepackage{needspace}
\usepackage{microtype}
\usepackage[loadonly]{titlesec}
\titleclass{\subsection}[2]{straight}
\titleformat{\subsection}[block]
  {\fontsize{11}{13}\selectfont\bfseries}{\mdseries\thesubsection.}{0.5em}{}
\titlespacing{\subsection}{0pt}
  {0.5\linespacing plus 0.25\linespacing}{0.3\linespacing}
\usepackage[colorlinks=true,allcolors=blue]{hyperref}
\usetikzlibrary{arrows.meta,positioning,calc}
\input{diagrams/styles}
\theoremstyle{plain}
\newtheorem{theorem}[thrm]{Theorem}
\newtheorem{proposition}[thrm]{Proposition}
\newtheorem{lemma}[thrm]{Lemma}

\theoremstyle{definition}
\newtheorem{definition}[thrm]{Definition}

\theoremstyle{definition}

\newcommand{\R}{\mathbb R}
\newcommand{\Z}{\mathbb Z}

\newcommand{\C}{\mathbb C}
\newcommand{\LL}{L^2(0,1)}
\newcommand{\ip}[2]{\langle #1,#2\rangle}
\newcommand{\hp}[2]{\langle #1,#2\rangle_{\mathcal H}}
\newcommand{\Ap}{\mathcal A'_{\lambda}}
\newcommand{\Aop}{A_{\lambda}}
\newcommand{\dd}{\,\mathrm d}
\newcommand{\qd}[1]{#1^{[3]}}
\allowdisplaybreaks
\begin{document}
\runningtitle{Rapid Fredholm stabilization of the Kuramoto--Sivashinsky equation with unrestricted, spatially-varying anti-diffusion.}
\runningauthors{L. Bhan, M. Krstic and Y. Shi}
\title{Rapid Fredholm stabilization of the Kuramoto--Sivashinsky equation with unrestricted, spatially-varying anti-diffusion.}
\author{Luke Bhan}
\address{University of California, San Diego, 3180 Voigt Drive,
La Jolla, CA 92093, USA.\\
\email{lbhan@ucsd.edu, mkrstic@ucsd.edu, yyshi@ucsd.edu}\\
Corresponding author: Luke Bhan (\texttt{lbhan@ucsd.edu}).}
\author{Miroslav Krstic}\sameaddress{1}
\author{Yuanyuan Shi}\sameaddress{1}
\date{}
\begin{abstract}
In this work, we develop the first feedback design for rapid
stabilization of the Kuramoto--Sivashinsky equation with a spatially varying
anti-diffusion coefficient. For constant coefficients, the single-input Fredholm
design of Coron and L\"u (2015) excludes a discrete set of values at which
repeated unstable eigenvalues cause a loss of controllability. We overcome this
obstruction by introducing a second boundary input and assigning the two inputs
distinct roles. The key idea, inspired by Heymann's Lemma, is to use the
boundary value $u(0,t)$ entirely for a pre-feedback that renders the modified
plant controllable through the curvature input $u_{xx}(0,t)$. The latter input
then stabilizes the plant through a Fredholm backstepping transformation.
We show that two inputs suffice for controllability and are necessary
when the plant has an unstable double eigenvalue.
However, the Fredholm kernel still must be approximated for implementation. Hence,
to enable kernel and gain approximation, we prove continuity of the
coefficient-to-gain design map on compact admissible design classes.
Unlike Volterra-based continuity proofs using successive approximations,
our proof uses the modal representation to control the spectral data,
the inverse coefficient system, and the tails of the kernel and gain
series. This yields a single neural operator approximation of the gain to
any prescribed $L^2$ accuracy across the class.
Finally, we establish rapid local stabilization of the nonlinear
closed-loop system under both the exact gains and sufficiently accurate
approximations.
We conclude with numerical results that illustrate prescribed decay rates
and the accuracy and computational cost of the approximations. In particular, we train a Fourier neural operator that
achieves typical relative gain errors of approximately $0.1\%$ and
stabilizes all held-out cases tested, including a plant with an unstable
double eigenvalue.
\end{abstract}
\maketitle


\section{Introduction}\label{sec:intro}

\subsection{Summary of results}\label{ssec:summary}

We study the Kuramoto--Sivashinsky (KS) equation with a spatially varying
anti-diffusion coefficient and two boundary inputs:
\begin{subequations}\label{eq:plant}
\begin{align}
u_t+u_{xxxx}+\big(\lambda(x)\,u_x\big)_x+u\,u_x&=0,\label{eq:plant-a}\\
u(1,t)=u_{xx}(1,t)&=0,\\
u(0,t)&=U_1(t),\\
u_{xx}(0,t)&=U_2(t),
\end{align}
\end{subequations}
on $(x,t)\in(0,1)\times(0,\infty)$, with $u(\cdot,0)=u_0\in\LL$.
The inputs $U_1$ and $U_2$ act on boundary displacement and curvature, respectively.
For constant $\lambda$ and
$U_1\equiv0$, the KS system in \eqref{eq:plant} is equivalent to the system in
\cite{CoronLu2015}, whose design requires
\begin{align}
\lambda\notin\mathscr N_\lambda:=\{(i^2+j^2)\pi^2:\ i,j\in\Z^+,\ i\neq j\},\label{eq:CL-restriction}
\end{align}
where, at these values, the linearized plant has an unstable double eigenvalue that no linear feedback through $U_2$ alone can move (Proposition~\ref{prop:obstruction}).

We showcase that, with two inputs, the restriction of \eqref{eq:CL-restriction} is removed. Moreover, we upgrade $\lambda$ to be any function in $W^{1,\infty}(0,1)$ by assigning the two inputs distinct roles. The displacement input $U_1$ is used, in the sense of Heymann's Lemma \cite{Heymann1968}, entirely for a pre-feedback that makes the plant controllable through the curvature input $U_2$, and $U_2$ carries a Fredholm backstepping feedback (Figure~\ref{fig:design-overview}). The first of our two main results is the following.

\begin{theorem}[Two boundary feedbacks stabilize the Kuramoto--Sivashinsky equation at any rate for every anti-diffusion coefficient]\label{thm:intro}
Consider the plant \eqref{eq:plant} with $\lambda\in W^{1,\infty}(0,1)$ under the feedback laws
\begin{align}
U_1(t)=\varepsilon\int_0^1e^{-\theta y}\,u(y,t)\dd y,\qquad
U_2(t)=\int_0^1h(y)\,u(y,t)\dd y,\label{eq:intro-laws}
\end{align}
with the gain
\begin{align}
h(y)=\sum_{n\ge1}\frac{a+\tilde\kappa_n}{\tilde\chi_n'(0)}\,\tilde\chi_n(y),\label{eq:intro-gain}
\end{align}
where $\tilde\mu_n$, $n\ge1$, are the eigenvalues, with eigenfunctions $\tilde\chi_n$, of
\begin{align}
-\tilde\chi''''-(\lambda\tilde\chi')'-\varepsilon\big(\tilde\chi'''(0)+\lambda(0)\tilde\chi'(0)\big)e^{-\theta x}=\tilde\mu\,\tilde\chi,\qquad
\tilde\chi(0)=\tilde\chi''(0)=\tilde\chi(1)=\tilde\chi''(1)=0,\label{eq:intro-eig}
\end{align}
and $(\tilde\kappa_n)$ is the bounded solution of
\begin{align}
\tilde\kappa_m=-a\sum_{n\ne m}\frac{a+\tilde\kappa_n}{\tilde\mu_n-\tilde\mu_m+a},\qquad m\ge1.\label{eq:intro-kappa}
\end{align}
There exist a discrete set $\mathscr T_\lambda\subset(0,\infty)$ and a positive function $\varepsilon_1$ on $(0,\infty)\setminus\mathscr T_\lambda$ such that, for all $\theta\notin\mathscr T_\lambda$, $0<|\varepsilon|<\varepsilon_1(\theta)$, and $a>\max\{0,\max_n\operatorname{Re}\tilde\mu_n\}$ with $a\notin\{\tilde\mu_i-\tilde\mu_j:\ i,j\ge1\}$, every $\tilde\mu_n$ is algebraically simple, $\tilde\chi_n'(0)\ne0$, the bounded solution $(\tilde\kappa_n)$ of \eqref{eq:intro-kappa} exists and is unique, the series \eqref{eq:intro-gain} converges in $\LL$ to a real-valued function, and, for every $\omega\in\big(0,\,a-\max_n\operatorname{Re}\tilde\mu_n\big)$, there exist $\rho_0,C,\epsilon^*>0$ such that, for every $u_0\in\LL$ with $\|u_0\|\le\rho_0$, the closed-loop system \eqref{eq:plant}, \eqref{eq:intro-laws}, with $h$ in $U_2$ defined by \eqref{eq:intro-gain} or replaced by any $\hat h\in\LL$ satisfying $\|\hat h-h\|\le\epsilon^*$, has a unique mild solution $u\in C([0,\infty);\LL)\cap L^2_{\rm loc}([0,\infty);H^2(0,1))$ satisfying
\begin{align}
\|u(t)\|\le C\,e^{-\omega t}\,\|u_0\|,\qquad t\ge0.\label{eq:intro-decay}
\end{align}
\end{theorem}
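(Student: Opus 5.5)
The plan has four stages. First absorb $U_1$ into the operator, then show that the modified spectrum is simple and controllable from $U_2$. Next build the Fredholm transformation mode by mode. Finally, close the nonlinear estimate by a fixed-point argument.

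\emph{Step 1 (pre-feedback and the modified operator).} Substituting $U_1=\varepsilon\int_0^1e^{-\theta y}u\dd y$, I lift the inhomogeneous Dirichlet datum, or equivalently work in the weak/adjoint formulation. This shows that the linearized closed loop with $U_2\equiv0$ is generated by an operator whose adjoint eigenproblem is exactly \eqref{eq:intro-eig}. The rank-one perturbation $\varepsilon(\tilde\chi'''(0)+\lambda(0)\tilde\chi'(0))e^{-\theta x}$ is the adjoint trace of the boundary feedback. For $\varepsilon=0$ this is the self-adjoint operator $-\partial_x^4-\partial_x(\lambda\partial_x)$ with hinged boundary conditions, whose eigenvalues $\mu_n^0$ are real, have multiplicity at most two, and are asymptotically $-n^4\pi^4+O(n^2)$.

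\emph{Step 2 (simplicity and $\tilde\chi_n'(0)\neq0$).} For the finitely many unstable or low modes I use analytic perturbation theory in $\varepsilon$. On a multiple eigenvalue $\mu^0$ with eigenspace $E$, the first-order splitting is governed by the rank-one matrix $\big(\ip{(\phi_j'''(0)+\lambda(0)\phi_j'(0))e^{-\theta x}}{\phi_k}\big)_{j,k}$ on $E$. The quantity $\phi'''(0)+\lambda(0)\phi'(0)$ is the boundary trace tied to $U_1$. If it vanished on all of $E$, then together with $\phi(0)=\phi''(0)=0$ this would contradict unique continuation, so generically a rank-one splitting occurs. The set of $\theta$ where $\ip{e^{-\theta x}}{\phi_k}$ degenerates is the zero set of an entire, nontrivial function of $\theta$; I define $\mathscr T_\lambda$ as the union of these finitely many discrete sets. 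Off $\mathscr T_\lambda$ and for small $|\varepsilon|$, all eigenvalues are algebraically simple. For the remaining modes I also need $\tilde\chi_n'(0)\neq0$, which is the Hautus condition for $U_2$. A simple eigenfunction with $\chi'(0)=\chi(0)=\chi''(0)=0$ would have to satisfy a fourth boundary condition at $0$; I will rule this out by the same entire-function/perturbation argument plus continuity in $\varepsilon$. For large $n$, separation of the eigenvalues ($\sim n^3$ gaps), together with a Bauer--Fike/Kato bound uniform in $n$ (the perturbation is bounded), gives $\tilde\mu_n=\mu_n^0+O(\varepsilon)$, simplicity, $\tilde\chi_n'(0)\sim c\,n\pi$, and Riesz basis properties of $\{\tilde\chi_n\}$. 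Choosing $\varepsilon_1(\theta)$ as the minimum of the finitely many smallness thresholds finishes this step.

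\emph{Step 3 (kernel, $\tilde\kappa$ and gain).} Following Coron--Lü, I seek $\Pi=I-K$ with $K$ expanded in the biorthogonal basis, mapping the closed loop to $w_t=\mathcal A w-aw$. Matching coefficients yields \eqref{eq:intro-kappa}. I rewrite it as $(I+aM)\kappa=-aM\mathbf 1$ with $M_{mn}=(\tilde\mu_n-\tilde\mu_m+a)^{-1}$ for $n\ne m$. The $n^4$ growth gives $\sum_{n\ne m}|\tilde\mu_n-\tilde\mu_m|^{-1}$ uniformly small tails, so $M$ is bounded on $\ell^\infty$ and $I+aM$ is Fredholm of index zero. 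Injectivity comes from the nonresonance condition $a\notin\{\tilde\mu_i-\tilde\mu_j\}$ together with the Coron--Lü invertibility argument: a kernel element would produce a nontrivial eigenfunction that would have to vanish by the Hautus condition. Convergence of \eqref{eq:intro-gain} in $\LL$ follows from the Riesz basis property and $|a+\tilde\kappa_n|/|\tilde\chi_n'(0)|\lesssim n^{-1}$. Reality follows because conjugation permutes the spectrum and, by uniqueness, the data. Invertibility of $\Pi$ is Fredholm alternative plus injectivity. This step is the main obstacle: proving uniqueness of the bounded solution and invertibility of $\Pi$ for non-selfadjoint, $\varepsilon$-perturbed spectra. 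I would try first to reduce it to the selfadjoint case via the uniform smallness of $\varepsilon$ together with the quadratic-form structure used by Coron--Lü.

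\emph{Step 4 (stability and robustness).} In the target system, $w$ decays at rate $a-\max\operatorname{Re}\tilde\mu_n>\omega$. Mapping back through bounded $\Pi,\Pi^{-1}$ gives exponential stability of the linear closed loop, and the semigroup is analytic. A gain error $\hat h-h$ enters as a bounded rank-one boundary perturbation. By analytic-semigroup perturbation (the input is $L^2$-admissible with a $t^{-3/4}$-type smoothing estimate), it preserves the decay rate $\omega$ for $\|\hat h-h\|\le\epsilon^*$. For the nonlinearity $uu_x$, I set up a Duhamel fixed point in $C([0,T];\LL)\cap L^2(0,T;H^2)$ using the parabolic smoothing $\|e^{tA}\partial_x\|\lesssim t^{-1/4}e^{-\omega t}$. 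This gives local existence, uniqueness and the decay estimate \eqref{eq:intro-decay} for $\|u_0\|\le\rho_0$.
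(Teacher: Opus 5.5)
Your four-stage architecture is the paper's, but Step~3, which you flag yourself, is a genuine gap. Existence and uniqueness of the bounded $(\tilde\kappa_n)$ and invertibility of $T$ (your $\Pi$) are assertions of the theorem and also the basis of Step~4, and your injectivity sketch conflates two facts. The Hautus argument proves invertibility of $T$, not injectivity of $I+aM$. Once $K$ is shown to be Hilbert--Schmidt (from $|a+\tilde\kappa_n|/|\tilde\chi_n'(0)|\lesssim n^{-1}$, $|\tilde\chi_p'(0)|\lesssim p$, $|\tilde\mu_n-\tilde\mu_p+a|\gtrsim1+|n^4-p^4|$ and decay of $\tilde\kappa_n$), $T$ is Fredholm of index zero. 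The adjoint of the intertwining identity makes the finite-dimensional space $\ker T^*$ invariant under $(\Ap)^*$, so it contains some $\tilde\chi_n$. Then $TB_2=B_2$ gives $\ip{B_2}{\tilde\chi_n}=\ip{TB_2}{\tilde\chi_n}=\ip{B_2}{T^*\tilde\chi_n}=0$, i.e.\ $\tilde\chi_n'(0)=0$, a contradiction. A homogeneous solution of $(I+aM)c=0$, by contrast, yields through the same modal formulas an operator $T_0$ with $T_0B_2=0$ and $T_0\Ap-(\Ap-a)T_0=-B_2\mathcal F_0$, not an eigenfunction. Excluding it is a separate uniqueness statement for the Cauchy-type system $\sum_n x_n/(\tilde\mu_n-\tilde\mu_p+a)=0$, $p\ge1$, i.e.\ uniqueness of the feedback placing the closed-loop spectrum at $\{\tilde\mu_n-a\}$. (Your right-hand side should also read $-a^2M\mathbf 1$.) The paper does not reprove these facts. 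Theorem~\ref{thm:generic}(ii) verifies the hypotheses of the abstract theorem of Hayat--Loko~\cite{HayatLoko2026}: Riesz basis, $|\tilde\mu_n|\sim n^4$, $|\tilde\mu_n-\tilde\mu_p|\ge c\,n^3|n-p|$, and $|\tilde\chi_n'(0)|\sim n$. That theorem then supplies existence and uniqueness of the feedback, $TB_2=B_2$, and the isomorphism property of $T$ on $\mathcal H^r$. Your fallback of perturbing from the self-adjoint case $\varepsilon=0$ cannot work. If the plant has a double eigenvalue, its eigenspace contains $\psi\ne0$ with $\psi'(0)=0$, so Hautus fails at $\varepsilon=0$. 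By the argument of Proposition~\ref{prop:obstruction}, no invertible $T$ exists there for nonresonant $a$, since similarity with $\Aop-a$ would put the double eigenvalue in $\sigma(\Aop)-a$. Moreover, the split pair is only $O(|\varepsilon|)$ apart, so no constant is uniform as $\varepsilon\to0$. Your Step~4 (Duhamel with analytic smoothing, plus a small-gain treatment of $B_2\ip{\hat h-h}{u}$, which is unbounded, not bounded) is a workable alternative to the paper's modal Lyapunov estimates. But it needs $T^{\pm1}$ bounded on negative-order spaces, i.e.\ the missing Step~3.

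Step~2 also rests on a false premise. The perturbation in \eqref{eq:intro-eig}, $\phi\mapsto-\varepsilon\big(\phi'''(0)+\lambda(0)\phi'(0)\big)e^{-\theta x}$, is rank one but unbounded on $\LL$. On mode $n$ it moves the eigenvalue by about $\varepsilon g_n\alpha_n$, with $g_n=\ip{e^{-\theta x}}{\varphi_n}\sim n^{-1}$ and $\alpha_n=\varphi_n'''(0)+\lambda(0)\varphi_n'(0)\sim n^3$. So $\tilde\mu_n-\mu_n^0$ grows like $|\varepsilon|n^2$, not $O(\varepsilon)$, and Bauer--Fike does not apply. High-mode simplicity survives only because $n^2=o(n^3)$ against the gaps, and proving it uses the rank-one structure and the decay of $g_m$; the paper applies Rouch\'e to the secular function on the disks $|z-\mu_n|\le\mathfrak g_n/2$ (Lemma~\ref{lem:high}). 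The trace asymptotics $|\tilde\chi_n'(0)|\sim\sqrt2\,n\pi$ need an estimate beyond $L^2$ closeness. Quadratic closeness gives a Riesz basis (Bari) only together with completeness, which for this non-self-adjoint operator must be proved (the paper uses Keldysh's theorem). At the low modes, continuity in $\varepsilon$ gives $\tilde\chi_n'(0)\ne0$ only if the $\varepsilon\to0$ limit is nonzero. For the moved branch of a split double eigenvalue, this limit is proportional to $\ip{P_\mu g_\theta}{\hat\beta_\mu}$, where $\hat\beta_\mu=\beta_j\varphi_j+\beta_k\varphi_k$ and $\beta_n=\varphi_n'(0)$. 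For a simple plant mode with $\beta_n=0$ the limit is zero, and one needs the first-order term $\varepsilon\alpha_n\ip{g_\theta}{\zeta_n}$ with $\zeta_n=\sum_{m\ne n}\beta_m\varphi_m/(\mu_m-\mu_n)$. So $\mathscr T_\lambda$ must also contain the zeros of these two entire functions of $\theta$ (nontrivial since $\hat\beta_\mu,\zeta_n\ne0$), not only those of your splitting condition.
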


Theorem~\ref{thm:intro} concerns the controller \eqref{eq:intro-laws}--\eqref{eq:intro-kappa} with the gain $h$ computed exactly or approximated within $\epsilon^*$. Since $a$ is restricted only from below and away from a discrete set, the decay rate $\omega$ can be prescribed arbitrarily, for every $\lambda\in W^{1,\infty}(0,1)$. The assertions on the eigenvalues, the gain and the coefficients $\tilde\kappa_n$ are proved in Sections~\ref{sec:existence}--\ref{sec:fredholm} (Theorems~\ref{thm:generic} and~\ref{thm:kernel}), the closed-loop assertion in Section~\ref{sec:target} (Theorem~\ref{thm:nominal} for $h$, Theorem~\ref{thm:main} for $\hat h$), and well-posedness in Appendix~\ref{app:well-posedness}.

The gain \eqref{eq:intro-gain} is defined through the spectral data of \eqref{eq:intro-eig} and the infinite system \eqref{eq:intro-kappa}, and it must be recomputed for each $\lambda$ and each choice of $(a,\varepsilon,\theta)$. Our second main result replaces this computation by the evaluation of a single neural operator, trained once on a compact class of coefficients and design parameters.

\begin{theorem}[A single neural operator stabilizes every plant in a compact class at a uniform rate]\label{thm:intro-NO}
For $B>0$ and $d\in(0,1]$, let
\begin{align}
\mathscr U_{B,d}:=\Big\{&\mathbf p=(\lambda,a,\varepsilon,\theta){}\in W^{2,\infty}(0,1)\times(0,\infty)\times\R\times(0,\infty):\ 
\|\lambda\|_{W^{2,\infty}}\le B,\ \ a+|\varepsilon|+\theta+\theta^{-1}\le B,\notag\\
&\min_{i\ne j}\frac{|\tilde\mu_i-\tilde\mu_j|}{\max(i,j)^3}\ge d,\ \ 
\min_n\frac{|\tilde\chi_n'(0)|}{n\,\|\tilde\chi_n\|}\ge d,\ \ 
\operatorname{dist}\big(a,\{\tilde\mu_i-\tilde\mu_j\}\big)\ge d,\ \ 
a-\max_n\operatorname{Re}\tilde\mu_n\ge d\Big\},\label{eq:intro-class}
\end{align}
where $\tilde\mu_n$, repeated according to algebraic multiplicity and indexed so that $\operatorname{Re}\tilde\mu_1\ge\operatorname{Re}\tilde\mu_2\ge\cdots$, are the eigenvalues of \eqref{eq:intro-eig}, with eigenfunctions $\tilde\chi_n$, let $\mathcal G(\mathbf p)$ denote the gain $h$ in \eqref{eq:intro-gain}, and consider, for a map $\hat{\mathcal G}:\mathscr U_{B,d}\to\LL$ and $\mathbf p\in\mathscr U_{B,d}$, the plant \eqref{eq:plant} with coefficient $\lambda$ under the feedback laws
\begin{align}
U_1(t)=\varepsilon\int_0^1e^{-\theta y}\,u(y,t)\dd y,\qquad
U_2(t)=\int_0^1\hat{\mathcal G}(\mathbf p)(y)\,u(y,t)\dd y.\label{eq:intro-NO-laws}
\end{align}
For every $\epsilon>0$ there exists a neural operator $\hat{\mathcal G}$, with function input $\lambda$, scalar inputs $(a,\varepsilon,\theta)$, and output in $\LL$, satisfying $\sup_{\mathbf p\in\mathscr U_{B,d}}\|\hat{\mathcal G}(\mathbf p)-\mathcal G(\mathbf p)\|\le\epsilon$, and there exist $\epsilon^*_{B,d},\rho_{B,d},C_{B,d}>0$ such that, for every map $\hat{\mathcal G}$ with $\sup_{\mathbf p\in\mathscr U_{B,d}}\|\hat{\mathcal G}(\mathbf p)-\mathcal G(\mathbf p)\|\le\epsilon^*_{B,d}$, every $\mathbf p\in\mathscr U_{B,d}$, and every $u_0\in\LL$ with $\|u_0\|\le\rho_{B,d}$, the closed-loop system \eqref{eq:plant}, \eqref{eq:intro-NO-laws} has a unique mild solution $u\in C([0,\infty);\LL)\cap L^2_{\rm loc}([0,\infty);H^2(0,1))$ satisfying
\begin{align}
\|u(t)\|\le C_{B,d}\,e^{-dt/4}\,\|u_0\|,\qquad t\ge0.\label{eq:intro-NO-decay}
\end{align}
\end{theorem}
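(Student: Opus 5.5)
The proof reduces to three ingredients: (i) compactness of $\mathscr U_{B,d}$ in a suitable topology, (ii) continuity of $\mathcal G$ on it, and (iii) a robustness argument whose constants are uniform over the class. For (i), $\{\lambda:\|\lambda\|_{W^{2,\infty}}\le B\}$ is compact in $W^{1,\infty}(0,1)$, since $W^{2,\infty}$ embeds compactly into $C^1$ by Arzel\`a--Ascoli. The scalar constraints $a+|\varepsilon|+\theta+\theta^{-1}\le B$ give a compact box with $\theta$ bounded away from $0$. The spectral conditions in \eqref{eq:intro-class} are non-strict inequalities. I will show they are closed under $W^{1,\infty}\times\R^3$ convergence, using continuity of the spectral data: each $\tilde\mu_n$ and each normalized $\tilde\chi_n$ depends continuously on $\mathbf p$, and the dependence is uniform in $n$ in the weighted senses appearing in the constraints. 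Hence $\mathscr U_{B,d}$ is a compact subset of $W^{1,\infty}\times\R^3$. The conditions also give, uniformly, the hypotheses used in Theorem~\ref{thm:generic} and Theorem~\ref{thm:kernel}: simplicity, $\tilde\chi_n'(0)\neq0$, and $a\notin\{\tilde\mu_i-\tilde\mu_j\}$.

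For (ii), I will prove that $\mathbf p\mapsto\mathcal G(\mathbf p)$ is continuous into $\LL$ by working with the modal representation, as the abstract indicates. The argument has three steps.
\begin{itemize}[leftmargin=*]
\item \textbf{Spectral data.} Perturbation theory of the fourth-order operator in \eqref{eq:intro-eig} around its principal part gives asymptotics $\tilde\mu_n=-n^4\pi^4+O(n^2)$ and $\tilde\chi_n=\sqrt2\sin n\pi x+O(n^{-1})$, with constants depending only on $B$. Continuity of each finite group of eigenpairs then follows from Kato-type arguments, since the gap condition (weighted by $\max(i,j)^3$) keeps the eigenvalues separated with a margin $d$.
\item \textbf{The coefficient system \eqref{eq:intro-kappa}.} I will write it as $(I+K_{\mathbf p})\tilde\kappa=f_{\mathbf p}$ on $\ell^\infty$, or on a weighted sequence space. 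Here $K_{\mathbf p}$ has entries $a/(\tilde\mu_n-\tilde\mu_m+a)$ and is compact, with tails summable uniformly in $\mathbf p$ thanks to the quartic growth and the gap $d$. Invertibility at each $\mathbf p$ comes from Theorem~\ref{thm:kernel}. Continuity of $\mathbf p\mapsto K_{\mathbf p}$ in operator norm, together with compactness of the class, then gives a uniform bound on $\|(I+K_{\mathbf p})^{-1}\|$ and continuity of $\tilde\kappa$.
\item \textbf{Tails of the gain series.} Using $|\tilde\chi_n'(0)|\ge d\,n\|\tilde\chi_n\|$, the coefficients in \eqref{eq:intro-gain} are controlled. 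The tail $\sum_{n>N}$ is small in $\LL$ uniformly over $\mathscr U_{B,d}$, either through near-orthogonality of the $\tilde\chi_n$ (a Riesz basis with uniform bounds) or through the explicit decay of $a+\tilde\kappa_n$. The finite parts are continuous, so a $3\epsilon$ argument yields continuity.
\end{itemize}
I expect this uniform tail and Riesz-basis control to be the main obstacle. The series converges only conditionally-like in $\LL$, since the coefficients are roughly $O(1/n)$, so the estimates must rely on Riesz-basis constants that are uniform in $\mathbf p$. I will first try to get these from a Bari-type quadratic closeness to $\{\sqrt2\sin n\pi x\}$, with uniform constants.

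Once $\mathcal G$ is known to be continuous on a compact set, the existence of $\hat{\mathcal G}$ follows from the universal approximation theorem for neural operators, e.g.\ for FNO/DeepONet on compact subsets of $C^0\times\R^3$. The theorem is applied to the continuous map $(\lambda,a,\varepsilon,\theta)\mapsto\mathcal G(\mathbf p)$, after extending it continuously by Tietze/Dugundji if needed.

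For (iii), I will revisit the proofs of Theorem~\ref{thm:nominal} and Theorem~\ref{thm:main}. The constants there are the Fredholm transform bounds, the bounds on its inverse, the target-system decay rate $a-\max\operatorname{Re}\tilde\mu_n\ge d$, and the nonlinear estimate for $uu_x$ in the target coordinates. I will check that each depends continuously on $\mathbf p$ through the same modal data, and hence is bounded uniformly on $\mathscr U_{B,d}$. The target system decays at rate at least $d$, which leaves room for losses. A perturbation $\hat h-h$ with $\|\hat h-h\|\le\epsilon^*_{B,d}$ enters the target system as a boundary term that is bounded relative to the dissipation, and it costs at most $d/2$ of the rate. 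Choosing the initial-data radius $\rho_{B,d}$ small then absorbs the nonlinear term at a further cost of $d/4$, giving the decay \eqref{eq:intro-NO-decay} with rate $d/4$. Well-posedness follows as in Appendix~\ref{app:well-posedness}.
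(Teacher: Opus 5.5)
Your proposal is correct and follows essentially the paper's route: compactness of $\mathscr U_{B,d}$ and continuity of $\mathcal G$ come from resolvent and Riesz-projection continuity of the spectral data, the $\ell^\infty$ Fredholm-alternative treatment of the $\tilde\kappa$ system, and uniform high-mode closeness to the sine basis for the tails (Theorem~\ref{thm:lipschitz}), followed by universal approximation and a uniform application of Theorem~\ref{thm:main} (the paper takes $\bar\nu=d/2$, $\omega=d/4$). One small slip is that $a+\tilde\kappa_n\to a$ does not decay, so the tail control must come from $|\tilde\chi_n'(0)|\ge d\,n\|\tilde\chi_n\|$ together with the uniform Riesz-type bounds on the high modes, which is the route you ultimately adopt.
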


The class $\mathscr U_{B,d}$ bounds $\lambda$ in $W^{2,\infty}$, one derivative above Theorem~\ref{thm:intro}, so that it is compact in $W^{1,\infty}(0,1)\times\R^3$. By Theorem~\ref{thm:intro}, every $\lambda\in W^{2,\infty}(0,1)$ admits designs in $\mathscr U_{B,d}$ for suitable $B$ and $d$. The rate is uniform over the class, and $d$ caps it, whereas in Theorem~\ref{thm:intro} it is arbitrary for a single plant. The tolerance applies to any approximate gain, whether produced by a neural operator or by a truncated modal computation. Theorem~\ref{thm:intro-NO} is proved in Section~\ref{sec:approx-stability}, from continuity of $\mathcal G$ and uniform bounds on the design constants (Theorem~\ref{thm:lipschitz}), neural-operator approximation (Theorem~\ref{thm:NO}), and stability under gain error (Theorem~\ref{thm:main}). Section~\ref{sec:numerics} illustrates both results on a plant with an unstable double eigenvalue, with a Fourier neural operator trained on a family of coefficients. We now review the related literature.

\subsection{Related work}\label{ssec:related}

\subsubsection*{A. Fredholm backstepping}
The Fredholm-based backstepping design originates in \cite{SGK2009}, where an
Euler--Bernoulli beam is stabilized at an arbitrary decay rate from displacement
and moment actuation at one end. Fredholm transformations were used again in
\cite{BribiescaKrstic2015} for first-order hyperbolic equations, where the
non-strict-feedback structure rules out a Volterra transformation and the kernel
is obtained by a fixed point on the square. Concurrently, and by a different
technique, Coron and coauthors developed Fredholm designs for the
Korteweg--de Vries equation \cite{CoronLu2014}, the Kuramoto--Sivashinsky
equation \cite{CoronLu2015}, integro-differential hyperbolic equations
\cite{CoronHuOlive2016}, a linearized bilinear Schr\"odinger equation
\cite{CoronGagnonMorancey2018} and the linearized water tank
\cite{CoronHayatXiangZhang2022}. The design relies on a transformation built from the spectral data, together with an
equivalence of the closed-loop operator with respect to a shifted copy of the plant
operator. The method has since been abstracted to general linear systems
with a Riesz basis of eigenvectors \cite{HayatLoko2026} and to the parabolic
case \cite{BoulardHayat2025}. Gagnon et al.\ \cite{GHXZ2025WaterWaves}
extended the spectral construction to skew-adjoint operators of order
greater than one through a compactness/duality argument, with an
application to linearized water waves. More recently,
\cite{GHMXZ2026Quantitative} gives explicit Fredholm constructions and
quantitative bounds on the transformation and its inverse as the
prescribed decay rate varies, for self-adjoint and skew-adjoint operators.
Here we study parameter dependence and gain approximation for a generally
non-self-adjoint pre-compensated operator. It is the design of
\cite{CoronLu2015}, in the abstract form of \cite{HayatLoko2026}, that
Section~\ref{sec:kernel} applies.

\subsubsection*{B. Boundary control of Kuramoto--Sivashinsky}
 For the KS PDE, the available set of stabilization results depend on both the anti-diffusion coefficient and the boundary conditions. Liu and Krstic \cite{LiuKrstic2001} established nonlinear
boundary stabilization for constant $\lambda<4\pi^2$. Their comparison
with the uncontrolled system uses clamped conditions $u=u_x=0$ at both
ends, which differ from the hinged conditions in \eqref{eq:plant}.
For the linear equation, Cerpa \cite{Cerpa2010} identified a countable
set of anti-diffusion parameters at which a single boundary input loses
null controllability. Two suitably chosen controls at the same end remove
this restriction \cite{Cerpa2010,CerpaGuzmanMercado2017}, with an earlier
result for $\lambda=0$ in \cite{LinGuo2002}. For the hinged configuration
considered here, Coron and L\"u \cite{CoronLu2015} constructed a Fredholm
feedback achieving local rapid stabilization with a single curvature input
when the constant coefficient satisfies \eqref{eq:CL-restriction}.
Our objective is to extend this design to spatially varying coefficients
while accommodating the repeated eigenvalues that obstruct single-input
stabilization.

\subsubsection*{C. Using pre-feedback to obtain controllability}
In finite dimensions, Heymann's Lemma \cite{Heymann1968}
states that any controllable multi-input system can be converted into a single-input controllable system using appropriate state feedback. A subsequent feedback through that input can then assign the closed-loop poles
\cite{Wonham1967,Wonham1985,Sontag1998}. Along this conceptual direction, there have been PDE designs that assign
different roles to multiple boundary inputs, as in the beam, Burgers,
and flow-control settings of \cite{KGBS2008,KMV2008,VazquezKrstic}.
For the periodic heat equation, Gagnon et al.\ \cite{GHXZ2022} proved
necessity and sufficiency of two scalar controls and constructed Fredholm
feedbacks on the odd and even subspaces, with an extension to local rapid
stabilization of the viscous Burgers equation. Our construction instead
uses one boundary feedback to establish controllability through the other
input before designing its Fredholm feedback.
Here we establish a corresponding construction for the KS equation:
feedback through the displacement input separates repeated eigenvalues
and makes every mode accessible to the curvature input. The latter
then implements the Fredholm stabilization law. Thus, the first stage
establishes the controllability needed by the second stage.

\subsubsection*{D. Numerical and neural operator gain approximation}

In PDE backstepping, the controller gain is itself obtained by solving a
kernel PDE \cite{KrsticSmyshlyaev2008}. Hence, numerical implementation of the gain has been a
significant subject in the backstepping literature, with various
approximations introduced. For example, Smyshlyaev and Krstic
\cite{SmyshlyaevKrstic2004} presented early results on approximate gains for
controlling parabolic PDEs using a finite-difference scheme for the kernel.
The gain accuracy affects closed-loop stability, so a sufficiently fine
resolution is needed to preserve stability. Subsequently, Vazquez et al.
\cite{VazquezChenQiaoKrstic2023} developed a power-series method that
represents the kernel by a truncated Taylor expansion and determines its
coefficients from the kernel equations to approximate the gain. Both
approaches compute gains for a specified plant coefficient.

Recently, \cite{BhanShiKrstic2024} introduced neural operators for learning
the mapping from plant coefficients to gains. The neural network learns
this mapping over a compact class of plant coefficients, allowing the
trained operator to be reused across the class. This direction has since
been applied to multiple PDE backstepping problems, including adaptive
control of hyperbolic \cite{LamarqueBhanShiKrstic2025} and parabolic PDEs
\cite{BhanShiKrstic2025}, adaptive freeway traffic control
\cite{LvWangZhangYu2025}, and $2\times2$ hyperbolic PDE systems
\cite{WangDiagneKrstic2025}.

A common structural restriction of these approaches is their reliance on
Volterra-type integral constructions for the gain operator. This differs
fundamentally from the Fredholm construction studied here: Lipschitz
continuity of Volterra gain operators is typically established using
successive approximations \cite{KrsticBhanShi2024}, whereas here we use a
modal representation. Establishing continuity through this representation
is therefore a necessary step toward neural-operator approximation of the
Fredholm gain.

\subsubsection*{E. Companion papers and distinctions}

The controllability and Fredholm design in
Sections~\ref{sec:control}--\ref{sec:fredholm} enable the companion papers
on stabilizing Poiseuille channel flow \cite{Krstic2026NavierStokesHeymann} and Yih's viscoisty jump between two fluids \cite{Krstic2026YihInterface}.
Compared with classical Poiseuille flow, which has at most one unstable
eigenmode per streamwise wavenumber \cite{RubanGajjarWalton2023},
the KS equation studied here must accommodate an arbitrarily large
unstable subspace and unstable double eigenvalues.
Moreover, this paper differs from \cite{Krstic2026YihInterface} in the
pre-feedback design. The two-fluid interface paper constructs pre-feedback using the plant's low mode spectral profile, allowing direct application of Heymann's Lemma to the problematic
modes \cite[Lemma~4]{Krstic2026YihInterface}.
Here, the pre-feedback is completely specified by an exponential profile with static scalar parameters, proving that this restricted family suffices for every admissible
plant coefficient and provides a consistent parameterization for continuity
and gain approximation.
Lastly, this paper differs from both companions by studying approximation
of the gain operator and establishing corresponding stability results
under the approximate gain.

\subsection{Contributions}\label{ssec:contributions}
This paper makes two principal contributions.
\begin{enumerate}
\item \emph{Rapid Fredholm stabilization of the Kuramoto–Sivashinsky equation with spatially varying anti-diffusion coefficient.} The existing Fredholm design \cite{CoronLu2015} is limited to constant anti-diffusion coefficients outside the critical set \eqref{eq:CL-restriction}, where the single controlled boundary trace loses controllability. We remove this restriction: the coefficient can be any $\lambda\in W^{1,\infty}(0,1)$. This is achieved through two control inputs. The first is a Heymann pre-feedback that makes every mode controllable. This enables the second curvature input which stabilizes the system via Fredholm backstepping. Although the Heymann--Fredholm architecture also appears in the companion Navier--Stokes and Yih designs \cite{Krstic2026NavierStokesHeymann,Krstic2026YihInterface}, its realization here is substantially more explicit: the complete controller consists of one exponential pre-feedback and one scalar Fredholm gain, stated together with its nonlinear rapid-stabilization guarantee in Theorem~\ref{thm:intro}.
\item \emph{The first neural-operator approximation of a Fredholm backstepping gain.} Existing neural-operator backstepping results approximate gains generated by Volterra transformations \cite{BhanShiKrstic2024,BhanShiKrstic2025,LamarqueBhanShiKrstic2025,LvWangZhangYu2025,WangDiagneKrstic2025}. A Fredholm gain instead depends globally on the spectral data of the plant and on an infinite coupled coefficient system, so the continuity arguments used for Volterra kernel maps do not apply. We prove continuity of the coefficient-and-design-to-gain map on compact admissible classes, obtain uniform approximation by a single neural operator, and prove that every sufficiently accurate approximate gain preserves local exponential stability uniformly across the class. This introduces neural-operator gain approximation into Fredholm backstepping and supplies the stability theorem required for its use in feedback. Neither this approximation result nor its robustness guarantee appears in the companion Fredholm designs \cite{Krstic2026NavierStokesHeymann,Krstic2026YihInterface}.
\end{enumerate}

The following technical innovations support these contributions.
\begin{enumerate}[label=(\roman*)]
\item We prove that two boundary inputs are sufficient for every spatially varying coefficient and necessary when the plant has an unstable double eigenvalue. We then show that the restricted exponential pre-feedback family suffices to split all repeated eigenvalues and make every mode controllable from the curvature input. This KS-specific construction is what gives Theorem~\ref{thm:intro} its unusually explicit form.
\item We establish continuity of the Fredholm gain despite the non-self-adjoint pre-compensated operator, its parameter-dependent domain, and eigenvalue collisions in the original plant. The proof combines resolvent and spectral-projection continuity with uniform control of the infinite coefficient system and the modal tails.
\item We prove nonlinear closed-loop robustness to $L^2$ error in the Fredholm gain, uniformly over compact admissible design classes. This estimate is the step that converts neural-operator approximation of the gain into a uniform stabilization result.
\end{enumerate}

\subsection{Organization of the paper}
Sections~\ref{sec:control}--\ref{sec:fredholm} establish controllability,
construct the feedback, and prove its existence and invertibility.
Section~\ref{sec:continuity} establishes continuity of the gain and kernel
and neural-operator approximation. Section~\ref{sec:target}
establishes stability under exact and approximate gains, followed by numerical results in Section~\ref{sec:numerics}
and conclusions in Section~\ref{sec:conclusion}.
Table~\ref{tab:main-results} summarizes the main results.
The principal proofs accompany their results in the main text.
The appendices contain the kernel-equation verification, Sobolev bounds
for the modal coordinate map, and closed-loop well-posedness arguments.

\begin{table}[htbp]
\caption{Main results and their roles in the design.}\label{tab:main-results}
\centering\normalsize
\renewcommand{\arraystretch}{1.15}
\begin{tabularx}{\textwidth}{@{}p{0.29\textwidth}Xp{0.11\textwidth}@{}}
\toprule
Result & Statement & Section\\
\midrule
Theorem~\ref{thm:control}, Proposition~\ref{prop:obstruction} & Two inputs suffice for every coefficient. A nonnegative double eigenvalue obstructs stabilization through either input alone. & \ref{sec:control}\\
Theorem~\ref{thm:generic} & Pre-feedback ensures simple spectrum and controllability through the curvature input. & \ref{sec:existence}\\
Theorem~\ref{thm:kernel} & The kernel and gain exist, with explicit modal formulas and an invertible Fredholm transformation. & \ref{sec:fredholm}\\
Theorem~\ref{thm:lipschitz} & The kernel and gain depend continuously on a compact admissible design class, with uniformly bounded design and stability constants. & \ref{sec:continuity}\\
Theorem~\ref{thm:NO} & One neural operator approximates the gain uniformly to any prescribed accuracy. & \ref{sec:neural-approximation}\\
Theorem~\ref{thm:intro}, Theorem~\ref{thm:nominal} & Exact feedback achieves local exponential stability at any prescribed decay rate. & \ref{sec:target}\\
Theorems~\ref{thm:intro} and~\ref{thm:intro-NO}, Theorem~\ref{thm:main} & Accurate gains preserve local exponential stability, with uniform guarantees for one neural operator across the class. & \ref{sec:approx-stability}\\
\bottomrule
\end{tabularx}
\end{table}

\subsection{Notation}\label{ssec:notation}
$\LL$ is the real Lebesgue space, complexified for spectral arguments,
with norm $\|\cdot\|$ and inner product
$\ip{f}{g}=\int_0^1 f\bar g\dd x$, linear in $f$.
The bracket also denotes the induced duality pairing. States and implemented
gains are real, with feedback functionals extended complex linearly.
$H^s(0,1)$ and $W^{k,\infty}(0,1)$ are Sobolev spaces.

For Banach spaces $X,Y$, $\mathcal L(X,Y)$ denotes bounded linear
operators, with $\mathcal L(X)=\mathcal L(X,X)$ and space subscripts on
norms as needed. $D(A)$, $\sigma(A)$, and $A^*$ denote domain, spectrum,
and adjoint, with the adjoint taken in $L^2$ unless specified otherwise.
We use $\Z^+=\{1,2,\ldots\}$ and the square-summable sequence space
$\ell^2$. The weighted space $\ell^2_s$, $s\in\R$, has norm
$\|z\|_{\ell^2_s}^2=\sum_{n\ge1}n^{2s}|z_n|^2$.
$\operatorname{Re}z$, $\bar z$, $\delta_{mn}$, and $\delta$ denote real
part, complex conjugate, Kronecker symbol, and Dirac distribution.
Positive constants $c,C$ may vary between estimates and are independent
of the estimated indices and variables. The notation $C(B,d)$ indicates
dependence only on $B,d$. Other permitted parameter dependencies are stated locally.

For an admissible design, let $\{\chi_n\},\{\tilde\chi_n\}$ be the
biorthogonal Riesz bases of Theorem~\ref{thm:generic}(ii) and write
$v_n=\ip{v}{\tilde\chi_n}$. The modal inner product and spectral norms are
\begin{align}
\hp{v}{w}:=\sum_{n\ge1}v_n\overline{w_n},\qquad
\|w\|_{\mathcal H^s}^2:=\sum_{n\ge1}n^{2s}|w_n|^2.
\label{eq:modal-inner-product}
\end{align}
For $s\ge0$, $\mathcal H^s$ consists of the functions in $\LL$ with finite
spectral norm. For $s<0$, it is the completion of finite sums of $\chi_n$
in that norm. We write $\mathcal H=\mathcal H^0$, which is $\LL$ with
the modal inner product, and distinguish these spectral spaces from the
Sobolev spaces $H^s(0,1)$.

\section{Two inputs: necessary and sufficient}\label{sec:control}

We will begin by establishing that two inputs are not only sufficient, but necessary for controlling \eqref{eq:plant}. We first establish that its operator is self adjoint, and then present the spectrum of \eqref{eq:plant} which leads to the controllability result. 

\subsection{Plant operator}

Fix a real coefficient $\lambda\in W^{1,\infty}(0,1)$. Define the plant
operator $\Aop$ by
\begin{subequations}\label{eq:A}
\begin{align}
\Aop\phi:=&\, -\phi''''-(\lambda\phi')'\,, \label{eq:abstract-A}
\end{align}
with domain
\begin{align}
    D(\Aop):=&\, \{\phi\in H^4(0,1):\ \phi(0)=\phi(1)=0,\ \phi''(0)=\phi''(1)=0\}. \label{eq:A-dom}
\end{align}
\end{subequations}
For any test function
$\phi\in D(\Aop)$, the plant \eqref{eq:plant} can be written in the abstract form
\begin{subequations}\label{eq:abstract}
\begin{align}
\frac{d}{dt}\ip{u}{\phi}&=\ip{u}{\Aop\phi}+U_1\ip{B_1}{\phi}+U_2\ip{B_2}{\phi}
   -\ip{uu_x}{\phi}\,,\label{eq:abstract-plant}\\
\ip{B_1}{\phi}&=-\phi'''(0)-\lambda(0)\phi'(0)\,,\label{eq:b1}\\
\ip{B_2}{\phi}&=-\phi'(0)\,.\label{eq:b2}
\end{align}
\end{subequations}

First, we will show that $\Aop$ is self-adjoint with compact resolvent 
so it has a real
orthonormal basis of eigenfunctions $\{\varphi_n\}$, with eigenvalues
$\mu_1\ge\mu_2\ge\cdots\to-\infty$ repeated according to multiplicity. For
constant $\lambda$ the eigenpairs are known in closed form
(see \cite{CoronLu2015}):
\begin{align}
\varphi_n(x)=\sqrt2\sin(n\pi x),\qquad \mu_n=-n^4\pi^4+\lambda n^2\pi^2 .\label{eq:sine}
\end{align}
Let $\Aop$ be split into its spatial derivative and anti-diffusion components as
\begin{align}
\Aop=A_0+P,\qquad A_0\phi:=-\phi'''',\qquad P\phi:=-(\lambda\phi')'\,.\label{eq:A0P}
\end{align}
where we ensure $A_0$ has the
sine basis: $A_0\big(\sqrt2\sin n\pi x\big)=-n^4\pi^4\sqrt2\sin n\pi x$. 
Moreover, by integration by parts, one can show the perturbation $P$ is symmetric on $D(\Aop)$ and that $P$ is small relative to $A_0$:
\begin{align}
\|\phi'\|^2\le\|\phi''\|\,\|\phi\|,\qquad
\|\phi''\|^2\le\|\phi''''\|\,\|\phi\|=\|A_0\phi\|\,\|\phi\|.
\label{eq:phi-bounds}
\end{align}
These bounds and Young's inequality \cite[App.~B.2]{Evans} give, for every $\epsilon>0$,
\begin{align}
\|P\phi\|\le\epsilon\|A_0\phi\|+C_\epsilon\|\phi\|,\quad
C_\epsilon:=\frac{M^2}{4\epsilon}+\frac{M}{2},\quad
M:=\|\lambda\|_{L^\infty}+\|\lambda'\|_{L^\infty}.
\label{eq:relbound}
\end{align}
Thus the Kato--Rellich Theorem \cite[Thm.~V.4.3]{Kato} guarantees that $\Aop$
is self-adjoint on $D(\Aop)$. Its spectrum is bounded above and its resolvent is compact because
$D(\Aop)\hookrightarrow\LL$ compactly. 

To showcase the effect of the inputs, we compute the following identity via integration by parts
\begin{lemma}(Integration by parts identity) \label{lem:ibp}
Let $\psi\in H^4(0,1)$ satisfy $\psi(1)=\psi''(1)=0$ and let $\phi\in D(\Aop)$. Define 
\begin{align}
\qd\phi:=\phi'''+\lambda\phi'\,.\label{eq:quasi}
\end{align}
The following boundary identity holds:
\begin{align}\label{eq:ibp}
\int_0^1\big(\psi''''+(\lambda\psi')'\big)\phi\dd x
={}&\psi''(0)\,\phi'(0)+\psi(0)\,\qd\phi(0)
-\int_0^1\psi\,\Aop\phi\dd x .
\end{align}
\end{lemma}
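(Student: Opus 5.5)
Integrate by parts twice on each of the two terms on the left, moving all derivatives from $\psi$ onto $\phi$. Then collect the boundary contributions at $x=0$ and $x=1$ and use the boundary conditions $\psi(1)=\psi''(1)=0$ together with $\phi\in D(\Aop)$, i.e.\ $\phi(0)=\phi(1)=\phi''(0)=\phi''(1)=0$. Since $\psi\in H^4(0,1)$ and $\phi\in H^4(0,1)$, all traces up to third order exist. The product $\lambda\phi'$ lies in $W^{1,1}$, with $(\lambda\phi')'\in L^2$ because $\lambda\in W^{1,\infty}$. So every integration by parts is justified in the absolutely continuous sense.

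\emph{Fourth-order term.} Four integrations by parts give
\begin{align*}
\int_0^1\psi''''\phi\dd x=\Big[\psi'''\phi-\psi''\phi'+\psi'\phi''-\psi\phi'''\Big]_0^1+\int_0^1\psi\,\phi''''\dd x .
\end{align*}
The terms $\psi'''\phi$ and $\psi'\phi''$ vanish at both ends because $\phi$ and $\phi''$ vanish there. At $x=1$, the terms $\psi''\phi'$ and $\psi\phi'''$ vanish by $\psi(1)=\psi''(1)=0$. What remains is $\psi''(0)\phi'(0)+\psi(0)\phi'''(0)$.

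\emph{Anti-diffusion term.} Integrating by parts twice gives
\begin{align*}
\int_0^1(\lambda\psi')'\phi\dd x=\Big[\lambda\psi'\phi-\lambda\phi'\psi\Big]_0^1+\int_0^1\psi\,(\lambda\phi')'\dd x .
\end{align*}
The first bracket term vanishes since $\phi(0)=\phi(1)=0$. The second contributes $+\lambda(0)\phi'(0)\psi(0)$, because its value at $x=1$ vanishes by $\psi(1)=0$.

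\emph{Collecting terms.} Adding the two results, the interior integrals combine to $\int_0^1\psi\big(\phi''''+(\lambda\phi')'\big)\dd x=-\int_0^1\psi\,\Aop\phi\dd x$. The boundary terms combine to
\begin{align*}
\psi''(0)\phi'(0)+\psi(0)\big(\phi'''(0)+\lambda(0)\phi'(0)\big)=\psi''(0)\phi'(0)+\psi(0)\,\qd\phi(0).
\end{align*}
This is \eqref{eq:ibp}.

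The only delicate point is the sign bookkeeping in the fourfold integration by parts, together with checking that the traces of $\lambda\phi'$ make sense. The latter follows from $\lambda\in W^{1,\infty}\subset C[0,1]$, so no real obstacle is expected.
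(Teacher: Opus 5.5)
Your proposal is correct and takes the same approach as the paper, which states the identity as a direct consequence of integration by parts without writing out the details. Your two integration-by-parts computations supply that bookkeeping with the correct signs: the boundary terms at $x=1$ vanish by $\psi(1)=\psi''(1)=0$ together with $\phi\in D(A_\lambda)$, and the terms at $x=0$ assemble into $\psi''(0)\phi'(0)+\psi(0)\,\phi^{[3]}(0)$.
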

 Applied with $\psi=u(\cdot,t)$, \eqref{eq:ibp} produces exactly the two boundary
terms of \eqref{eq:b1}--\eqref{eq:b2}. 

Lastly, throughout thsi work, we will use the modal state and input coefficients repeatedly. Hence, we define the shorthand notation:
\begin{align}
u_n:=\ip{u(\cdot,t)}{\varphi_n},\quad
\alpha_n:=-\ip{B_1}{\varphi_n}=\qd{\varphi_n}(0),\quad
\beta_n:=-\ip{B_2}{\varphi_n}=\varphi_n'(0).
\label{eq:modal-coefficients}
\end{align}
Then \eqref{eq:abstract} gives
$\dot u_n=\mu_n u_n-\alpha_n U_1-\beta_n U_2-\ip{uu_x}{\varphi_n}$.

\subsection{Plant operator has finitely many unstable eigenvalues}\label{ssec:spectrum-plant}

In this section, we showcase that the corresponding eigenvalue structure associated with the 
operator $A_\lambda$ yields only \emph{finitely} many unstable eigenvalues which are widely separated. 
To do so, we consider the upper and lower bounds of $\lambda$ and invoke the min-max principle. 

First, consider a constant $\bar{\lambda} = \|\lambda \|_{L^\infty}$ and correspondingly define the operator
$A_{\bar{\lambda}} = -\phi'''' - \bar{\lambda}\phi''$ for $\phi \in D(A_\lambda)$. Then, since the eigenpairs are 
known in closed form for constant $\lambda$ as in \eqref{eq:sine}, we have that 
\begin{align}
\ip{A_{-\bar{ \lambda}}\phi}{\phi}\le\ip{\Aop\phi}{\phi}\le\ip{A_{ \bar \lambda}\phi}{\phi}
\qquad\text{for all }\phi\in D(\Aop).\label{eq:sandwich}
\end{align}
The operators $A\in\{A_{-\bar\lambda},A_\lambda,A_{\bar\lambda}\}$ are
self-adjoint and bounded above, with compact resolvent on $D(\Aop)$, so
their decreasingly ordered eigenvalues satisfy the min--max formula
\begin{align}
\mu_n(A)=\max_{\substack{S\subset D(\Aop)\\ \dim S=n}}\ \
\min_{\substack{\phi \in S\\ \phi \neq 0 }}\ \frac{\ip{A\phi}{\phi}}{\|\phi\|^2} .\label{eq:minmax}
\end{align}
Applying \eqref{eq:minmax} to both the span of the first $n$ sine modes and the
orthogonal complement of the first $n-1$ modes, along with the fact that
$-k^4\pi^4+\bar\lambda k^2\pi^2$ decreases for
$k^2\ge\bar\lambda/(2\pi^2)$, gives
\begin{align}
\big|\mu_n+n^4\pi^4\big|\le\|\lambda\|_{L^\infty}\,n^2\pi^2,\qquad n\ge\max\left\{1,\left\lceil\sqrt{\frac{\|\lambda\|_{L^\infty}}{2\pi^2}}\right\rceil\right\}.\label{eq:mu-asymp}
\end{align}
Two important caveats are worth mentioning regarding \eqref{eq:mu-asymp}. First, $n^4$ grows faster than $n^2$, so only
finitely many $\mu_n$ are nonnegative. This implies that, no matter how large the anti-diffusion, the
instability is finite dimensional. Second, the gaps $|\mu_m-\mu_n|$ grow like
$n^3$, so high modes are widely separated and only finitely many eigenvalues
can be repeated. Now, we show that the boundary control with a single input, cannot
compensate for eigenvalues with multiplicity two, and hence a second input is not only sufficient, but
necessary. 

\subsection{Two inputs are necessary and sufficient}

For each distinct eigenvalue $\mu$ of $\Aop$, define its \emph{eigenspace}
\begin{align}
E_\mu:=\{\phi\in D(\Aop):\ \Aop\phi=\mu\phi\}.
\end{align}
Additionally, let $P_\mu$ denote the orthogonal projection in $\LL$ onto $E_\mu$:
\begin{align}
P_\mu v=\sum_{n:\,\mu_n=\mu}\ip{v}{\varphi_n}\varphi_n,
\qquad v\in\LL.
\end{align}
For the linearization of \eqref{eq:plant}, an unstable double eigenvalue
requires two inputs for stabilization. With zero initial state and
$(U_1,U_2)\in C_c^\infty((0,T_c);\R^2)$, variation of constants gives
\begin{align}
u(T_c)=\int_0^{T_c}e^{(T_c-s)\Aop}\big(B_1U_1(s)+B_2U_2(s)\big)\dd s .\label{eq:reachable}
\end{align}

\begin{theorem}[Two inputs suffice for every $\lambda$]\label{thm:control}
Let $\lambda\in W^{1,\infty}(0,1)$ and $T_c>0$. For the linearization of \eqref{eq:plant} with the two boundary inputs $U_1=u(0,\cdot)$ and $U_2=u_{xx}(0,\cdot)$, started from $u(\cdot,0)=0$, the set of terminal states \eqref{eq:reachable} reached by controls $(U_1,U_2)\in C_c^\infty((0,T_c);\R^2)$ is dense in $\LL$.
\end{theorem}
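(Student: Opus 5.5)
Approximate controllability is equivalent, by duality, to the following unique continuation property. Suppose $v\in\LL$ satisfies $\ip{u(T_c)}{v}=0$ for every state \eqref{eq:reachable} reached from the controls. Then $v=0$. Pairing \eqref{eq:reachable} with $v$ gives
\begin{align*}
\ip{u(T_c)}{v}=\int_0^{T_c}\Big(U_1(s)\ip{B_1}{e^{(T_c-s)\Aop}v}+U_2(s)\ip{B_2}{e^{(T_c-s)\Aop}v}\Big)\dd s .
\end{align*}
Here we use that $\Aop$ is self-adjoint, so the adjoint semigroup is $e^{t\Aop}$. The pairings are understood through \eqref{eq:b1}--\eqref{eq:b2}, which is legitimate for $t>0$ because the analytic semigroup maps $\LL$ into $D(\Aop^k)$ for every $k$. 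Since $U_1,U_2\in C_c^\infty((0,T_c))$ are arbitrary, the two functions $t\mapsto \qd{(e^{t\Aop}v)}(0)$ and $t\mapsto (e^{t\Aop}v)'(0)$ must vanish on $(0,T_c)$.

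We then expand $v$ in the orthonormal eigenbasis $\{\varphi_n\}$ and group the modes by distinct eigenvalue $\mu$ using the projections $P_\mu$. This gives the Dirichlet series
\begin{align*}
\sum_{\mu}e^{\mu t}\,\qd{(P_\mu v)}(0)=0,\qquad \sum_{\mu}e^{\mu t}\,(P_\mu v)'(0)=0,\qquad t\in(0,T_c).
\end{align*}
Convergence holds for $t>0$ by \eqref{eq:mu-asymp}, which also bounds the boundary traces polynomially in $n$ via $\|\varphi_n\|_{H^4}\lesssim 1+|\mu_n|$. Both sums are analytic in $t$ on $\operatorname{Re}t>0$, so they vanish for all $t>0$. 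We multiply by $e^{-\mu_{\max}t}$ and let $t\to\infty$, peeling off the distinct eigenvalues one at a time in decreasing order. Dominated convergence, justified by the $n^3$ gap growth, shows that each coefficient vanishes:
\begin{align*}
\qd{(P_\mu v)}(0)=0\quad\text{and}\quad (P_\mu v)'(0)=0\qquad\text{for every distinct }\mu .
\end{align*}

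The key remaining step, and the main obstacle, is to show that a nonzero eigenfunction $\phi\in E_\mu$ cannot have both $\phi'(0)=0$ and $\qd\phi(0)=0$. The point is that the simultaneous trace map then forces $P_\mu v=0$ even when $\dim E_\mu=2$. We handle this through the ODE. Every $\phi\in E_\mu$ solves the fourth-order linear equation $\phi''''+(\lambda\phi')'+\mu\phi=0$. In first-order form with the quasi-derivative vector $(\phi,\phi',\phi'',\qd\phi)$, the coefficients lie in $L^\infty$ because $\lambda\in W^{1,\infty}$; the system is $\phi'''=\qd\phi-\lambda\phi'$, $(\qd\phi)'=-\mu\phi$. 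Carathéodory uniqueness for this system then applies. If $\phi\in E_\mu$ satisfies $\phi'(0)=\qd\phi(0)=0$, then the boundary conditions $\phi(0)=\phi''(0)=0$ make the entire vector vanish at $x=0$, so $\phi\equiv0$. Hence the linear map $E_\mu\to\C^2$, $\phi\mapsto(\phi'(0),\qd\phi(0))$, is injective. In particular $\dim E_\mu\le 2$, and each $P_\mu v$ with both traces zero must be $0$.

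This yields $P_\mu v=0$ for all $\mu$, so $v=0$ by completeness of $\{\varphi_n\}$, which proves density. The technical points to check carefully are two. First, the trace estimates must make the Dirichlet series convergent and termwise limits valid. Second, Lemma~\ref{lem:ibp} must indeed identify the duality pairings with $\phi'(0)$ and $\qd\phi(0)$ for the smooth-in-time semigroup orbits. Both follow from \eqref{eq:mu-asymp} and the spectral gap growth.
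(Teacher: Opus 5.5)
Your proposal is correct and follows the paper's proof essentially step for step. Both use duality via self-adjointness, then the vanishing of both boundary traces of $e^{t\Aop}v$ on $(0,T_c)$, then the Dirichlet-series peeling over distinct eigenvalues grouped by $P_\mu$, and finally ODE uniqueness for the quasi-derivative system to force $P_\mu v=0$. One small correction: the dominated-convergence step needs only the summability supplied by the quartic decay \eqref{eq:mu-asymp} together with the polynomial trace bounds $C(1+|\mu|)\|P_\mu v\|$, not the $n^3$ spectral gaps.
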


\begin{proof}
By duality, it suffices to show that an adjoint solution with both boundary observations identically zero must have zero initial state.
Let $\mathcal R(T_c)$ be the linear space of terminal states in
\eqref{eq:reachable}. Take $z_0\in\mathcal R(T_c)^\perp$ and set
$z(t)=e^{t\Aop}z_0$. Self-adjointness and \eqref{eq:reachable} give
\begin{align}
0=\ip{u(T_c)}{z_0}
&=\;\int_0^{T_c}\big[U_1(s)\ip{B_1}{z(T_c-s)}
+U_2(s)\ip{B_2}{z(T_c-s)}\big]\dd s.\label{eq:boundary-linear-proof}
\end{align}
For $t>0$, parabolic smoothing gives $z(t)\in D(\Aop)$, so both boundary
observations are well defined and continuous. Since the controls are
arbitrary, \eqref{eq:b1}--\eqref{eq:b2} imply
\begin{align}
z_x(0,t)&=\;\qd z(0,t)=0,\qquad 0<t<T_c.\label{eq:obs}
\end{align}

To recover the traces of each spectral projection from \eqref{eq:obs},
we first bound their growth with the eigenvalue. The trace inequality
\cite[Sec.~5.5, Thm.~1]{Evans} and the elliptic estimate give,
for $\phi\in D(\Aop)$,
\begin{align}
|\phi'(0)|+|\qd\phi(0)|&\le\;C_\lambda\|\phi\|_{H^4},\label{eq:control-trace}\\
\|\phi\|_{H^4}&\le\;C_\lambda\big(\|\Aop\phi\|+\|\phi\|\big),
\label{eq:control-elliptic}
\end{align}
with $C_\lambda$ depending only on $\|\lambda\|_{W^{1,\infty}}$.
For the second estimate, the sine expansion gives
$\|\phi\|_{H^4}\le C\|A_0\phi\|$. Writing $A_0=\Aop-P$ and
using \eqref{eq:relbound} with $\epsilon=1/2$ absorbs the perturbation
term and proves \eqref{eq:control-elliptic}.
Applying these estimates to $P_\mu z_0$ yields, after enlarging $C_\lambda$,
\begin{align}
|(P_\mu z_0)'(0)|+|\qd{(P_\mu z_0)}(0)|&\le\;C_\lambda(1+|\mu|)\|P_\mu z_0\|.\label{eq:app-b-display-016}
\end{align}

Now expand $z(t)=\sum_\mu e^{\mu t}P_\mu z_0$ over the distinct
eigenvalues. Repeated eigenvalues are grouped in a single projection,
so no simplicity assumption is needed. Equation~\eqref{eq:obs} becomes
\begin{align}
\sum_\mu e^{\mu t}(P_\mu z_0)'(0)=0,\qquad
\sum_\mu e^{\mu t}\qd{(P_\mu z_0)}(0)=0,
\quad 0<t<T_c.\label{eq:boundary-series}
\end{align}
The coefficient bound \eqref{eq:app-b-display-016} and quartic spectral
decay \eqref{eq:mu-asymp} give locally uniform convergence for
$\operatorname{Re}t>0$. These series are therefore analytic, and their
vanishing on $(0,T_c)$ extends to all $t>0$.
Dividing by $e^{\mu^\star t}$ for the largest eigenvalue $\mu^\star$
and letting $t\to\infty$ isolates the two traces of $P_{\mu^\star}z_0$.
Dominated convergence shows that both vanish. Removing this term and
repeating gives $(P_\mu z_0)'(0)=\qd{(P_\mu z_0)}(0)=0$ for every $\mu$.

Each $\phi=P_\mu z_0$ thus has four zero Cauchy data at the left endpoint:
the hinged conditions give $\phi(0)=\phi''(0)=0$, and the two observations
give $\phi'(0)=\qd\phi(0)=0$.
Writing $y=(\phi,\phi',\phi'',\qd\phi)^{\mathsf T}$ turns its
eigenvalue equation into
\begin{align}
y_1'&=\;y_2,\qquad y_2'=y_3,\qquad
y_3'=y_4-\lambda y_2,\qquad y_4'=-\mu y_1,
\qquad y(0)=0.\label{eq:app-b-display-002}
\end{align}
Uniqueness for this linear system with $L^\infty$ coefficients gives
$y\equiv0$. Hence every spectral projection $P_\mu z_0$ vanishes,
so $z_0=0$ and $\overline{\mathcal R(T_c)}=\LL$.
\end{proof}

\begin{proposition}[One input does not suffice]\label{prop:obstruction}
Let $\lambda\in W^{1,\infty}(0,1)$ and let $\mu$ be a double eigenvalue of
$\Aop$. Consider the linearization of \eqref{eq:plant} actuated through the curvature input only,
$U_1\equiv0$, with any linear feedback $U_2=F(u)$ where $F$ is defined on a subspace of
$\LL$ containing $E_\mu$. Then:
\begin{enumerate}[label=(\roman*)]
\item $\mu$ is an eigenvalue of the closed loop: there is
$\psi\in E_\mu\setminus\{0\}$ with $u(t)=e^{\mu t}\psi$ solving the closed-loop
equations. Thus, no such feedback yields a decay rate larger than $-\mu$, and
none is stabilizing if $\mu\ge0$. The same holds with $U_1$ and $U_2$
exchanged.
\item For constant $\lambda$, $\Aop$ has a double eigenvalue exactly when
$\lambda=(i^2+j^2)\pi^2\in\mathscr N_\lambda$, and the corresponding double
eigenvalue is $\mu=i^2j^2\pi^4>0$. For every $\lambda\in\mathscr N_\lambda$ the
curvature input alone cannot stabilize the linearized plant.
\end{enumerate}
\end{proposition}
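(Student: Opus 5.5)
For part (i), I would use the fact that $E_\mu$ is two-dimensional, so the trace functional $\phi\mapsto\phi'(0)=-\ip{B_2}{\phi}$ restricted to $E_\mu$ has a nontrivial kernel. Choose $\psi\in E_\mu\setminus\{0\}$ with $\psi'(0)=0$; $\psi$ can be taken real because $\Aop$ is real. Then set $u(t)=e^{\mu t}\psi$. Since $\psi\in D(\Aop)$, $u$ satisfies $u_t=\Aop u$ together with $u(1)=u_{xx}(1)=0$, $u(0)=0$ and $u_{xx}(0)=0$. It therefore solves the linearized plant with $U_1\equiv0$ and $U_2\equiv0$.

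To obtain a closed-loop solution for an arbitrary feedback $F$, I would instead work in the weak form \eqref{eq:abstract}. For every test function $\phi\in D(\Aop)$ we have
\[
\frac{d}{dt}\ip{u}{\phi}=\ip{u}{\Aop\phi}+F(u)\ip{B_2}{\phi}.
\]
The key point is that the input term must vanish on this trajectory. Take the test function $\phi=\psi$ in the modal equation, or more directly use the identity of Lemma~\ref{lem:ibp}. An exact solution requires $u_{xx}(0,t)=F(u(t))$, but $u_{xx}(0,t)=e^{\mu t}\psi''(0)=0$, so I need $F(\psi)=0$ rather than $\psi'(0)=0$.

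So I would reverse the choice and pick $\psi\in E_\mu$ with $F(\psi)=0$. This is possible because $F$ is linear on $E_\mu$, which has dimension two. Every element of $E_\mu$ already satisfies $\psi(0)=\psi''(0)=0$ by the domain \eqref{eq:A-dom}. Hence $u=e^{\mu t}\psi$ meets every boundary condition, including $u_{xx}(0)=F(u)=0$, and satisfies the PDE. The decay rate is then bounded by $-\mu$, and there is no stabilization if $\mu\ge0$. Exchanging the roles of the inputs is handled identically, by choosing $\psi$ with $G(\psi)=0$ for the feedback $U_1=G(u)$.

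For part (ii), the eigenvalues from \eqref{eq:sine} are $\mu_n=-n^4\pi^4+\lambda n^2\pi^2$, and each is simple within its sine mode. Setting $\mu_i=\mu_j$ with $i\ne j$ and dividing by $(i^2-j^2)\pi^2$ gives $\lambda=(i^2+j^2)\pi^2$. Substituting back gives
\[
\mu=-i^4\pi^4+(i^2+j^2)i^2\pi^4=i^2j^2\pi^4>0.
\]
No triple coincidence can occur, because $x\mapsto -x^2+\lambda x$ is a quadratic and takes each value at most twice. Part (i) then gives non-stabilizability for every $\lambda\in\mathscr N_\lambda$.

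The only delicate step is interpreting "solving the closed-loop equations" when $F$ is unbounded. Since $F$ is only required to be defined on $E_\mu$, the trajectory stays in $E_\mu\subset D(\Aop)$, so the classical sense suffices and no well-posedness theory is needed.
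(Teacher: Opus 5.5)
Your final argument is the paper's proof. You pick $0\ne\psi\in E_\mu\cap\ker F$, which is possible because $\dim E_\mu=2$, so that $\psi(0)=\psi''(0)=0$ and $F(\psi)=0$ make $u(t)=e^{\mu t}\psi$ a closed-loop solution; for constant $\lambda$ you solve $\mu_i=\mu_j$ to get $\lambda=(i^2+j^2)\pi^2$ and $\mu=i^2j^2\pi^4$. Your initial choice $\psi'(0)=0$ only gives a solution with $U_2\equiv0$, so you were right to replace it, and your remark that $-x^2+\lambda x$ takes each value at most twice (so the eigenvalue is exactly double) is a detail the paper's proof leaves implicit.
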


\begin{proof}
We first prove (i). Since $\dim E_\mu=2$ and $F$ is scalar, choose
$0\ne\psi\in E_\mu\cap\ker F$. Its hinged boundary conditions and
$F(\psi)=0$ satisfy the feedback boundary conditions, so
$u(t)=e^{\mu t}\psi$ remains a closed-loop solution. Thus no decay rate
exceeding $-\mu$ is possible, and $\mu\ge0$ precludes exponential
stability. The argument similarly applies with the inputs interchanged,
completing (i).

For (ii), take constant $\lambda$. Equality of the eigenvalues of distinct sine
modes $i\ne j$ gives $\lambda=(i^2+j^2)\pi^2$ and the positive repeated
eigenvalue $\mu=i^2j^2\pi^4$. Applying (i) proves (ii).
\end{proof}

Give two inputs are needed, we now showcase how to design the feedback law.

\section{Controller design}\label{sec:kernel}

Figure~\ref{fig:design-overview} summarizes the two-stage design. The displacement feedback's job is to enable controllability by splitting repeated eigenvalues
into simple ones while
the curvature input applies the one-input design of  \cite{CoronLu2015}, in the abstract form of \cite{HayatLoko2026}, to
the pre-compensated plant. In finite dimensions, this is Heymann's Lemma
\cite{Heymann1968,Wonham1967}.

\begin{figure}[htbp]
\centering
\resizebox{\textwidth}{!}{\input{diagrams/part1-overview}}
\caption{Two feedbacks with distinct roles. (a) Displacement pre-feedback
(blue) makes every mode controllable from the curvature input. The
Fredholm design then constructs the stabilizing gain $h$ and transformation
$T=I-K$. (b) Both feedback laws act simultaneously on the same plant state,
with $g_\theta(y)=e^{-\theta y}$. The shift $a$ sets the achievable decay rate.}
\label{fig:design-overview}
\end{figure}

\subsection{Pre-feedback and pre-compensated operator}
We start with the design of $U_1$ which we call the pre-feedback as it enables stabilization,
and does nothing more. The displacement input 
is given by 
\begin{align}
U_1(t)=\varepsilon\ip{g_\theta}{u(\cdot,t)}
=\varepsilon\int_0^1g_\theta(y)u(y,t)\dd y,\qquad
 g_\theta(y):=e^{-\theta y},\qquad y\in[0,1],\label{eq:U1}
\end{align}
where $\theta>0$ and $\varepsilon\neq0$ control the spatial weighting and magnitude of the displacement pre-feedback, respectively.
In particular, for $\theta$ outside a discrete set and all small
$\varepsilon\ne0$ it splits every double eigenvalue of $\Aop$ and makes every
resulting mode reachable from the curvature input, while leaving the
high-frequency structure of $\Aop$ untouched. With the input $U_1$ applied, the \emph{pre-compensated
operator} then becomes
\begin{align}
D(\Ap)=\{\phi\in H^4:\ \phi(0)=\varepsilon\ip{g_\theta}{\phi},\
\phi''(0)=\phi(1)=\phi''(1)=0\},\quad
\Ap\phi=-\phi''''-(\lambda\phi')'.\label{eq:Aprime}
\end{align}
where $\Ap=\Aop+\varepsilon B_1\ip{g_\theta}{\cdot}$, plays the role of
$A+b_1k_1^{\mathsf T}$ in Heymann's Lemma. For $\varepsilon\ne0$, $\Ap$ is not self-adjoint. 
Nevertheless, under the parameter conditions established below, its eigenvectors form a Riesz basis of $\LL$ (cf.~Theorem~\ref{thm:generic}(ii)), allowing stabilization to be shown.

\subsection{Transformation, target, kernel equations}
We now give the stabilizing feedback $U_2$. To do so, we begin by introducing the Fredholm transformation:
\begin{align}
w(x,t)&=u(x,t)-\int_0^1k(x,y)u(y,t)\dd y=:(Tu)(x,t),\qquad T:=I-K,\label{eq:transform}\\
(Ku)(x)&:=\int_0^1k(x,y)u(y)\dd y,\label{eq:kernel-operator}
\end{align}
and the curvature feedback
\begin{align}
U_2(t)=\ip{h}{u(\cdot,t)}=\int_0^1h(y)u(y,t)\dd y .\label{eq:U2}
\end{align}
where $(T, h)$ are to be designed. The kernel $k(x,y)$ defines the transformation throughout the spatial domain,
whereas $h(y)$ is the gain used at the controlled boundary. For smooth $k$,
these are related by $
h(y)=k_{xx}(0,y)$.
However, $k$ is not necessarily globally smooth as we will see below. 

Given this design, consider the following target system with parameter $a > 0$ 
\begin{subequations}\label{eq:target}
\begin{align}
w_t+w_{xxxx}+(\lambda w_x)_x+a\,w&=-(I-K)(uu_x),\label{eq:target-a}\\
w(0,t)&=\varepsilon\ip{g_\theta}{w},\label{eq:target-b}\\
w_{xx}(0,t)=w(1,t)=w_{xx}(1,t)&=0,\label{eq:target-c}
\end{align}
\end{subequations}

To ensure the corresponding Fredholm transform \eqref{eq:transform} maps the plant \eqref{eq:plant} into the target system \eqref{eq:target}, we design $(k,h)$ through the operator conditions
\begin{align}
T\big(\Ap+B_2\ip{h}{\cdot}\big)=(\Ap-a)T,\qquad TB_2=B_2,
\label{eq:Fequiv-intro}
\end{align}
where $T=I-K$ and $K$ has kernel $k$.
To obtain the first identity, substitute the feedback \eqref{eq:U2}
into the pre-compensated plant and apply the time-independent transformation
$w=Tu$:
\begin{align}
w_t&=T\big(\Ap+B_2\ip{h}{\cdot}\big)u-T(uu_x).
\end{align}
The target equation requires $w_t=(\Ap-a)Tu-T(uu_x)$, so matching
the linear terms gives the first identity in \eqref{eq:Fequiv-intro}.
We additionally impose the normalization $TB_2=B_2$, which preserves
the curvature input direction.
The second identity is a design condition, rather than a consequence of
the first. Integrating \eqref{eq:Fequiv-intro} by parts where traces exist gives the following kernel equations.

\begin{proposition}[Kernel equations map plant to target]\label{prop:kernel}
Let $k\in C^4([0,1]^2\setminus\{x=y\})$ be such that $k$ and its derivatives up to order three extend continuously to each closed triangle $\{y\le x\}$, $\{y\ge x\}$ (so that $k_{yyy}(x,0)$ and $k(0,y)$, $k_{xx}(0,y)$ are well defined), and suppose $k$ solves
\begin{subequations}\label{eq:kpde}
\begin{align}
k_{xxxx}+(\lambda(x)k_x)_x-k_{yyyy}-(\lambda(y)k_y)_y+a\,k
 &=a\,\delta(x-y)+\varepsilon\,k_{yyy}(x,0)\,g_\theta(y),\label{eq:kpde-a}\\
k(x,0)&=k(x,1)=k_{yy}(x,0)=k_{yy}(x,1)=0,\label{eq:kpde-b}\\
k_y(x,0)&=0,\label{eq:kpde-c}\\
k(1,y)&=k_{xx}(1,y)=0,\label{eq:kpde-d}\\
k(0,y)&=\varepsilon\int_0^1k(x,y)\,g_\theta(x)\dd x,\label{eq:kpde-e}
\end{align}
\end{subequations}
where \eqref{eq:kpde-a} is understood in the sense of distributions on $(0,1)^2$, \eqref{eq:kpde-b}--\eqref{eq:kpde-c} hold for $x\in(0,1)$, and \eqref{eq:kpde-d}--\eqref{eq:kpde-e} hold for $y\in(0,1)$. Then, for every sufficiently smooth solution $u$ of \eqref{eq:plant} with the feedback \eqref{eq:U1} and $U_2=\int_0^1k_{xx}(0,y)u(y,t)\dd y$, the function $w$ defined by \eqref{eq:transform} satisfies the target system \eqref{eq:target}.
\end{proposition}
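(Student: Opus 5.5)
The plan is a direct verification: compute $w_t$ from \eqref{eq:transform}, substitute the plant, and integrate by parts in $y$ to move every derivative from $u$ onto $k$. The boundary conditions on $k$ in $y$ will then cancel the plant's boundary terms, and the kernel PDE \eqref{eq:kpde-a} will turn the interior terms into $-w_{xxxx}-(\lambda w_x)_x-aw$. The boundary conditions \eqref{eq:target-b}--\eqref{eq:target-c} will follow from \eqref{eq:kpde-d}--\eqref{eq:kpde-e} together with $h=k_{xx}(0,\cdot)$.

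\emph{Step 1 (interior equation).} Write $u_t=-u_{xxxx}-(\lambda u_x)_x-uu_x$. Then
\[
w_t=u_t-\int_0^1k(x,y)\big(-u_{yyyy}-(\lambda u_y)_y\big)\dd y-\int_0^1 k(x,y)\,(-uu_y)\dd y .
\]
The last integral combines with $-uu_x$ into $-(I-K)(uu_x)$. For the linear integral, I split at $y=x$ and integrate by parts four times on each triangle. Equivalently, I apply Lemma~\ref{lem:ibp} in the $y$ variable with $\psi=u$, reading it as an identity for $\phi=k(x,\cdot)$ on the two pieces $[0,x]$ and $[x,1]$. The boundary contributions at $y=1$ vanish by \eqref{eq:kpde-b} and $u(1)=u_{yy}(1)=0$. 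At $y=0$ the surviving terms are
\begin{itemize}
\item $u_{yy}(0)k_y(x,0)$, which vanishes by \eqref{eq:kpde-c};
\item $u(0)\big(k_{yyy}+\lambda k_y\big)(x,0)=\varepsilon\ip{g_\theta}{u}\,k_{yyy}(x,0)$.
\end{itemize}
The terms with $k(x,0)$ and $k_{yy}(x,0)$ vanish by \eqref{eq:kpde-b}. The $\varepsilon$-term is exactly cancelled by the source $\varepsilon k_{yyy}(x,0)g_\theta(y)$ in \eqref{eq:kpde-a}.

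At the diagonal $y=x$, the jumps of $k$ and its derivatives up to order three produce point terms. The condition that \eqref{eq:kpde-a} holds distributionally, with right-hand side $a\delta(x-y)$, encodes precisely that these jumps combine into $a\,u(x)$. I will handle this most cleanly by not splitting at all. Instead, test the distributional identity \eqref{eq:kpde-a} against $u(y)$: since $u(\cdot,t)$ is smooth, the pairing $\langle\delta(x-\cdot),u\rangle=u(x)$ directly yields the $a u$ term.

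Separately, $w_{xxxx}+(\lambda w_x)_x$ equals $u_{xxxx}+(\lambda u_x)_x$ minus $\int_0^1 \big(k_{xxxx}+(\lambda k_x)_x\big)u\dd y$, again in the distributional sense in $x$. Combining the two computations with \eqref{eq:kpde-a} gives \eqref{eq:target-a}.

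\emph{Step 2 (boundary conditions).} At $x=1$, \eqref{eq:kpde-d} and $u(1)=u_{xx}(1)=0$ give $w(1)=w_{xx}(1)=0$. At $x=0$, the choice $U_2=\int_0^1k_{xx}(0,y)u\dd y$ gives $w_{xx}(0)=u_{xx}(0)-\int_0^1 k_{xx}(0,y)u\dd y=0$. Finally, \eqref{eq:kpde-e} gives
\[
w(0)=\varepsilon\ip{g_\theta}{u}-\varepsilon\int_0^1\!\!\int_0^1 g_\theta(x)k(x,y)u(y)\dd y\dd x=\varepsilon\ip{g_\theta}{Tu},
\]
which is \eqref{eq:target-b}.

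\emph{Main obstacle.} The delicate point is justifying the diagonal and trace manipulations under the stated piecewise regularity. Specifically, traces such as $k_{xx}(0,y)$ and $k_{yyy}(x,0)$ must be interpreted from the appropriate triangle. The $x$-derivatives of $Ku$ must be exchanged with the integral across the diagonal, where $k_{xxx}$ may jump. I would justify this by writing $Ku=\int_0^x+\int_x^1$ and differentiating with the Leibniz rule. The diagonal boundary terms generated this way must match the jump terms generated in $y$, and together they reproduce the distributional $a\delta$ identity.
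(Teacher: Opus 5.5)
Your proposal is correct and is essentially the paper's own argument: integrating by parts in $y$ leaves only the boundary terms $k_y(x,0)U_2$, which vanishes by \eqref{eq:kpde-c}, and $\big(k_{yyy}(x,0)+\lambda(0)k_y(x,0)\big)U_1$, which matches the source $\varepsilon k_{yyy}(x,0)g_\theta$ once \eqref{eq:kpde-a} is paired with $u$; \eqref{eq:kpde-e} then gives $w(0)=\varepsilon\ip{g_\theta}{w}$, and the remaining boundary conditions come from the kernel traces. One clarification to your obstacle paragraph: \eqref{eq:kpde-a}, taken with the boundary conditions at the corners, forces $k$ and $k_x$ to be continuous across $y=x$, while $k_{xx}$ (not only $k_{xxx}$) already jumps there, and it is exactly this continuity of $k,k_x$ that makes $w_{xx}(0)=u_{xx}(0)-\int_0^1k_{xx}(0,y)u\dd y$ valid even though $u(0)=U_1\neq0$.
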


The proof is given in Appendix~\ref{app:kernel-transformation-proof}.

\subsection{Modal form of the kernel equations}

Given the representation of the kernel and gain, it will be much easier to work in modal coordinates specified in this section. Recall that $\{\varphi_n\}$ is the orthonormal eigenbasis of $\Aop$,
with eigenvalues $\mu_n$.

\begin{proposition}[Modal form of kernel equations]\label{prop:modal}
Suppose $k$ satisfies the hypotheses of Proposition~\ref{prop:kernel},
and set $h(y):=k_{xx}(0,y)$. Define the modal coefficients by
\begin{equation}
h_j:=\ip{h}{\varphi_j},\qquad
g_j:=\ip{g_\theta}{\varphi_j},\qquad
K_{mj}:=\int_0^1\int_0^1 k(x,y)\varphi_m(x)\varphi_j(y)\dd y\dd x.
\label{eq:kernel-modal-coefficients}
\end{equation}
Then these coefficients satisfy, for all $m,j\ge1$,
\begin{subequations}\label{eq:modal}
\begin{align}
(\mu_j-\mu_m+a)K_{mj}+\beta_m h_j
  +\varepsilon\,\alpha_m\textstyle\sum_{p}g_pK_{pj}
 -\varepsilon\,g_j\textstyle\sum_{i}K_{mi}\alpha_i&=a\,\delta_{mj},
   \label{eq:modal-a}\\
\textstyle\sum_{j}K_{mj}\,\beta_j&=0 .\label{eq:modal-b}
\end{align}
\end{subequations}
\end{proposition}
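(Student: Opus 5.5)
We test each kernel equation against the product basis $\varphi_m(x)\varphi_j(y)$ and integrate by parts, using Lemma~\ref{lem:ibp} in each variable; the boundary terms generate $\alpha$, $\beta$, $h_j$ and $g_j$. Concretely, we multiply \eqref{eq:kpde-a} by $\varphi_m(x)\varphi_j(y)$ and integrate over $(0,1)^2$. Because $k$ may have a jump in its third derivatives across $x=y$, we split into the two closed triangles. The interior jump contributions are then exactly what the distributional identity \eqref{eq:kpde-a} accounts for: the $a\delta(x-y)$ term pairs with $\varphi_m\varphi_j$ to give $a\delta_{mj}$ by orthonormality. Hence working distributionally is equivalent to integrating by parts first in $x$ for fixed $y$ and then in $y$ for fixed $x$. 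Both one-dimensional integrations are justified, since for each fixed $y$ the function $k(\cdot,y)$ is piecewise $C^4$ with continuous derivatives up to order three. Alternatively, one can pair the distribution with the smooth test function and move all derivatives onto $\varphi_m\varphi_j$, keeping only the boundary terms on $\partial(0,1)^2$.

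\emph{$x$-terms.} We apply \eqref{eq:ibp} with $\psi=k(\cdot,y)$ and $\phi=\varphi_m$. This requires $\psi(1)=\psi''(1)=0$, which is \eqref{eq:kpde-d}. The result is
\[
\int_0^1\big(k_{xxxx}+(\lambda k_x)_x\big)\varphi_m\dd x
= k_{xx}(0,y)\varphi_m'(0)+k(0,y)\qd{\varphi_m}(0)-\mu_m\int_0^1 k\varphi_m\dd x.
\]
Integrating against $\varphi_j(y)$ turns the first term into $\beta_m h_j$ and the last into $-\mu_m K_{mj}$. For the middle term we substitute \eqref{eq:kpde-e}, which gives $\alpha_m\varepsilon\sum_p g_pK_{pj}$ after expanding $g_\theta=\sum_p g_p\varphi_p$ in the real orthonormal basis.

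\emph{$y$-terms.} We apply \eqref{eq:ibp} in $y$ with $\psi=k(x,\cdot)$ and $\phi=\varphi_j$. The reflection $y\mapsto$ boundary at $0$ is already how \eqref{eq:ibp} is stated, and the conditions at $y=1$ hold by \eqref{eq:kpde-b}. The boundary terms at $y=0$ are $k_{yy}(x,0)\varphi_j'(0)+k(x,0)\qd{\varphi_j}(0)$, and both vanish by \eqref{eq:kpde-b}. Hence $-\int_0^1\big(k_{yyyy}+(\lambda k_y)_y\big)\varphi_j\dd y$ contributes $+\mu_j\int_0^1 k\varphi_j\dd y$, which yields $\mu_jK_{mj}$.

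\emph{Remaining terms.} The term $aK_{mj}$ is immediate. The source term becomes $\varepsilon\,g_j\int_0^1 k_{yyy}(x,0)\varphi_m(x)\dd x$, and we must show this equals $\varepsilon g_j\sum_i K_{mi}\alpha_i$ (moved to the left-hand side with a minus sign). For this we use the one-sided identity from integrating by parts in $y$ against $\varphi_i$, keeping the $y=0$ Cauchy data of $k$. Pairing $\qd{}$-type terms shows that the coefficients $\int_0^1 k(x,y)\varphi_i(y)\dd y=\sum_m K_{mi}\varphi_m(x)$ have traces that produce $\alpha_i$. Concretely, we would express $k_{yyy}(x,0)+\lambda(0)k_y(x,0)=k_{yyy}(x,0)$ (using \eqref{eq:kpde-c}) via Green's formula against the eigenfunctions in $y$. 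I expect this identification to be the main obstacle: it needs the sum $\sum_i K_{mi}\alpha_i$ to converge, or to be interpreted as the trace of $\qd{}$ in $y$ of $\int k\varphi_m\dd x$. I would handle it by showing that $\kappa_m(y):=\int_0^1 k(x,y)\varphi_m(x)\dd x$ lies in $H^4$ with $\kappa_m(0)=\kappa_m''(0)=0$ and $\kappa_m'(0)=0$. Then $\sum_i K_{mi}\alpha_i$ is the expansion of $\qd{\kappa_m}(0)=\kappa_m'''(0)$, which is valid with suitable summability since $\alpha_i=O(i^3)$ and $K_{mi}$ decays.

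\emph{Orthogonality condition \eqref{eq:modal-b}.} This follows the same way. Since $\kappa_m'(0)=k_y$ integrated gives $0$ by \eqref{eq:kpde-c}, and $\kappa_m'(0)=\sum_j K_{mj}\varphi_j'(0)=\sum_jK_{mj}\beta_j$, the condition follows once the derivative trace commutes with the expansion. This holds because $\kappa_m\in H^2$ with the hinged conditions at $y=0,1$ lies in the form domain, so the series converges in $H^2\hookrightarrow C^1$. Collecting all terms gives \eqref{eq:modal-a}.
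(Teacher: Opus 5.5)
Your proposal is correct and uses the same computation as the paper. You test \eqref{eq:kpde-a} against $\varphi_m(x)\varphi_j(y)$ and read the diagonal jumps distributionally, so that $a\delta(x-y)$ gives $a\delta_{mj}$. The terms $\beta_mh_j$, $\varepsilon\alpha_m\sum_pg_pK_{pj}$, $-\mu_mK_{mj}$ and $\mu_jK_{mj}$ then come from Lemma~\ref{lem:ibp} with \eqref{eq:kpde-b}, \eqref{eq:kpde-d} and \eqref{eq:kpde-e}.

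Where you differ is in how the two trace identities are obtained. The paper projects in $y$ first: it derives the ODE \eqref{eq:kj} for $k_j(x)=\ip{k(x,\cdot)}{\varphi_j}$, then reads off $k_{yyy}(x,0)=\sum_ik_i(x)\alpha_i$ and $\sum_ik_i(x)\beta_i=0$ pointwise in $x$ from the eigen-expansion of $k(x,\cdot)$. That route is shorter: one ODE and one more application of Lemma~\ref{lem:ibp}. You average in $x$ first, through $\kappa_m(y)=\int_0^1k(x,y)\varphi_m(x)\dd x$. This is the more robust ordering. For fixed $x$, $k(x,\cdot)$ has a derivative jump at $y=x$, which is needed to produce $a\delta(x-y)$, so it is not in $D(\Aop)$. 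A termwise trace of its expansion against coefficients $\alpha_i$ of size $i^3$ is therefore not covered by graph-norm convergence. By contrast, $\kappa_m\in D(\Aop)$ makes $\sum_iK_{mi}\alpha_i=\qd{\kappa_m}(0)$ and $\sum_jK_{mj}\beta_j=\kappa_m'(0)$ absolutely convergent, exactly as in the proof of Lemma~\ref{lem:spectrum}(ii).

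The cost is that you must prove $\kappa_m\in H^4$, which you only assert. Test \eqref{eq:kpde-a} against $\varphi_m(x)\psi(y)$ with $\psi\in C_c^\infty(0,1)$. Your $x$-computation then gives, in $\mathcal D'(0,1)$,
\[
\kappa_m''''+(\lambda\kappa_m')'=\beta_mh+\alpha_mk(0,\cdot)+(a-\mu_m)\kappa_m-a\varphi_m-\varepsilon c_mg_\theta,\qquad c_m:=\int_0^1k_{yyy}(x,0)\varphi_m(x)\dd x .
\]
Under the hypotheses of Proposition~\ref{prop:kernel} the right-hand side is bounded, so a one-dimensional bootstrap gives $\kappa_m\in H^4(0,1)$. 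Pairing this ODE with $\varphi_j$ via Lemma~\ref{lem:ibp} reproduces \eqref{eq:modal-a} with $c_m$ in place of $\sum_iK_{mi}\alpha_i$. The whole statement thus reduces to $c_m=\qd{\kappa_m}(0)=\kappa_m'''(0)$, which is precisely your trace identity.
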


The proof is given in Appendix~\ref{app:modal-equivalence-proof}. The condition \eqref{eq:modal-b} is the normalization $TB_2=B_2$. It makes
\eqref{eq:modal} linear in $(K,h)$ which is needed for the analysis. For $\varepsilon=0$, this is the system
behind \cite[(2.53)--(2.55)]{CoronLu2015}.

Now, the PDE mapping the original plant to the target system is established, but not necessarily well-posed. The next section gives 
conditions on $a, \theta, \varepsilon$ such that this occurs.

\section{Pre-feedback enables controllability}\label{sec:existence}

\subsection{Admissible design and pre-feedback}
We specify the spectral properties that the displacement pre-feedback
must provide for the curvature input to implement the Fredholm design.
We then prove that suitable parameters exist for every plant coefficient
$\lambda\in W^{1,\infty}(0,1)$.
Let $\mu_n'$ denote the eigenvalues of the pre-compensated operator $\Ap$.
For each algebraically simple eigenvalue, let $\chi_n$ and $\tilde\chi_n$
be corresponding eigenvectors of $\Ap$ and its adjoint $\Ap{}^*$:
\begin{align}
\Ap\chi_n=\mu_n'\chi_n,\qquad
\Ap{}^*\tilde\chi_n=\overline{\mu_n'}\,\tilde\chi_n,\qquad
\ip{\chi_n}{\tilde\chi_n}=1.\label{eq:adjoint-eigenvector-normalization}
\end{align}
We use $\chi_n,\tilde\chi_n$ for the eigenvectors associated with mode $n$,
and $\chi_z,\tilde\chi_z$ for those of $\Ap$ and its adjoint associated
with $z,\bar z$. Since $\Ap$ has real coefficients, nonreal eigenvalues
occur in complex-conjugate pairs; we take the eigenvectors real when the
eigenvalue is real and conjugate-symmetric otherwise, so that
$\chi_{\bar z}=\overline{\chi_z}$ and $\tilde\chi_{\bar z}=\overline{\tilde\chi_z}$.
The design conditions are as follows.

\begin{definition}[Admissible design]\label{def:admissible}
Fix $\lambda\in W^{1,\infty}(0,1)$. A triple
$(a,\varepsilon,\theta)\in(0,\infty)\times\R\times(0,\infty)$ is
\emph{admissible for} $\lambda$ if the pre-compensated operator $\Ap$
in \eqref{eq:Aprime}, with eigenvalues $\mu_n'$, satisfies:
\begin{enumerate}[label=(H\arabic*)]
\item every eigenvalue of $\Ap$, and hence of the plant as modified by $(\varepsilon,\theta)$, is algebraically simple;
\item The curvature input acts on every mode, again a property of
$(\varepsilon,\theta)$, meaning
\begin{align}
b_n':=\ip{B_2}{\tilde\chi_n}=-\overline{\tilde\chi_n'(0)}\ne0,
\qquad n\ge1.
\end{align}
\item The target eigenvalues have negative real parts, and the target
shift avoids spectral resonances:
\begin{align}
a>\max_n\operatorname{Re}\mu_n',\qquad
a\notin\mathscr N'_a:=\{\mu_i'-\mu_j':\ i,j\ge1\}.
\end{align}
\end{enumerate}
We write $\mathscr U_\lambda$ for the admissible triples and
$\mathscr U:=\{(\lambda,a,\varepsilon,\theta):(a,\varepsilon,\theta)\in\mathscr U_\lambda\}$.
\end{definition}

For an eigenvector labeled by its eigenvalue $z$, we similarly write
$b'_z:=\ip{B_2}{\tilde\chi_z}$. Thus $b'_n$ uses a mode index,
whereas $b'_z$ uses the eigenvalue itself. Both coefficients refer to the
chosen scaling of the adjoint eigenvector.

In Definition~\ref{def:admissible}, (H1) makes (H2) the Fattorini--Hautus test for $(\Ap,B_2)$:
no adjoint eigenmode is orthogonal to the curvature input direction.
Conditions (H1)--(H2) of Definition~\ref{def:admissible} describe what the displacement pre-feedback must
achieve before the curvature feedback is designed. Condition (H3) of Definition~\ref{def:admissible} then
selects the target shift $a$. These conditions restrict the design
parameters, not the given coefficient $\lambda$. The next theorem shows
that admissible choices exist for every $\lambda\in W^{1,\infty}(0,1)$.

\begin{theorem}[Pre-feedback enables controllability for every $\lambda$]\label{thm:generic}\leavevmode
\begin{enumerate}[label=(\roman*)]
\item Let $\lambda\in W^{1,\infty}(0,1)$, and let $\Ap$ be the operator \eqref{eq:Aprime} of the plant under the controllability-ensuring pre-feedback \eqref{eq:U1}. There is a discrete set $\mathscr T_\lambda\subset(0,\infty)$ such that for every $\theta\notin\mathscr T_\lambda$ there is $\varepsilon_1(\lambda,\theta)>0$ with the following property: for every $\varepsilon$ with $0<|\varepsilon|<\varepsilon_1$, (H1) and (H2) of Definition~\ref{def:admissible} hold, the set $\mathscr N'_a$ is discrete, and $(a,\varepsilon,\theta)\in\mathscr U_\lambda$ for every $a>\max\{0,\max_n\operatorname{Re}\mu_n'\}$ outside $\mathscr N'_a$. Moreover $\max_n\operatorname{Re}\mu_n'=\max_n\mu_n+O(|\varepsilon|)$.
\item For every admissible design, the eigenvectors $\{\chi_n\}$ of $\Ap$ form a Riesz basis of $\LL$ with biorthogonal family the eigenvectors $\{\tilde\chi_n\}$ of $\Ap{}^*$, $\Ap$ generates a $C^0$ semigroup on $\LL$, and, indexing so that $\operatorname{Re}\mu_1'\ge\operatorname{Re}\mu_2'\ge\cdots$, there are $c,C>0$ such that, for all $n,p\ge1$,
\begin{equation}\label{eq:HL-hyp}
c\,n^4\le |\mu_n'|+1\le C n^4,\qquad
|\mu_n'-\mu_p'|\ge c\,n^3|n-p|,\qquad
c\,n\le |b_n'|\le C\,n.
\end{equation}
\end{enumerate}
\end{theorem}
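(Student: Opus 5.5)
The plan is to treat $\Ap=\Aop+\varepsilon B_1\ip{g_\theta}{\cdot}$ as a rank-one, relatively bounded perturbation of the self-adjoint operator $\Aop$. Part (i) follows from perturbation theory at small $\varepsilon$ together with an analyticity argument in $\theta$. Part (ii) follows from a Bari/Kato-type quadratic-closeness argument for the eigenvectors.

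\textbf{Step 1: characteristic function.} I would first write the eigenvalue problem for $\Ap$ via the resolvent of $\Aop$. If $\Ap\phi=z\phi$ with $z\notin\sigma(\Aop)$, then $\phi=\varepsilon c\,(z-\Aop)^{-1}B_1$ with $c=\ip{g_\theta}{\phi}$. Here $(z-\Aop)^{-1}B_1$ is understood through Lemma~\ref{lem:ibp}: it is the solution of the homogeneous equation with $\phi(0)=1$. The eigenvalues are then the zeros of
\begin{align*}
\Delta_\varepsilon(z)=1-\varepsilon\sum_n\frac{\alpha_n g_n}{z-\mu_n},
\end{align*}
which by \eqref{eq:mu-asymp} and the growth $|\alpha_n|\lesssim n^3$, $|g_n|\lesssim n^{-1}$ (from \eqref{eq:app-b-display-016} and integration by parts of $g_\theta$) converges on $\C\setminus\sigma(\Aop)$. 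The adjoint eigenvector is explicit:
\begin{align*}
\tilde\chi_z\propto(\bar z-\Aop)^{-1}g_\theta .
\end{align*}
Hence $\tilde\chi_z'(0)\propto\sum_n\beta_n\bar g_n/(\bar z-\mu_n)$, up to the $\varepsilon$-dependent boundary terms.

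\textbf{Step 2: perturbation of eigenvalues.} By \eqref{eq:mu-asymp}, the gaps grow like $n^3$, so only finitely many eigenvalues of $\Aop$ are multiple, each of multiplicity at most two (a fourth-order ODE with $\phi(0)=\phi''(0)=0$ has a two-dimensional solution space). For a double eigenvalue $\mu$, degenerate perturbation theory applied to the restriction $P_\mu(\varepsilon B_1\ip{g_\theta}{\cdot})P_\mu$ gives the rank-one $2\times2$ matrix with entries $-\alpha_m g_n$, whose nonzero eigenvalue is $-\varepsilon\sum_{\mu_n=\mu}\alpha_n g_n$. The double eigenvalue therefore splits into $\mu$ and $\mu-\varepsilon\sum\alpha_ng_n+O(\varepsilon^2)$. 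The splitting is strict once
\begin{align*}
\sum_{\mu_n=\mu}\alpha_n g_n(\theta)\neq0 .
\end{align*}
One branch stays at $\mu$, with eigenvector the element of $E_\mu\cap g_\theta^\perp$, and its adjoint vector is a combination of $\varphi_n$ in $E_\mu$. The (H2) condition for that branch reduces to a nonvanishing requirement $\beta$-weighted against $g$. For the moving branches and the simple eigenvalues, $\tilde\chi_n=\varphi_n+O(\varepsilon)$, so (H2) needs $\beta_n\neq0$ for simple $\mu_n$. This holds because $\varphi_n'(0)=0$ together with $\varphi_n(0)=\varphi_n''(0)=0$ would force, via \eqref{eq:app-b-display-002}, a 1-dimensional constraint that I would argue is compatible only with multiple eigenvalues. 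If that fails, I would handle it by the same $\theta$-genericity argument.

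Each such condition is a finite list of functions of $\theta$, built from $g_n(\theta)=\int e^{-\theta y}\varphi_n$, that are real-analytic on $(0,\infty)$. They are not identically zero: otherwise, by letting $\theta\to\infty$ (Laplace asymptotics $g_n(\theta)\sim\varphi_n'(0)\theta^{-2}+\dots$) or by a moment/uniqueness argument, the relevant combination of eigenfunctions would have vanishing Cauchy data at $0$, contradicting Theorem~\ref{thm:control}'s uniqueness step. Their zero set is $\mathscr T_\lambda$, which is discrete. For high modes $n\ge N$, I would use a uniform estimate on the disc $|z-\mu_n|\le cn^3$ by Rouché with $\Delta_\varepsilon$. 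This gives simplicity, $|\mu_n'-\mu_n|\le C|\varepsilon|n^{-1}\cdot n^3/n^3$, and $|b_n'-\beta_n|\le C|\varepsilon|$, uniformly in $n$ for $|\varepsilon|<\varepsilon_1$. Combined with the low modes, this yields (H1), (H2), the estimate $\max\operatorname{Re}\mu_n'=\max\mu_n+O(\varepsilon)$, and discreteness of $\mathscr N_a'$ from the cubic gaps.

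\textbf{Step 3: part (ii).} The operator $\Ap$ generates a $C^0$ semigroup since it is a bounded-type boundary perturbation; I would lift the boundary condition with $\phi\mapsto\phi-\varepsilon\ip{g_\theta}{\phi}(1-x)$, which gives a similar operator of the form $\Aop+$ a relatively bounded perturbation. The eigenvector estimate $\|\chi_n-\varphi_n\|\lesssim n^{-1}$ gives quadratic closeness to an orthonormal basis, and completeness comes from the resolvent being compact with the perturbation rank-one. Bari's theorem then gives a Riesz basis, biorthogonal to the $\tilde\chi_n$. The bounds \eqref{eq:HL-hyp} follow from \eqref{eq:mu-asymp} and the Rouché localization, using $|\beta_n|\asymp n$ from $\varphi_n\approx\sqrt2\sin n\pi x$.

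\textbf{Main obstacle.} The hard step is the non-identical-vanishing of the analytic $\theta$-functions for the double eigenvalues. This requires showing that the exponential family $e^{-\theta y}$ cannot be orthogonal to the required combinations for all $\theta$. My first attempt would be the $\theta\to\infty$ expansion, which reduces the question to boundary derivatives of eigenfunctions at $0$.
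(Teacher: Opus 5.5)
Your overall route is the paper's: secular function, degenerate perturbation at double eigenvalues, analytic genericity in $\theta$, Rouch\'e on the high modes, and Bari's theorem. There is, however, a genuine gap.

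\textbf{The gap: simple plant eigenvalues with $\beta_n=0$.} You claim that (H2) at a simple eigenvalue requires $\beta_n=\varphi_n'(0)\neq0$, and that $\varphi_n'(0)=0$ is ``compatible only with multiple eigenvalues.'' The Cauchy-data argument does not give this. Lemma~\ref{lem:plant-spectral}(ii) only yields $(\beta_n,\alpha_n)\neq(0,0)$. Requiring the solution with data $(\phi,\phi',\phi'',\qd\phi)(0)=(0,0,0,1)$ to also satisfy $\phi(1)=\phi''(1)=0$ is one extra scalar condition on $(\lambda,\mu)$. It is not a contradiction, and it has nothing to do with multiplicity. Since the theorem is claimed for every $\lambda\in W^{1,\infty}(0,1)$, this case must be handled. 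Your own expansion $\tilde\chi_n=\varphi_n+O(\varepsilon)$ shows $b_n'\to0$ there, so (H2) cannot follow by continuity, and your fallback to ``the same $\theta$-genericity argument'' has no condition to apply to.

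What is missing is a first-order computation. From $\Ap{}^*\phi=\Aop\phi-\varepsilon\,\qd\phi(0)\,g$ one gets $\tilde\chi_n(\varepsilon)=\varphi_n+\varepsilon\alpha_n\sum_{m\neq n}\frac{g_m}{\mu_m-\mu_n}\varphi_m+O(\varepsilon^2)$ in $H^4$. Convergence in $H^4$, not just $L^2$, is needed to take the derivative trace. Hence $b_n'(\varepsilon)=-\varepsilon\alpha_n\ip{g}{\zeta_n}+O(\varepsilon^2)$ with $\zeta_n=\sum_{m\neq n}\frac{\beta_m}{\mu_m-\mu_n}\varphi_m$. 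Since $\alpha_n\neq0$, (H2) for small $\varepsilon\neq0$ amounts to the additional condition $\ip{g_\theta}{\zeta_n}\neq0$. This holds off a discrete set of $\theta$ because $\zeta_n\neq0$, its high-mode coefficients being nonzero.

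\textbf{Further errors.}

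\emph{Double eigenvalues.} You attach the $g$-dependent $\beta$-condition to the wrong branch. The retained eigenvalue $\mu$ has adjoint eigenvector in $E_\mu$ with $\qd{\tilde\chi}(0)=0$, i.e.\ proportional to $\alpha_k\varphi_j-\alpha_j\varphi_k$. Its curvature coefficient is therefore proportional to $\alpha_k\beta_j-\alpha_j\beta_k$, which is nonzero automatically by trace injectivity. The moving branch does not satisfy $\tilde\chi=\varphi_n+O(\varepsilon)$ for any $n$. Its normalized adjoint eigenvector tends to $P_\mu g/\mathsf r_\mu$, and (H2) for it is precisely $\ip{P_\mu g}{\beta_j\varphi_j+\beta_k\varphi_k}\neq0$.

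\emph{Completeness in part (ii).} A compact resolvent plus a rank-one perturbation does not by itself give completeness of the eigenvectors of a non-self-adjoint operator. You can use Keldysh's theorem as the paper does, with the trace-class plant resolvent and $(\Ap-z_0)^{-1}=(I+R)(\Aop-z_0)^{-1}$, $R$ of rank one. Alternatively, set $S\varphi_n=\chi_n-\varphi_n$: biorthogonality to $\{\tilde\chi_n\}$ makes $I+S$ injective, and index zero then gives the Riesz basis without a separate completeness step.

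\emph{Large $|\varepsilon|$.} Part (ii) is asserted for every admissible design, not only small $|\varepsilon|$. The high-mode Rouch\'e analysis must therefore be done for each fixed $\varepsilon$ with an $\varepsilon$-dependent cutoff. This works because the off-diagonal secular sum on the $n$th disc and $|\varepsilon g_n\alpha_n|/\min_{m\neq n}|\mu_m-\mu_n|$ both tend to zero. You also need a count excluding eigenvalues outside the discs.

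\emph{High-mode shift.} The shift is $|\mu_n'-\mu_n|\le2|\varepsilon||g_n||\alpha_n|=O(n^2)$, not $O(n^{-1})$. This is harmless, since it is still $o(n^3)$.

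\emph{The ``main obstacle.''} This step is routine. The map $\theta\mapsto\int_0^1e^{-\theta y}v(y)\dd y$ is entire, and vanishing on $(0,\infty)$ forces all moments of $v$ to vanish, so $v=0$. The real work is producing the right finite list of vectors $v$, namely $\alpha_j\varphi_j+\alpha_k\varphi_k$, $\beta_j\varphi_j+\beta_k\varphi_k$ and $\zeta_n$, and showing each is nonzero.
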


\subsection{Proof of Theorem~\ref{thm:generic}: pre-feedback enables controllability}\label{ssec:prefeedback-proof}

The proof separates the high modes, which already have the required
properties, from finitely many modes that may need pre-feedback.
At high frequencies, the eigenvalues are simple and the curvature input
reaches every mode. We show that the pre-feedback preserves these
properties because its effect is smaller than the spectral gaps.
At a double plant eigenvalue, the pre-feedback retains one eigenvalue
and moves the other. Suitable conditions on the profile make both
eigenvalues simple and both modes reachable. These conditions also
ensure reachability of previously missed simple modes, while modes
already reached remain so.

We first establish the plant estimates in Lemma~\ref{lem:plant-spectral}.
Lemma~\ref{lem:spectrum} then gives
the pre-feedback resolvent, spectral characterization, and curvature-input
coefficients. Lemma~\ref{lem:high} uses these formulas to treat the high
modes. For the remaining modes, we derive finitely many nonvanishing
conditions on the profile and choose $g_\theta$ to satisfy them.
A common small nonzero amplitude $\varepsilon$ and an admissible shift
$a$ then complete part (i). Finally, we prove completeness and use the
high-mode estimates to establish the Riesz basis, semigroup generation,
and spectral bounds in part (ii).

Throughout this proof, $\lambda\in W^{1,\infty}(0,1)$ is
fixed, $g\in\LL$ is a real pre-feedback profile, and $\varepsilon\in\R$.
Recall that $\alpha_n=\varphi_n'''(0)+\lambda(0)\varphi_n'(0)$ and
$\beta_n=\varphi_n'(0)$ are the boundary-trace coefficients associated
with the displacement and curvature inputs, respectively, as defined in
\eqref{eq:modal-coefficients}.
We use $\Ap$ from \eqref{eq:Aprime} with $g_\theta$ replaced by $g$, and write
\begin{align}
g_n&:=\;\ip{g}{\varphi_n},\qquad
\mathfrak g_n:=\min_{m\ne n}|\mu_m-\mu_n|.
\label{eq:high-mode-notation}
\end{align}
Unless stated otherwise, the generic constants in this
subsection may depend on $\lambda$, $g$, $\varepsilon$, and the target shift
$a$ where it is used, but not on the mode indices or spectral variables.

\subsubsection{Spectral properties of the original plant}\label{sssec:plant-spectral}
We begin with the original plant. Its eigenspaces have dimension at most
two, and its high-frequency eigenfunctions approach the sine basis.
The proof also establishes the spectral gap used to control the effect
of pre-feedback on the high modes.

\begin{lemma}[Spectral properties of original operator $\Aop$]\label{lem:plant-spectral}
Let $\lambda\in W^{1,\infty}(0,1)$, set $N_1:=\max\{3,\lceil2\|\lambda\|_{L^\infty}/\pi^2\rceil\}$ and $s_n(x):=\sqrt2\sin(n\pi x)$, and normalize $\varphi_n$ by $\ip{\varphi_n}{s_n}>0$ for $n\ge N_1$. Then, the following hold:
\begin{enumerate}[label=(\roman*)]
\item For every eigenvalue $\mu$ of $\Aop$, the eigenspace $E_\mu$ has dimension at most two.
\item For $\phi \in E_\mu$, the linear map $\phi\mapsto\big(\phi'(0),\qd\phi(0)\big)\in\R^2$ is injective.
In particular, for a double eigenvalue with eigenfunctions $\varphi_j,\varphi_k$ forming a basis of $E_\mu$, the matrix
$\begin{pmatrix}\alpha_j&\alpha_k\\\beta_j&\beta_k\end{pmatrix}$ is nonsingular.
\item With constants depending only on $\|\lambda\|_{W^{1,\infty}}$,
\begin{gather}
\|\varphi_n-s_n\|_{H^s(0,1)}\le C\,n^{s-1}\quad(0\le s\le4),\qquad n\ge N_1,\\
\|\varphi_n^{(j)}-s_n^{(j)}\|_{L^\infty}\le C\,n^{j-1/2}\quad(j=0,1,2,3),\qquad n\ge N_1 ,
\end{gather}
and consequently $\beta_n=\sqrt2\,n\pi+O(n^{1/2})$, $\alpha_n=-\sqrt2(n\pi)^3+O(n^{5/2})$. In particular, after enlarging $N_1$, there are $c_0,C_0>0$ and $N_1$, depending only on $\|\lambda\|_{W^{1,\infty}}$, such that
\begin{equation}\label{eq:trace-bounds}
|\beta_n|\le C_0 n,\quad |\alpha_n|\le C_0n^3\ \ (n\ge1),\qquad c_0n\le|\beta_n|,\quad c_0n^3\le|\alpha_n|\ \ (n\ge N_1).
\end{equation}
\end{enumerate}
\end{lemma}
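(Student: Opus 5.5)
Parts (i) and (ii) both come from the Cauchy-data argument already used at the end of the proof of Theorem~\ref{thm:control}; part (iii) is a perturbation estimate against the sine basis.

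\emph{Parts (i) and (ii).} Any $\phi\in E_\mu$ satisfies the first-order system \eqref{eq:app-b-display-002} with $y=(\phi,\phi',\phi'',\qd\phi)^{\mathsf T}$. The hinged conditions force $y_1(0)=y_3(0)=0$. Uniqueness for this linear system with $L^\infty$ coefficients then shows that $y$ is determined by $(y_2(0),y_4(0))=(\phi'(0),\qd\phi(0))$. Hence the map $\phi\mapsto(\phi'(0),\qd\phi(0))$ is injective on $E_\mu$, which is (ii). Since the target space is $\R^2$, this also gives $\dim E_\mu\le2$, which is (i). For a double eigenvalue with basis $\varphi_j,\varphi_k$, injectivity says the columns $(\beta_j,\alpha_j)$ and $(\beta_k,\alpha_k)$ are linearly independent, so the stated matrix is nonsingular.

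\emph{Part (iii), strategy.} I would compare $\varphi_n$ with $s_n$ by a projection/resolvent argument. From \eqref{eq:mu-asymp}, for $n\ge N_1$ we have $\mu_n=-n^4\pi^4+O(n^2)$, and the sine eigenvalues of $A_0$ are separated from $-n^4\pi^4$ by at least $c\,n^3$. Let $\Pi_n$ be the orthogonal projection onto the span of $s_n$ and $Q_n=I-\Pi_n$. The eigenvalue equation gives $(A_0-\mu_n)Q_n\varphi_n=-Q_nP\varphi_n$.

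\emph{Part (iii), the $L^2$ estimate.} On the range of $Q_n$, the operator $(A_0-\mu_n)^{-1}$ is diagonal in the sine basis with entries $(-k^4\pi^4-\mu_n)^{-1}$. Using $\|P\varphi_n\|\le C\|\varphi_n\|_{H^2}\le Cn^2$, this crude bound only gives $\|Q_n\varphi_n\|\le Cn^{-1}$ after using the gap $n^3$. To sharpen it, I would weight by the actual sine coefficient decay of $P\varphi_n$. The main contribution comes from $k$ near $n$, where $|k^4-n^4|\gtrsim n^3|k-n|$. Summing $n^4/(n^3|k-n|)^2$ over $k\ne n$ gives $\|Q_n\varphi_n\|\le Cn^{-1}$, which is the case $s=0$.

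\emph{Part (iii), higher norms and traces.} Multiplying the entries by $k^s$ and using the same gap structure yields $\|Q_n\varphi_n\|_{H^s}\le Cn^{s-1}$. The normalization $\ip{\varphi_n}{s_n}>0$ then gives $1-\ip{\varphi_n}{s_n}=O(n^{-2})$, so $\|\varphi_n-s_n\|_{H^s}\le Cn^{s-1}$ for $0\le s\le 4$. The $L^\infty$ bounds on $\varphi_n^{(j)}-s_n^{(j)}$ follow by Gagliardo--Nirenberg interpolation, $\|f\|_{L^\infty}\le C\|f\|^{1/2}\|f\|_{H^1}^{1/2}$, applied to $f=(\varphi_n-s_n)^{(j)}$. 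This gives the rate $n^{j-1}\cdot n^{1/2}=n^{j-1/2}$.

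\emph{Trace asymptotics.} Evaluating at $x=0$ with $s_n'(0)=\sqrt2n\pi$ and $s_n'''(0)=-\sqrt2(n\pi)^3$ gives the asymptotics of $\beta_n$ and $\alpha_n$. The term $\lambda(0)\varphi_n'(0)=O(n)$ is absorbed into $O(n^{5/2})$. The bounds \eqref{eq:trace-bounds} for $n\ge N_1$ follow by enlarging $N_1$ until the error terms are below half the leading term. The bounds for small $n$ follow from \eqref{eq:control-trace}--\eqref{eq:control-elliptic} together with $|\mu_n|\le Cn^4$. This needs care, since the elliptic bound alone only gives $|\beta_n|\le Cn^4$; I would instead use interpolation with $\|\varphi_n\|_{H^4}\le Cn^4$ and $\|\varphi_n\|=1$, which gives $\|\varphi_n\|_{H^{3/2+\delta}}\lesssim n^{3/2+\delta}$ and hence the traces.

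\emph{Main obstacle.} The hardest step is the sharp $O(n^{s-1})$ eigenfunction estimate, in particular the low-$k$ and high-$k$ tails of the resolvent sum, with constants uniform in $\|\lambda\|_{W^{1,\infty}}$. I would handle it by splitting the sum into $|k-n|\le n/2$ and the remaining $k$, and bounding $P\varphi_n$ via \eqref{eq:relbound}.
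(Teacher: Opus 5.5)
Your proposal is correct. Parts (i)--(ii) are exactly the paper's Cauchy-data argument, but for (iii) you take the dual route.

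\emph{The paper's route.} It expands the explicit vector $s_n$ in the unknown eigenbasis $\{\varphi_k\}$. It first proves the perturbed gap \eqref{eq:gap} from min--max bounds on all $\mu_m$, including the low-index lower bound in \eqref{eq:app-d-display-080}. It then combines $\mathfrak g_n\ge\pi^4n^3$ with the explicit residual $\|(A_\lambda-\mu_n)s_n\|\le\|Ps_n\|+|\mu_n+n^4\pi^4|\le Cn^2$ to get the $L^2$ bound. Derivatives are handled separately, through the equation $A_0v=\mu_nv+(\mu_n+n^4\pi^4)s_n-P\varphi_n$ for $v=\varphi_n-s_n$, elliptic regularity giving $\|v\|_{H^4}\le Cn^3$, and interpolation.

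\emph{Your route.} You expand the unknown $\varphi_n$ in the explicit sine basis, where $(A_0-\mu_n)^{-1}$ is diagonal on the range of $Q_n$. The only spectral input is the location of the single eigenvalue $\mu_n$ from \eqref{eq:mu-asymp}. With the choice of $N_1$, this gives $|k^4\pi^4+\mu_n|\ge\tfrac12\pi^4|k^4-n^4|$ for $k\ne n$. The whole Sobolev scale then follows from one multiplier bound, $\sup_{k\ne n}k^s|k^4\pi^4+\mu_n|^{-1}\le Cn^{s-3}$ for $0\le s\le4$, multiplied by $\|P\varphi_n\|\le Cn^2$. The remaining term $(\langle\varphi_n,s_n\rangle-1)s_n$ contributes only $O(n^{s-2})$.

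\emph{Trade-offs.} Your route needs the a priori bound $\|\varphi_n\|_{H^2}\le Cn^2$ from the start; the paper needs it only in the derivative step. Your route also does not produce the perturbed gap \eqref{eq:gap} or simplicity of $\mu_n$ for $n\ge N_1$. The paper's proof gets these as a byproduct and cites \eqref{eq:gap} again in Lemma~\ref{lem:high}. In return, you need no separate elliptic step for $v$. Your full-$H^1$ form of Gagliardo--Nirenberg also avoids the paper's Poincar\'e absorption.

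\emph{Two concerns you raise are unnecessary.} First, the ``sharpening'' detour is not needed. The crude bound $\|Q_n\varphi_n\|\le\sup_{k\ne n}|k^4\pi^4+\mu_n|^{-1}\|P\varphi_n\|\le Cn^{-3}\cdot n^2$ is already the sharp $n^{-1}$. Second, for the finitely many $n<N_1$, any polynomial bound suffices. For example, \eqref{eq:control-trace}--\eqref{eq:control-elliptic} give $|\beta_n|+|\alpha_n|\le C(1+|\mu_n|)$. Since $N_1$ depends only on $\|\lambda\|_{W^{1,\infty}}$, this can be absorbed into $C_0$.
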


\begin{proof}
We prove (i) and (ii) together. For $\phi\in E_\mu$, the hinged conditions give
$\phi(0)=\phi''(0)=0$. The ODE uniqueness argument in the proof of
Theorem~\ref{thm:control} shows that the remaining Cauchy data
$\phi'(0)$ and $\qd\phi(0)$ uniquely determine $\phi$. Hence
$\dim E_\mu\le2$, proving (i), and the trace map is injective, proving (ii).

We next prove (iii). Here $C>0$ may vary but depends only on $\|\lambda\|_{W^{1,\infty}(0,1)}$, not on mode indices.

Use the decomposition $\Aop=A_0+P$ from
\eqref{eq:A0P}. Expanding $P\phi=-(\lambda\phi')'$ and applying
H\"older's inequality gives
\begin{align}
\|P\phi\|&\le\;C\|\lambda\|_{W^{1,\infty}}\|\phi\|_{H^2},\label{eq:app-d-display-003}
\end{align}

We will need to bound the difference between eigenvalues.
For $n\ge N_1$ and $m\ne n$, set $p=\min\{m,n\}$ and
$q=\max\{m,n\}$. The min--max lower bound
$\mu_p\ge-p^4\pi^4-\|\lambda\|_{L^\infty}p^2\pi^2$ holds for every $p$.
Combining it with the upper bound in \eqref{eq:mu-asymp} at $q\ge N_1$
and using the eigenvalue ordering gives
\begin{align}
|\mu_m-\mu_n|
&\ge\;\pi^4|m^4-n^4|-\|\lambda\|_{L^\infty}\pi^2(m^2+n^2).\label{eq:app-d-display-080}
\end{align}
Using $n\ge N_1\ge 2\|\lambda\|_{L^\infty}/\pi^2$, we obtain, for $m\ne n$,
\begin{align}
|m^2-n^2|&\ge2n-1\ge2\|\lambda\|_{L^\infty}/\pi^2.\label{eq:app-d-display-001}
\end{align}
Factoring $|m^4-n^4|=|m^2-n^2|(m^2+n^2)$ therefore gives
\begin{align}
|\mu_m-\mu_n|&\ge\;\tfrac12\pi^4|m^4-n^4|,
\qquad n\ge N_1,\quad m\ne n.\label{eq:gap}
\end{align}
Taking the minimum over $m\ne n$ and using
$n^4-(n-1)^4\ge2n^3$ for $n\ge3$ gives
$\mathfrak g_n\ge\pi^4n^3$. In particular, $\mu_n$ is simple for $n\ge N_1$.

Since $A_0$ has eigenpairs $(-n^4\pi^4,s_n)$,
\eqref{eq:mu-asymp} and \eqref{eq:app-d-display-003} give the residual bound
\begin{align}
\|(\Aop-\mu_n)s_n\|
&\le\;\|Ps_n\|+|\mu_n+n^4\pi^4|
\le Cn^2.\label{eq:app-d-display-004}
\end{align}
Expand $s_n=\sum_k\ip{s_n}{\varphi_k}\varphi_k$ in the orthonormal
plant eigenbasis. Applying $\Aop-\mu_n$ removes its component along
$\varphi_n$, and orthogonality with the spectral gap gives
\begin{align}
(\Aop-\mu_n)s_n&=\;\sum_{k\ne n}(\mu_k-\mu_n)
\ip{s_n}{\varphi_k}\varphi_k,\label{eq:sine-residual-expansion}\\
\|(\Aop-\mu_n)s_n\|^2
&=\;\sum_{k\ne n}|\mu_k-\mu_n|^2|\ip{s_n}{\varphi_k}|^2
\ge\;\mathfrak g_n^2\sum_{k\ne n}|\ip{s_n}{\varphi_k}|^2.
\label{eq:sine-residual-gap}
\end{align}
The sum on the right is the squared norm of the component of $s_n$
orthogonal to $\varphi_n$. Taking square roots, using the residual bound
\eqref{eq:app-d-display-004} and $\mathfrak g_n\ge\pi^4n^3$, gives
\begin{align}
\|s_n-\ip{s_n}{\varphi_n}\varphi_n\|
&\le\;\frac{\|(\Aop-\mu_n)s_n\|}{\mathfrak g_n}
\le\frac{Cn^2}{\pi^4n^3}\le Cn^{-1}.
\label{eq:sine-projection-error}
\end{align}
Since $\|s_n\|=1$, orthogonality also gives
\begin{align}
|\ip{s_n}{\varphi_n}|^2
&=\;1-\|s_n-\ip{s_n}{\varphi_n}\varphi_n\|^2
\ge1-C^2n^{-2}.
\label{eq:app-d-display-006}
\end{align}
The chosen sign of $\varphi_n$ therefore gives
\begin{align}
\|\varphi_n-s_n\|&\le\;Cn^{-1}.\label{eq:app-d-display-007}
\end{align}

To estimate the derivatives, set $v=\varphi_n-s_n$. The eigenvalue equations for $\varphi_n$ and $s_n$ are
\begin{align}
(A_0+P)\varphi_n=\mu_n\varphi_n,\qquad
A_0s_n=-n^4\pi^4s_n.\label{eq:eigenvalue-comparison}
\end{align}
Subtracting the second equation from the first and substituting
$\varphi_n=v+s_n$ gives
\begin{align}
A_0v&=\;\mu_nv+(\mu_n+n^4\pi^4)s_n-P\varphi_n.\label{eq:app-d-display-008}
\end{align}
Estimates \eqref{eq:control-elliptic}, \eqref{eq:mu-asymp}, and \eqref{eq:app-d-display-003}, with interpolation and $\|\varphi_n\|=1$, give
\begin{align}
\|\varphi_n\|_{H^4}
&\le\;C\big(\|\Aop\varphi_n\|+\|\varphi_n\|\big)
=C(1+|\mu_n|)\le Cn^4,\label{eq:eigenfunction-H4-bound}\\
\|\varphi_n\|_{H^2}
&\le\;C\|\varphi_n\|^{1/2}\|\varphi_n\|_{H^4}^{1/2}
\le Cn^2,\label{eq:eigenfunction-H2-bound}\\
\|P\varphi_n\|
&\le\;C\|\lambda\|_{W^{1,\infty}}\|\varphi_n\|_{H^2}
\le Cn^2.\label{eq:app-d-display-009}
\end{align}
Combining these bounds with \eqref{eq:app-d-display-007} gives
\begin{align}
\|A_0v\|&\le\;Cn^4n^{-1}+Cn^2\le Cn^3,
\qquad \|v\|_{H^4}\le C(\|A_0v\|+\|v\|)\le Cn^3.\label{eq:app-d-display-010}
\end{align}
Interpolation gives $\|v\|_{H^s}\le Cn^{s-1}$ for $0\le s\le4$.
To obtain pointwise bounds, we apply the one-dimensional
Gagliardo--Nirenberg inequality \cite[p.~233, Eq.~(42)]{Brezis2011}
to $v^{(j)}$. Each derivative $v^{(j)}$, $0\le j\le3$, either vanishes
at both endpoints or has zero mean. Thus Poincar\'e's inequality gives
$\|v^{(j)}\|\le C\|v^{(j+1)}\|$, allowing the additional $L^2$ term
in the Gagliardo--Nirenberg inequality to be absorbed into the product below:
\begin{align}
\|v^{(j)}\|_{L^\infty}
&\le\;C\|v^{(j)}\|^{1/2}\|v^{(j+1)}\|^{1/2}
\le Cn^{j-1/2},\qquad 0\le j\le3.\label{eq:app-d-display-012}
\end{align}

It remains to bound the boundary traces. The trace coefficients can be written as
\begin{align}
\beta_n=s_n'(0)+v'(0),\qquad
\alpha_n=s_n'''(0)+v'''(0)
+\lambda(0)\big(s_n'(0)+v'(0)\big).\label{eq:app-d-display-013}
\end{align}
Since
\begin{align}
s_n'(0)&=\;\sqrt2\,n\pi,
\qquad s_n'''(0)=-\sqrt2(n\pi)^3,\label{eq:app-d-display-014}
\end{align}
\eqref{eq:app-d-display-012} gives the high-mode bounds in
\eqref{eq:trace-bounds}. For the finitely many remaining modes, the upper
bounds follow from the elliptic estimate and the trace inequality after
enlarging the constant. This completes (iii).
\end{proof}

\subsubsection{Resolvent and input coefficients under pre-feedback}\label{sssec:prefeedback-resolvent}
We now give the resolvent and adjoint of the pre-compensated
operator and characterize its spectrum. Importantly, given the simplicity conditions, we showcase that the curvature-input coefficients are non-zero for eigenvalues moved by the pre-feedback
and those retained from double plant eigenvalues.

\begin{lemma}[Pre-feedback resolvent, spectrum, and input coefficients]\label{lem:spectrum}
Let $\lambda\in W^{1,\infty}(0,1)$, $\varepsilon\in\R$, and $g\in\LL$ be real.
For $z\notin\sigma(\Aop)$, define the boundary lifting and secular function by
\begin{align}
\rho(z):=\sum_{n\ge1}\frac{\alpha_n}{\mu_n-z}\varphi_n,\qquad
\Xi(z):=1-\varepsilon\ip{g}{\rho(z)}
=1-\varepsilon\sum_{n\ge1}\frac{g_n\alpha_n}{\mu_n-z}.
\label{eq:secular-function}
\end{align}
Then the following hold.
\begin{enumerate}[label=(\roman*)]
\item $\Ap$ is closed and densely defined, and for $z\notin\sigma(\Aop)$ with $\Xi(z)\ne0$,
\begin{align}
(\Ap-z)^{-1}f&=\;(\Aop-z)^{-1}f+\frac{\varepsilon\ip{g}{(\Aop-z)^{-1}f}}{\Xi(z)}\,\rho(z),\qquad f\in\LL .\label{eq:resolvent}
\end{align}
The plant resolvent is compact, and the correction has rank at most one.
Hence $\Ap$ has compact resolvent.
\item The adjoint has domain $D(\Ap{}^*)=D(\Aop)$ and, for
$\phi\in D(\Aop)$, acts as
\begin{align}
\Ap{}^*\phi&=\;\Aop\phi-\varepsilon\,\qd\phi(0)\,g.\label{eq:app-d-display-016}
\end{align}
\item $z\in\sigma(\Ap)$ if and only if one of the following holds:
\begin{enumerate}[label=(\alph*)]
\item $z\notin\sigma(\Aop)$ and $\Xi(z)=0$.
\item $z\in\sigma(\Aop)$ and $E_z\cap g^\perp\ne\{0\}$.
\item $z\in\sigma(\Aop)$ and $\Xi$ is regular at $z$. For $\varepsilon\ne0$,
this is equivalent to $\sum_{\mu_n=z}g_n\alpha_n=0$.
\end{enumerate}
Here $E_z$ is the eigenspace of $\Aop$ at $z$, and $g^\perp$ consists
of the functions in $\LL$ orthogonal to $g$.
\item Assume $\varepsilon\ne0$. Let $\mu'$ be real, with $\mu'\notin\sigma(\Aop)$ and $\Xi(\mu')=0$.
Then $\mu'$ is algebraically simple if and only if $\Xi'(\mu')\ne0$.
Moreover, when $\Xi'(\mu')\ne0$, the eigenvectors $\chi_{\mu'}$ and $\tilde\chi_{\mu'}$
of the pre-compensated operator $\Ap$ and its adjoint $(\Ap)^*$,
respectively, associated with $\mu'$ and normalized by
$\ip{\chi_{\mu'}}{\tilde\chi_{\mu'}}=1$, are given by
\begin{align}
\chi_{\mu'}&:=\;\rho(\mu'),\qquad
\tilde\chi_{\mu'}:=-\frac{\varepsilon}{\Xi'(\mu')}(\Aop-\mu')^{-1}g.
\label{eq:app-d-display-019}
\end{align}
The effect of the curvature input on this mode is given by the coefficient
\begin{align}
b'_{\mu'}&:=\;\ip{B_2}{\tilde\chi_{\mu'}}=-\tilde\chi_{\mu'}'(0)
=\frac{\varepsilon}{\Xi'(\mu')}\sum_{n\ge1}\frac{g_n\beta_n}{\mu_n-\mu'}.
\label{eq:bprime-moved}
\end{align}
\item Assume $\varepsilon\ne0$. Let $\mu$ be a double eigenvalue of $\Aop$ with orthonormal eigenfunctions $\varphi_j,\varphi_k$, and define
\begin{align}
\mathsf r_\mu&:=\;g_j\alpha_j+g_k\alpha_k.\label{eq:app-d-display-021}
\end{align}
If $\mathsf r_\mu\ne0$, then $\mu$ is an algebraically simple eigenvalue of $\Ap$ with
\begin{align}
\chi_\mu&=\;g_k\varphi_j-g_j\varphi_k,\qquad \tilde\chi_\mu=\frac{\alpha_k\varphi_j-\alpha_j\varphi_k}{\mathsf r_\mu},\qquad b'_\mu=-\frac{\beta_j\alpha_k-\beta_k\alpha_j}{\mathsf r_\mu}\ne0 .\label{eq:app-d-display-083}
\end{align}
If $\mathsf r_\mu=0$, then $\mu$ has algebraic multiplicity at least two.
\end{enumerate}
\end{lemma}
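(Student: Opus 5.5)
The whole lemma should follow from writing $\Ap$ as a rank-one boundary perturbation of $\Aop$ and reading everything off the resolvent. I will work with the secular function $\Xi$ throughout.

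\emph{Part (i).} Fix $f\in\LL$ and $z\notin\sigma(\Aop)$. I look for $\phi=(\Aop-z)^{-1}f+c\,\rho(z)$ with a scalar $c$. The lifting $\rho(z)$ solves $(-\partial^4-\partial(\lambda\partial)-z)\rho=0$ with hinged conditions, except that $\rho(0)=1$. To check this, pair $\rho(z)$ against $\varphi_n$ and apply Lemma~\ref{lem:ibp} with $\psi=\rho$ and $\phi=\varphi_n$. This gives $\ip{\rho}{\varphi_n}(\mu_n-z)=\psi(0)\alpha_n$, which is exactly the definition of $\rho$ once $\psi(0)=1$. Equivalently, I define $\rho$ as the unique $H^4$ solution of that boundary value problem, which exists because $z\notin\sigma(\Aop)$, and then identify its expansion.

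The nonlocal condition $\phi(0)=\varepsilon\ip{g}{\phi}$ then reads $c=\varepsilon\ip{g}{(\Aop-z)^{-1}f}+c\,\varepsilon\ip{g}{\rho}$. This solves to $c=\varepsilon\ip{g}{(\Aop-z)^{-1}f}/\Xi(z)$, which is \eqref{eq:resolvent}. Uniqueness holds because any kernel element has the form $c\rho$ with $c\,\Xi(z)=0$. Closedness follows once the resolvent set is nonempty, and $\Xi(z)\to1$ as $z\to-\infty$, so it is. Density holds because $D(\Ap)$ contains $C_c^\infty(0,1)\cap g^\perp$ together with suitable corrections. Compactness is immediate: the first term is compact and the correction has rank at most one.

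\emph{Part (ii).} Take $\phi\in D(\Aop)$ and $v\in D(\Ap)$ and integrate by parts using Lemma~\ref{lem:ibp}. The only surviving boundary term is $v(0)\,\qd\phi(0)=\varepsilon\ip{v}{g}\qd\phi(0)$. This gives $\ip{\Ap v}{\phi}=\ip{v}{\Aop\phi-\varepsilon\qd\phi(0)g}$, so $D(\Aop)\subset D(\Ap{}^*)$ with the stated action. For the reverse inclusion I note that $\Aop-\varepsilon\qd{(\cdot)}(0)g$ is a relatively bounded rank-one perturbation of $\Aop$. It is therefore closed on $D(\Aop)$, and for $z$ in the common resolvent set both it and $\Ap{}^*$ are surjective minus $\bar z$. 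An injectivity comparison then gives equality of the domains.

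\emph{Part (iii).} For $z\notin\sigma(\Aop)$, the argument in (i) shows $z\in\sigma(\Ap)$ iff $\Xi(z)=0$. For $z\in\sigma(\Aop)$, I write $\phi=\phi_0+c\rho$ locally and decompose along $E_z$ and its complement. Two cases give an eigenvector. In the first, some $\phi\in E_z$ satisfies $\ip{g}{\phi}=0$, so $\phi\in D(\Ap)$; this is case (b). In the second, the $E_z$ part of the boundary equation is solvable, which is the condition $\sum_{\mu_n=z}g_n\alpha_n=0$ making the pole of $\Xi$ removable; this is case (c). I expect the bookkeeping in (c) to be the main obstacle, because one must show these are the only possibilities. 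My plan there is to solve $(\Aop-z)\phi_0=-c(\dots)$ on $E_z^\perp$ via Fredholm alternative applied to the lifting. With $\rho_z$ denoting the lifting built from the modes $\mu_n\neq z$, solvability then forces $c\sum_{\mu_n=z}\alpha_n\varphi_n=0$ or an element of $E_z\cap g^\perp$.

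\emph{Part (iv).} I verify $\Ap\rho(\mu')=\mu'\rho(\mu')$ directly. For the adjoint, I apply $(\Ap{}^*-\mu')$ to $(\Aop-\mu')^{-1}g$ using \eqref{eq:app-d-display-016}; this yields $g-\varepsilon\ip{g}{\rho}g=\Xi g=0$, since $\qd{((\Aop-\mu')^{-1}g)}(0)=\ip{g}{\rho(\mu')}$ by the modal expansion and conjugate symmetry. The normalization comes from $\ip{\rho}{(\Aop-\mu')^{-1}g}=\sum g_n\alpha_n/(\mu_n-\mu')^2=\Xi'(\mu')/\varepsilon\cdot(-1)$, which fixes the factor $-\varepsilon/\Xi'$. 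Algebraic simplicity is equivalent to $\ip{\chi}{\tilde\chi}\ne0$, and hence to $\Xi'\ne0$. Finally, $b'=-\tilde\chi'(0)$ evaluated from the expansion gives \eqref{eq:bprime-moved}.

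\emph{Part (v).} Check directly that $\chi_\mu\in E_\mu\cap g^\perp$ and that $\Ap{}^*\tilde\chi_\mu=\mu\tilde\chi_\mu$, since $\qd{\tilde\chi_\mu}(0)=(\alpha_k\alpha_j-\alpha_j\alpha_k)/\mathsf r_\mu=0$. Then $\ip{\chi_\mu}{\tilde\chi_\mu}=(g_k\alpha_k+g_j\alpha_j)/\mathsf r_\mu=1$, so $\mu$ is simple; the geometric multiplicity is one because $E_\mu\cap g^\perp$ is one-dimensional when $\mathsf r_\mu\neq0$ forces $g|_{E_\mu}\neq0$. The value of $b'_\mu$ is immediate, and it is nonzero by Lemma~\ref{lem:plant-spectral}(ii). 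If $\mathsf r_\mu=0$, case (c) adds a second eigenvector or Jordan vector, so the multiplicity is at least two.
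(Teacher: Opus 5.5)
\paragraph{Gap in (iii).} The gap is in the direction ``(c) implies $z\in\sigma(\Ap)$''. You identify your second case, an eigenvector $\phi$ with $c=\phi(0)\ne0$, with the residue condition $\sum_{\mu_n=z}g_n\alpha_n=0$. That identification is wrong.

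Lemma~\ref{lem:ibp} gives the modal equations $(\mu_m-z)\phi_m=\alpha_m\phi(0)$, with $\phi_m=\ip{\phi}{\varphi_m}$. At the indices with $\mu_m=z$ these force $\alpha_m=0$ for \emph{every} such $m$. This is what your own closing sentence says ($c\sum_{\mu_n=z}\alpha_n\varphi_n=0$), and it is strictly stronger than vanishing of the residue. Your dichotomy therefore proves only the ``only if'' direction. Two examples where your second case produces no eigenvector even though the residue vanishes:
\begin{itemize}
\item a double eigenvalue with $\mathsf r_\mu=0$, where $(\alpha_j,\alpha_k)$ is never zero by Lemma~\ref{lem:plant-spectral}(ii);
\item a simple $\mu_n$ with $g_n=0\ne\alpha_n$.
\end{itemize}

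The missing step is the case split the paper uses:
\begin{itemize}
\item If $\dim E_z=2$, then $E_z\cap g^\perp\ne\{0\}$ by a dimension count, so (b) holds automatically.
\item If $z=\mu_n$ is simple, $g_n\alpha_n=0$ means either $g_n=0$, which is (b), or $\alpha_n=0$ and $g_n\ne0$. In the latter case set $\phi_m=\alpha_mc/(\mu_m-\mu_n)$ for $m\ne n$, and choose the free coefficient $\phi_n$, using $\varepsilon g_n\ne0$, so that $c=\varepsilon\ip{g}{\phi}$.
\end{itemize}

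\paragraph{The remaining parts.} Outside (iii) the proposal works, and in (ii), (iv) and (v) it takes a different route from the paper. The paper reads everything off \eqref{eq:resolvent}:
\begin{itemize}
\item the range of the adjoint resolvent gives $D(\Ap{}^*)$;
\item a simple pole with rank-one residue gives (iv);
\item the explicit Riesz projection with trace $2-1$ gives (v).
\end{itemize}
You instead verify eigenvectors directly. You then use that, when $\ker(\Ap-z)$ is one-dimensional and $\Ap-z$ is Fredholm of index zero (by the compact resolvent), algebraic simplicity is equivalent to $\ip{\chi}{\tilde\chi}\ne0$. This is more elementary.

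However, you should state where geometric simplicity comes from. In (v), eigenvectors with $\phi(0)\ne0$ are excluded because they would need $\alpha_j=\alpha_k=0$, hence $\mathsf r_\mu=0$.

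The same criterion makes the case $\mathsf r_\mu=0$ precise, where your remark that ``case (c) adds a second eigenvector or Jordan vector'' is only a heuristic:
\begin{itemize}
\item if $P_\mu g=0$, all of $E_\mu$ lies in $g^\perp$;
\item otherwise $\alpha_k\varphi_j-\alpha_j\varphi_k$ is still an adjoint eigenvector, it pairs to $\mathsf r_\mu=0$ with $g_k\varphi_j-g_j\varphi_k$, and this forces a Jordan vector.
\end{itemize}

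\paragraph{Sign slip in (i).} $\sigma(\Aop)$ accumulates at $-\infty$, so $\Xi$ has poles along the negative axis and cannot tend to $1$ as $z\to-\infty$. Take real $z\to+\infty$ instead, where $\|\rho(z)\|\to0$ by \eqref{eq:mu-asymp} and \eqref{eq:trace-bounds}.
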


\begin{proof}[Proof of Lemma~\ref{lem:spectrum}]
We first prove (i). We add a boundary lifting to the plant resolvent
to enforce the pre-feedback condition. Fix $z\notin\sigma(\Aop)$.
The lifting $\rho(z)$ is the $H^4$ solution of the boundary-value problem
\begin{align}
-\rho(z)''''-(\lambda\rho(z)')'&=\;z\rho(z),\qquad
\rho(z)(0)=\;1,\qquad \rho(z)(1)=\rho(z)''(0)=\rho(z)''(1)=0.
\label{eq:boundary-lifting-homogeneous}
\end{align}
Since $z\notin\sigma(\Aop)$, this problem has a unique solution.
Multiplying it by a scalar changes only its left displacement.

Lemma~\ref{lem:ibp} gives
$(\mu_n-z)\ip{\rho(z)}{\varphi_n}=\alpha_n$, yielding the series in
\eqref{eq:secular-function}. The series converges in $L^2$, while its
boundary traces are supplied by the boundary-value problem.

For $\phi\in D(\Ap)$, Lemma~\ref{lem:ibp} with test function $\varphi_n$,
the boundary conditions $\phi''(0)=0$ and
$\phi(0)=\varepsilon\ip{g}{\phi}$, and
$\qd{\varphi_n}(0)=\alpha_n$ give
\begin{align}
\ip{\Ap\phi}{\varphi_n}&=\;\mu_n\ip{\phi}{\varphi_n}
-\varepsilon\alpha_n\ip{g}{\phi},\qquad n\ge1.
\label{eq:Aprime-modal}
\end{align}

For the equation $(\Ap-z)\psi=f$, set $p=\ip{g}{\psi}$.
The modal identity \eqref{eq:Aprime-modal} gives
\begin{align}
(\mu_n-z)\psi_n-\varepsilon\alpha_np&=\;f_n.\label{eq:app-d-display-023}
\end{align}
Solving in the plant basis and then applying the observation yields
\begin{align}
\psi&=\;(\Aop-z)^{-1}f+\varepsilon p\rho(z),
\qquad p\Xi(z)=\ip{g}{(\Aop-z)^{-1}f}.\label{eq:app-d-display-024}
\end{align}
This proves \eqref{eq:resolvent} when $\Xi(z)\ne0$. Conversely,
\eqref{eq:boundary-lifting-homogeneous} gives the homogeneous boundary
traces, while $\psi(0)=\varepsilon p=\varepsilon\ip{g}{\psi}$,
so the formula satisfies the domain conditions.

The eigenvalue and trace bounds \eqref{eq:mu-asymp} and
\eqref{eq:trace-bounds} give $\|\rho(z)\|\to0$ as real $z\to+\infty$,
so $\Xi(z)\to1$. Thus the resolvent formula applies for sufficiently
large real $z$.
The resolvent is compact as the sum of a compact operator and a rank-one
operator. Its boundedness and full domain also show that $\Ap$ is closed.

To verify density of the domain, approximate an arbitrary $f\in\LL$ by
$\phi_0\in D(\Aop)$ with error at most $\delta$. Choose a smooth boundary
lifting $\psi_\delta$ supported near zero such that
\begin{align}
\psi_\delta(0)&=\;1,\qquad \psi_\delta''(0)=0,
\qquad \psi_\delta(1)=\psi_\delta''(1)=0,
\qquad \|\psi_\delta\|\le\delta.\label{eq:app-d-display-026}
\end{align}
Such a lifting is obtained by rescaling a fixed smooth bump. Set
\begin{align}
\phi&=\;\phi_0+c\psi_\delta,
\qquad
c=\frac{\varepsilon\ip{g}{\phi_0}}{1-\varepsilon\ip{g}{\psi_\delta}}.
\label{eq:app-d-display-027}
\end{align}
For small $\delta$, the denominator stays away from zero and $c$ remains
bounded. Thus $\phi\in D(\Ap)$ and
$\|f-\phi\|\le\delta+|c|\delta\to0$, proving density of $D(\Ap)$
and completing (i).

We next prove (ii), identifying the adjoint and its domain. For $\phi\in D(\Ap)$ and $\zeta\in D(\Aop)$, the boundary identity gives
\begin{align}
\ip{\Ap\phi}{\zeta}
&=\;\ip{\phi}{\Aop\zeta}
 -\varepsilon\ip{g}{\phi}\sum_n\alpha_n\zeta_n.\label{eq:app-d-display-029}
\end{align}
The eigenfunction expansion of $\zeta$ converges in the graph norm and
hence in $H^4$, giving $\sum_n\alpha_n\zeta_n=\qd\zeta(0)$.
The series also converges absolutely, since the trace coefficients grow
like $n^3$ and $(n^4\zeta_n)_n\in\ell^2$. This proves the adjoint
formula and the inclusion $D(\Aop)\subseteq D(\Ap{}^*)$.
For the reverse domain inclusion, take the adjoint
of \eqref{eq:resolvent}:
\begin{align}
\big((\Ap-z)^{-1}\big)^*
&=\;(\Aop-\bar z)^{-1}
 +\frac{\varepsilon}{\overline{\Xi(z)}}
  (\Aop-\bar z)^{-1}g\,\ip{\cdot}{\rho(z)}.\label{eq:app-d-display-031}
\end{align}
Its range is contained in $D(\Aop)$ and equals $D(\Ap{}^*)$.
This proves (ii).

We next prove (iii). If $\varepsilon=0$, then $\Ap=\Aop$ and
$\Xi\equiv1$, so case (c) gives exactly the plant spectrum.
Assume henceforth that $\varepsilon\ne0$. Since $\Ap$ has compact
resolvent, it suffices to characterize its eigenvalues.
Set $f=0$ in \eqref{eq:app-d-display-023} and recall that
$p=\ip{g}{\psi}$.

For $z\notin\sigma(\Aop)$, \eqref{eq:app-d-display-024} gives
$\psi=\varepsilon p\rho(z)$ and $p\Xi(z)=0$, so a nonzero eigenvector
requires $\Xi(z)=0$. Conversely, since $\rho(z)(0)=1$, the condition
$\Xi(z)=0$ is precisely $1=\varepsilon\ip{g}{\rho(z)}$.
Thus $\rho(z)$ satisfies the pre-feedback boundary condition and is an
eigenfunction of $\Ap$. This proves case (a).

At a plant eigenvalue, every nonzero $\psi\in E_z\cap g^\perp$
satisfies the pre-feedback condition and remains an eigenfunction,
proving case (b). By Lemma~\ref{lem:plant-spectral}(i), the plant
eigenvalue is either simple or double. At a double eigenvalue,
$E_z\cap g^\perp$ is necessarily nontrivial. At a simple eigenvalue
$z=\mu_n$, $g_n=0$ also gives case (b).

It remains to consider a simple plant eigenvalue with $g_n\ne0$.
The $n$th modal equation gives $\varepsilon\alpha_np=0$.
If $\alpha_n\ne0$, then $p=0$, all coefficients except $\psi_n$
vanish, and $p=g_n\psi_n$ forces $\psi=0$.
If $\alpha_n=0$, the coefficient along $\varphi_n$ remains free.
For $m\ne n$, the modal equations give
$\psi_m=\varepsilon\alpha_mp/(\mu_m-\mu_n)$, and $\psi_n$ can be
chosen to satisfy $p=\ip{g}{\psi}$ for any nonzero $p$.
Thus a nonzero eigenvector exists precisely when $\alpha_n=0$,
which is the pole-cancellation condition $g_n\alpha_n=0$ in case (c).
This exhausts the possibilities and proves (iii).

We next prove (iv). At a real root $\mu'$ outside the plant spectrum, the right eigenvector
is proportional to $\rho(\mu')$. Substituting the adjoint formula from
part (ii) into
$(\Ap)^*\tilde\chi_{\mu'}=\mu'\tilde\chi_{\mu'}$ gives
\begin{align}
(\Aop-\mu')\tilde\chi_{\mu'}
&=\;\varepsilon\qd{\tilde\chi_{\mu'}}(0)g,\label{eq:app-d-display-035}
\end{align}
so an unnormalized adjoint eigenvector is
\begin{align}
\zeta&:=\;(\Aop-\mu')^{-1}g
=\sum_n\frac{g_n}{\mu_n-\mu'}\varphi_n.
\label{eq:app-d-display-017}
\end{align}
The consistency condition is precisely $\Xi(\mu')=0$, since
\begin{align}
\qd{[(\Aop-\mu')^{-1}g]}(0)
&=\;\sum_n\frac{g_n\alpha_n}{\mu_n-\mu'}
=\varepsilon^{-1}.\label{eq:app-d-display-036}
\end{align}
The pairing of these two eigenvectors is
\begin{align}
\ip{\rho(\mu')}{(\Aop-\mu')^{-1}g}
&=\;\sum_n\frac{g_n\alpha_n}{(\mu_n-\mu')^2}
=-\varepsilon^{-1}\Xi'(\mu').\label{eq:app-d-display-037}
\end{align}
By \eqref{eq:resolvent}, the resolvent has a simple pole with a rank-one
spectral projection exactly when this pairing is nonzero. Normalization
and the derivative trace give \eqref{eq:app-d-display-019}--\eqref{eq:bprime-moved},
proving (iv).

Finally, we prove (v). Suppose first that $\mathsf r_\mu\ne0$. The vector $\chi_\mu\in E_\mu\cap g^\perp$ remains an eigenvector by
part (iii). Expanding \eqref{eq:secular-function} at $\mu$ gives
\begin{align}
\Xi(z)&=\;-\frac{\varepsilon\mathsf r_\mu}{\mu-z}+O(1),
\qquad \Xi(z)^{-1}=O(\mu-z).\label{eq:app-d-display-038}
\end{align}
Writing $\hat\alpha=\alpha_j\varphi_j+\alpha_k\varphi_k$, the spectral
projection obtained from \eqref{eq:resolvent} is
\begin{align}
&P_\mu-\mathsf r_\mu^{-1}
 (P_\mu\hat\alpha)\ip{P_\mu g}{\cdot}.\label{eq:app-d-display-040}
\end{align}
This projection has rank one, as its trace is $2-1$. Thus the retained
eigenvalue is algebraically simple and has the eigenvectors stated in
(v). Its curvature coefficient is nonzero by the injectivity of the boundary-trace map in
Lemma~\ref{lem:plant-spectral}(ii). When $\mathsf r_\mu=0$, the pole in $\Xi$ is
canceled and the eigenvalue has algebraic multiplicity at least two.
This completes (v).
\end{proof}

\subsubsection{Pre-feedback preserves the simple high modes}\label{sssec:prefeedback-high}
We now show that pre-feedback preserves simplicity and nonzero curvature
coefficients at sufficiently high modes. For a fixed profile $g$, the
cutoff can be chosen uniformly for $|\varepsilon|\le1$. This leaves only
finitely many modes to consider when the amplitude is subsequently reduced.

\begin{lemma}[High modes]\label{lem:high}
Let $\lambda\in W^{1,\infty}$, $g\in\LL$, $\varepsilon\in\R$. There exist an index $N_0$ and a sequence $\vartheta_n\to0$, both depending only on $\|\lambda\|_{W^{1,\infty}}$, $\|g\|$, $|\varepsilon|$ and the tails $\sum_{m\ge N}|g_m|^2$, such that for every $n\ge N_0$ the following holds: the disk $\{|z-\mu_n|\le\mathfrak g_n/2\}$ contains exactly one eigenvalue $\mu_n'$ of $\Ap$; it is real and algebraically simple; and there are eigenvectors $\chi_n$ of $\Ap$ and $\tilde\chi_n$ of $\Ap{}^*$, normalized by $\ip{\chi_n}{\tilde\chi_n}=1$, with
\begin{equation}\label{eq:high-est}
\begin{gathered}
|\mu_n'-\mu_n|\le 2|\varepsilon|\,|g_n|\,|\alpha_n|,\qquad \|\chi_n-\varphi_n\|\le C|\varepsilon|\,|g_n|,\\
b_n':=\ip{B_2}{\tilde\chi_n}=-\beta_n(1+\vartheta_n),\qquad\vartheta_n\to0 .
\end{gathered}
\end{equation}
Moreover $\Ap$ has no eigenvalue with $|z|\ge|\mu_{N_0}|+\mathfrak g_{N_0}/2$ outside these disks, and only finitely many eigenvalues elsewhere.
\end{lemma}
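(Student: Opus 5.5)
The plan is to localize the spectrum of $\Ap$ near each high plant eigenvalue with the secular function $\Xi$ of Lemma~\ref{lem:spectrum} and Rouché's theorem. By Lemma~\ref{lem:plant-spectral}, $\mu_n$ is simple for $n\ge N_1$ and $\mathfrak g_n\ge\pi^4n^3$. On the circle $\Gamma_n=\{|z-\mu_n|=\mathfrak g_n/2\}$ I write
\begin{align*}
\Xi(z)=1-\frac{\varepsilon g_n\alpha_n}{\mu_n-z}-\varepsilon\sum_{m\ne n}\frac{g_m\alpha_m}{\mu_m-z}.
\end{align*}
First I bound the remainder sum uniformly on $\Gamma_n$. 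By \eqref{eq:gap}, $|\mu_m-z|\ge\frac14\pi^4|m^4-n^4|$ there, and \eqref{eq:trace-bounds} gives $|\alpha_m|\le C_0m^3$. Cauchy--Schwarz with tail splitting then shows
\begin{align*}
\sum_{m\ne n}\frac{|g_m||\alpha_m|}{|\mu_m-z|}\le\delta_n\to0,
\end{align*}
where $\delta_n$ depends only on $\|g\|$, the tails of $(g_m)$ and $\|\lambda\|_{W^{1,\infty}}$. The terms with $m$ near $n$ contribute at most $n^3/n^3$ times the tail norm. The far terms contribute $\|g\|$ times $\big(\sum m^6/|m^4-n^4|^2\big)^{1/2}=O(n^{-1/2})$. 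The pole term is bounded by $2|\varepsilon||g_n||\alpha_n|/\mathfrak g_n\le C|\varepsilon||g_n|\to0$.

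Second, I count eigenvalues. I compare $(\mu_n-z)\Xi(z)$ with $\mu_n-z$ on $\Gamma_n$. For $n\ge N_0$ with $|\varepsilon|\delta_n+C|\varepsilon||g_n|<1/2$, Rouché's theorem gives exactly one zero inside the disk. If $g_n\alpha_n\ne0$, this zero is not a plant eigenvalue and lies in case (a) of Lemma~\ref{lem:spectrum}(iii). If $g_n\alpha_n=0$, then $\mu_n$ itself is retained via case (b) or (c). Realness of the eigenvalue follows from conjugate symmetry and uniqueness in the disk. Simplicity follows because a simple zero of $(\mu_n-z)\Xi$ gives $\Xi'\ne0$, and I then apply Lemma~\ref{lem:spectrum}(iv). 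The location bound comes from the root equation $\mu_n'-\mu_n=\varepsilon g_n\alpha_n/\big(1-\varepsilon\sum_{m\ne n}\cdots\big)$, which gives the factor $2$.

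Third, I build the eigenvectors by normalizing $\rho(\mu_n')$ and $(\Aop-\mu_n')^{-1}g$ according to \eqref{eq:app-d-display-019}. After rescaling, $\chi_n=\varphi_n+\sum_{m\ne n}\frac{\varepsilon\alpha_m g\text{-weighted terms}}{\mu_m-\mu_n'}\varphi_m$. More cleanly, I take $\chi_n=\varphi_n+\varepsilon p\,(\text{off-diagonal part})$ with $\psi_m=\varepsilon\alpha_mp/(\mu_m-\mu_n')$. The same sums bound $\|\chi_n-\varphi_n\|\le C|\varepsilon||g_n|$, using $p=\ip{g}{\chi_n}\approx g_n$. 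For the adjoint vector, $\tilde\chi_n\propto\varphi_n+\sum_{m\ne n}\frac{g_m}{g_n}\frac{\mu_n-\mu_n'}{\mu_m-\mu_n'}\varphi_m$. Its derivative trace is $\beta_n(1+\vartheta_n)$, because $\sum_{m\ne n}|g_m||\beta_m||\alpha_n|/|\mu_m-\mu_n'|\to0$ by the same kind of estimate with $\beta_m\sim m$, and the normalization factor tends to $1$. The $g_n=0$ case I treat separately: there $\tilde\chi_n=\varphi_n$ exactly. The main obstacle I expect is making $\vartheta_n\to0$ uniform when $g_n$ is tiny, since the ratio $g_m/g_n$ appears. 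I would handle it by keeping $(\mu_n-\mu_n')$ in the numerator, which is $O(|\varepsilon g_n\alpha_n|)$ and cancels the $g_n^{-1}$.

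Finally, I localize the rest of the spectrum. Outside $\bigcup_{n\ge N_0}$ disks with $|z|$ large, the same tail estimate gives $|\Xi(z)-1|<1$, so $\Xi$ has no zeros there; these $z$ also avoid the plant spectrum, so they are not eigenvalues. The remaining bounded region contains finitely many eigenvalues because the resolvent is compact.
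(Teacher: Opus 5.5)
Your overall route matches the paper's. You apply Rouch\'e to $(\mu_n-z)\Xi(z)$ against $\mu_n-z$ on $|z-\mu_n|=\mathfrak g_n/2$, use the root equation for the displacement bound, take the eigenvector formulas of Lemma~\ref{lem:spectrum}(iv) normalized at the $n$th coefficient, and rescale by $(\mu_n-\mu_n')/g_n$ to keep the trace estimate uniform as $g_n\to0$. Your pointwise bound $|\Xi-1|<1$ off the disks is a legitimate, slightly simpler substitute for the paper's counting argument on large circles. For it you need a uniform bound on $\sum_m m^6/|\mu_m-z|^2$ there, obtained by comparing with the plant eigenvalue nearest to $z$.

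There is, however, a false step in the retained case: when $g_n=0$ you claim $\tilde\chi_n=\varphi_n$ exactly. By Lemma~\ref{lem:spectrum}(ii), $\Ap{}^*\varphi_n=\mu_n\varphi_n-\varepsilon\alpha_n g$. Since $\alpha_n\ne0$ for high modes, $\varphi_n$ is an adjoint eigenvector only if $\varepsilon g=0$. Solving $(\Aop-\mu_n)\tilde\chi=\varepsilon\,\qd{\tilde\chi}(0)\,g$ with $n$th coefficient one gives
\[
\tilde\chi_n=\varphi_n+\frac{\varepsilon\alpha_n}{1-\varepsilon R_n(\mu_n)}\sum_{m\ne n}\frac{g_m}{\mu_m-\mu_n}\varphi_m,\qquad R_n(z):=\sum_{m\ne n}\frac{g_m\alpha_m}{\mu_m-z}.
\]
So $b_n'\ne-\beta_n$ in general. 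The $L^2$ smallness of the correction says nothing about its derivative trace at $0$, so your claim $\vartheta_n\to0$ is unproved in this case.

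The repair lies within your own framework. The displayed vector is exactly the $g_n\to0$ limit of your rescaled adjoint vector, using $(\mu_n-\mu_n')/g_n=\varepsilon\alpha_n/(1-\varepsilon R_n(\mu_n'))$, so one formula covers both cases. At $z=\mu_n\in\sigma(\Aop)$, however, you must verify the adjoint eigen-equation directly rather than quote Lemma~\ref{lem:spectrum}(iv).

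The trace correction is then bounded by $C\eta_n$ with $\eta_n:=n^2\sum_{m\ne n}|g_m|\,m/|m^4-n^4|$. This tends to zero after splitting the sum at $m=n/2$ and $m=2n$. It is this relative quantity, your sum divided by $|\beta_n|\sim n$, that vanishes. The unnormalized sum $\sum_{m\ne n}|g_m||\beta_m||\alpha_n|/|\mu_m-\mu_n'|$ that you wrote grows like $\log n$ when $g_m\sim 1/m$, as it does for $g_\theta$.

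Likewise, Lemma~\ref{lem:spectrum}(iv) gives simplicity only for moved eigenvalues. For the retained $\mu_n$, use \eqref{eq:resolvent}. With $g_n=0$, $\Xi$ is analytic at $\mu_n$ with $\Xi(\mu_n)=1-\varepsilon R_n(\mu_n)\ne0$. Hence the resolvent has a simple pole whose residue has range $\operatorname{span}\{\varphi_n\}$, so the Riesz projection has rank one.
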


\begin{proof}[Proof of Lemma~\ref{lem:high}]
The gap bound \eqref{eq:gap}, established in the proof of
Lemma~\ref{lem:plant-spectral}(iii), and its trace estimates give simplicity
and nonzero curvature traces for high plant modes. We locate the perturbed eigenvalues,
estimate their eigenvectors and traces, and establish spectral separation.

We first locate the eigenvalue in each high-mode disk.
Fix $n\ge N_1$, set $\mathcal D_n:=\{z\in\C:|z-\mu_n|\le\mathfrak g_n/2\}$, and separate the $n$th pole from the secular function:
\begin{align}
R_n(z)&:=\;\sum_{m\ne n}\frac{g_m\alpha_m}{\mu_m-z},
\qquad
(\mu_n-z)\Xi(z)
=(\mu_n-z)-\varepsilon g_n\alpha_n
 -\varepsilon(\mu_n-z)R_n(z).\label{eq:app-d-display-041}
\end{align}
On $\mathcal D_n$, $|\mu_m-z|\ge\tfrac12|\mu_m-\mu_n|$ for $m\ne n$.
Set $\bar R_n:=\sup_{z\in\mathcal D_n}|R_n(z)|$.
The trace and gap estimates, splitting the sum at $m=n/2$, give
\begin{align}
\bar R_n&\le\;C\sum_{m\ne n}\frac{|g_m|m^3}{|m^4-n^4|}
\le C\left(\sum_{m\ge n/2}|g_m|^2\right)^{1/2}
 \left(\sum_{m\ne n}\frac{m^6}{(m^4-n^4)^2}\right)^{1/2}
 +C\|g\|n^{-1/2}.\label{eq:Rbar}
\end{align}
The square sum is uniformly bounded, with terms comparable to
$(16(m-n)^2)^{-1}$ near $m=n$. Thus the high-mode contribution vanishes
with the $\ell^2$ tail. For $m<n/2$, the bound
$|m^4-n^4|\ge15n^4/16$ gives the remaining $O(\|g\|n^{-1/2})$ term,
so $\bar R_n\to0$. Also, $g_n\to0$ and the cubic gap give
\begin{align}
|\varepsilon g_n\alpha_n|
&\le\;C|\varepsilon||g_n|n^3\le\mathfrak g_n/8\label{eq:app-d-display-043}
\end{align}
for sufficiently large $n$. Choose $N_0$ so that
\begin{align}
|\varepsilon|\bar R_n&\le\;\tfrac14,
\qquad |\varepsilon g_n\alpha_n|\le\mathfrak g_n/8,
\qquad n\ge N_0.\label{eq:app-d-display-044}
\end{align}
On the boundary of the disk, these estimates give
\begin{align}
|\varepsilon g_n\alpha_n+\varepsilon(\mu_n-z)R_n(z)|
&\le\;\mathfrak g_n/4<|\mu_n-z|.\label{eq:app-d-display-045}
\end{align}
Rouch\'e's theorem \cite[p.~70]{Naimark} therefore gives exactly one zero of
$(\mu_n-z)\Xi(z)$ in the disk, counted with multiplicity. This zero is real
by conjugation symmetry. If it differs from $\mu_n$, it satisfies
\begin{align}
\mu_n-\mu_n'
&=\;\frac{\varepsilon g_n\alpha_n}{1-\varepsilon R_n(\mu_n')},\label{eq:app-d-display-046}
\end{align}
which gives the eigenvalue-displacement estimate in \eqref{eq:high-est}.
The multiplicity count also makes the zero simple. For a moved eigenvalue,
$\mu_n-\mu_n'\ne0$, so $\Xi'(\mu_n')\ne0$ and
Lemma~\ref{lem:spectrum}(iv) gives algebraic simplicity.

We next estimate the eigenvectors and curvature coefficient when the eigenvalue is retained.
If $g_n\alpha_n=0$, the high-mode trace bound gives $g_n=0$. The zero is
then the retained eigenvalue $\mu_n$, with $\chi_n=\varphi_n$.
The adjoint eigenvector, normalized by its $n$th coefficient, is
\begin{align}
\tilde\chi_n&=\;\varphi_n+
\frac{\varepsilon\alpha_n}{1-\varepsilon R_n(\mu_n)}
\sum_{m\ne n}\frac{g_m}{\mu_m-\mu_n}\varphi_m.\label{eq:app-d-display-048}
\end{align}
Splitting this sum at $m=n/2$ and applying \eqref{eq:gap} gives
\begin{align}
\|\tilde\chi_n-\varphi_n\|
&\le\;C|\varepsilon|
\left[\left(\sum_{m\ge n/2}|g_m|^2\right)^{1/2}
 +\frac{\|g\|}{n}\right]\longrightarrow0.\label{eq:app-d-display-049}
\end{align}
The $L^2$ estimate does not control the boundary derivative.
Taking the trace in \eqref{eq:app-d-display-048} and using
$|\beta_m|\le Cm$ gives
\begin{align}
b_n'&=\;-\beta_n(1+O(\eta_n)),
\qquad
\eta_n:=n^2\sum_{m\ne n}\frac{|g_m|m}{|m^4-n^4|}.\label{eq:app-d-display-050}
\end{align}
The same low/high splitting shows $\eta_n\to0$. These estimates prove
the asserted bounds for the retained eigenvalue.

For a moved eigenvalue, we obtain the same estimates from the explicit eigenvector formulas.
For $g_n\alpha_n\ne0$, normalize the right eigenvector from
Lemma~\ref{lem:spectrum}(iv) to have $n$th coefficient one,
and denote it by $\chi_n$. Then
\begin{align}
\chi_n&=\;\varphi_n+\sum_{m\ne n}
 \frac{\alpha_m(\mu_n-\mu_n')}
      {\alpha_n(\mu_m-\mu_n')}\varphi_m,\notag\\
\|\chi_n-\varphi_n\|^2
&\le4\varepsilon^2g_n^2
 \sum_{m\ne n}\frac{\alpha_m^2}{(\mu_m-\mu_n')^2}
\le C\varepsilon^2g_n^2.\label{eq:chi-hi}
\end{align}
Here we used the displacement bound, the bounded sum in \eqref{eq:Rbar}, and
\begin{align}
\frac{|\alpha_m|}{|\alpha_n|}&\le\;C(m/n)^3,
\qquad |\mu_m-\mu_n'|\ge\tfrac12|\mu_m-\mu_n|.\label{eq:app-d-display-051}
\end{align}
Write the adjoint eigenvector as
$\tilde\chi_n=\varpi_n\sum_m\frac{g_m}{\mu_m-\mu_n'}\varphi_m$,
where $\varpi_n$ enforces biorthogonality. The corresponding curvature
coefficient $b_n'$ therefore uses a different scaling from
$b'_{\mu_n'}$ in \eqref{eq:bprime-moved}.
Multiplying the numerator and denominator of its trace formula by
$(\mu_n-\mu_n')/g_n$ gives
\begin{align}
\frac{b_n'}{-\beta_n}
&=\;\frac{1+\vartheta_n^{(1)}}{1+\vartheta_n^{(2)}},\label{eq:bratio}
\end{align}
where, with $R_n=R_n(\mu_n')$,
\begin{align}
\vartheta_n^{(1)}
&=\;\frac{\varepsilon\alpha_n}{(1-\varepsilon R_n)\beta_n}
 \sum_{m\ne n}\frac{g_m\beta_m}{\mu_m-\mu_n'},\notag\\
\vartheta_n^{(2)}
&=\;\frac{\varepsilon^2g_n\alpha_n}{(1-\varepsilon R_n)^2}
 \sum_{m\ne n}\frac{g_m\alpha_m}{(\mu_m-\mu_n')^2}.\label{eq:app-d-display-053}
\end{align}
Using $\eta_n$ from \eqref{eq:app-d-display-050} and the splitting used in \eqref{eq:Rbar} gives
\begin{align}
|\vartheta_n^{(1)}|&\le\;C\eta_n\longrightarrow0,\notag\\
|\vartheta_n^{(2)}|
&\le\;C\varepsilon^2|g_n|n^3
 \sum_{m\ne n}\frac{|g_m|m^3}{(m^4-n^4)^2}
\le C\varepsilon^2|g_n|\|g\|\longrightarrow0.\label{eq:app-d-display-054}
\end{align}
Together with the retained-mode estimate \eqref{eq:app-d-display-050}, this gives the sequence $\vartheta_n$ in
\eqref{eq:high-est}.

We now exclude further eigenvalues outside the high-mode disks.
Choose large circles that stay at least $\mathfrak g_n/2$ from each
plant eigenvalue. On such a circle $|z|=R$, split the sum in
\eqref{eq:secular-function} at a fixed index $N$ to obtain
\begin{align}
|\Xi(z)-1|&\le\;|\varepsilon|\sum_{m\le N}
 \frac{|g_m\alpha_m|}{R-|\mu_m|}
 +C|\varepsilon|\left(\sum_{m>N}|g_m|^2\right)^{1/2}.\label{eq:app-d-display-055}
\end{align}
Cauchy--Schwarz and the preceding sum bound give the second term.
Choosing $N$ large, then $R$ large, gives
$\sup_{|z|=R}|\Xi(z)-1|\to0$ along these circles.
Apply Rouch\'e's theorem \cite[p.~70]{Naimark} to the analytic function
$z\mapsto\Xi(z)\prod_{|\mu_m|\le R}(\mu_m-z)$.

Its zeros encode the moved and retained eigenvalues, with algebraic
multiplicities determined by the resolvent formula \eqref{eq:resolvent}
and the moved and retained eigenvalue analysis in
Lemma~\ref{lem:spectrum} and its proof. The total count equals the plant count. With one eigenvalue per high-mode disk, this proves the final assertion.

Finally, we record the spectral separation and uniformity needed below.
Since $g_n\to0$ and $|\alpha_n|\le Cn^3$, the displacement bound
\eqref{eq:high-est} gives
\begin{align}
|\mu_n'-\mu_n|&=\;o(n^3),\qquad n\to\infty.
\label{eq:prefeedback-shift-small}
\end{align}
Combining this with the plant gap \eqref{eq:gap}, for distinct
sufficiently large indices we obtain
\begin{align}
|\mu_i'-\mu_j'|
&\ge\;|\mu_i-\mu_j|-|\mu_i'-\mu_i|-|\mu_j'-\mu_j|
\ge\tfrac14\pi^4|i^4-j^4|.
\label{eq:prefeedback-high-gap}
\end{align}
After enlarging $N_0$, this holds for all distinct $i,j\ge N_0$.
For $|\varepsilon|\le1$, replacing $|\varepsilon|$ by $1$ in these
estimates gives a common $N_0$, uniform convergence of $\vartheta_n$,
and uniform displacement bounds and spectral separation.
\end{proof}

\subsubsection{Selection of the pre-feedback parameters}\label{sssec:prefeedback-low}
We now complete part (i) of Theorem~\ref{thm:generic}.
The gap estimate \eqref{eq:gap} and Lemma~\ref{lem:plant-spectral}(iii)
show that only finitely many plant eigenvalues are double or simple with
$\beta_n=0$. We first give conditions on a real profile $g$ that resolve
these modes for sufficiently small nonzero $\varepsilon$. We then show
that one exponential profile satisfies all the conditions.
\begin{enumerate}[label=(\roman*)]
\item For double $\mu=\mu_j=\mu_k$ ($j\ne k$), take an orthonormal basis $\varphi_j,\varphi_k$ of $E_\mu$ and define
\begin{align}
\hat\alpha_\mu&:=\;\alpha_j\varphi_j+\alpha_k\varphi_k,
\qquad
\hat\beta_\mu:=\beta_j\varphi_j+\beta_k\varphi_k.
\label{eq:app-d-display-069}
\end{align}
Require
\begin{align}
\ip{P_\mu g}{\hat\alpha_\mu}&\ne\;0,
\qquad
\ip{P_\mu g}{\hat\beta_\mu}\ne0.
\label{eq:app-d-display-070}
\end{align}
\item If $\mu_n$ is simple and $\beta_n\ne0$, no additional condition
on $g$ is required.
\item If $\mu_n$ is simple and $\beta_n=0$, require
\begin{align}
\zeta_n&:=\;\sum_{m\ne n}\frac{\beta_m}{\mu_m-\mu_n}\varphi_m,
\qquad \ip{g}{\zeta_n}\ne0.\label{eq:low-mode-condition}
\end{align}
\end{enumerate}

We verify these conditions using the resolvent formula
\eqref{eq:resolvent}, which gives convergence of spectral projections
onto small neighborhoods of the plant eigenvalues as $\varepsilon\to0$.

For a double eigenvalue, the first condition in \eqref{eq:app-d-display-070} gives
$\mathsf r_\mu=g_j\alpha_j+g_k\alpha_k\ne0$.
Lemma~\ref{lem:spectrum}(v) therefore gives simplicity
and a nonzero curvature coefficient for the retained eigenvalue $\mu$.
Near $\mu$, the secular equation can be written as
\begin{align}
(\mu-z)\Xi(z)&=\;(\mu-z)\bigl(1-\varepsilon R_\mu(z)\bigr)
-\varepsilon\mathsf r_\mu,\qquad
R_\mu(z):=\sum_{\mu_n\ne\mu}\frac{g_n\alpha_n}{\mu_n-z}.
\label{eq:low-double-secular}
\end{align}
Since $R_\mu$ is analytic near $\mu$, the implicit function theorem gives
a simple eigenvalue $\mu'(\varepsilon)=\mu-\varepsilon\mathsf r_\mu+O(\varepsilon^2)$,
distinct from $\mu$ for small nonzero $\varepsilon$.

Rescale $\chi_{\mu'(\varepsilon)}$ from
Lemma~\ref{lem:spectrum}(iv) by
$\varepsilon\mathsf r_\mu$. It then converges to $\hat\alpha_\mu$,
and its biorthogonal adjoint converges to
$P_\mu g/\mathsf r_\mu$ in $H^4$. Denote the curvature coefficient
for this scaling by $b'_{\mu'(\varepsilon)}$. Then
\eqref{eq:bprime-moved} gives
\begin{align}
b'_{\mu'(\varepsilon)}&\longrightarrow\;-\frac{\ip{P_\mu g}{\hat\beta_\mu}}{\mathsf r_\mu}
\ne0.\label{eq:app-d-display-071}
\end{align}
Since nonvanishing is independent of scaling, both eigenvalues have
nonzero curvature coefficients.

For a simple plant eigenvalue, simplicity persists for small $\varepsilon$.
Normalize its adjoint eigenvector to have coefficient one along $\varphi_n$.
If $\beta_n\ne0$, continuity gives $b_n'\to-\beta_n\ne0$.

If $\beta_n=0$, the adjoint formula in Lemma~\ref{lem:spectrum}(ii)
gives the first-order expansion
\begin{align}
\tilde\chi_n(\varepsilon)
&=\;\varphi_n+\varepsilon\alpha_n
\sum_{m\ne n}\frac{g_m}{\mu_m-\mu_n}\varphi_m
+O(\varepsilon^2)\quad\text{in }H^4.
\label{eq:low-adjoint-expansion}
\end{align}
Taking the derivative trace at zero yields
\begin{align}
b_n'(\varepsilon)
&=\;-\varepsilon\alpha_n\ip{g}{\zeta_n}+O(\varepsilon^2).
\label{eq:app-d-display-074}
\end{align}
Lemma~\ref{lem:plant-spectral}(ii) ensures $\alpha_n\ne0$, so
\eqref{eq:low-mode-condition} makes this coefficient nonzero for all
sufficiently small nonzero amplitudes. The simple eigenvalues depend
analytically on $\varepsilon$ and hence move by $O(|\varepsilon|)$.

It remains to choose a single profile satisfying all these conditions.
Set $g=g_\theta$. Conditions \eqref{eq:app-d-display-070}
and \eqref{eq:low-mode-condition} are finitely many
inequalities $F(\theta):=\ip{e^{-\theta y}}{v}\ne0$, with fixed $v\in\LL$.
The vectors $v$ are nonzero by Lemma~\ref{lem:plant-spectral}(ii) for
the trace vectors and by the nonzero high-mode coefficients for $\zeta_n$,
whose $L^2$ convergence follows from the eigenvalue and trace bounds.
Each $F$ extends analytically to $\C$ and is not identically
zero: otherwise differentiation at zero would give
$\ip{y^k}{v}=0$ for all $k\ge0$, and density of polynomials would force
$v=0$. The finite union of their zero sets in $(0,\infty)$ is therefore
a discrete set $\mathscr T_\lambda$ depending only on the plant.

Fix $\theta\notin\mathscr T_\lambda$. The proof of Lemma~\ref{lem:high}, with
$g=g_\theta$, gives a cutoff independent of $\varepsilon$ for
$|\varepsilon|\le1$, above which (H1)--(H2) of Definition~\ref{def:admissible} hold.
The preceding finite-mode arguments apply to the remaining eigenvalues,
including simple modes with $\beta_n\ne0$. Choose a common
$\varepsilon_1\le1$ below their amplitude bounds and small enough to keep
eigenvalues arising from distinct plant eigenvalues separated. This gives (H1)--(H2) of Definition~\ref{def:admissible} for every
$0<|\varepsilon|<\varepsilon_1$.

The high-mode separation \eqref{eq:prefeedback-high-gap}, together with
the quartic asymptotics \eqref{eq:mu-asymp} and
\eqref{eq:prefeedback-shift-small}, implies that only finitely many
distinct spectral differences lie in any bounded set. Hence
$\mathscr N'_a$ is discrete, and every
$a>\max\{0,\max_n\operatorname{Re}\mu_n'\}$ outside it gives (H3) of Definition~\ref{def:admissible}.
The same uniform high-mode estimates show that the spectral maximum
is determined by finitely many modes. Their $O(|\varepsilon|)$
displacement established above gives
$\max_n\operatorname{Re}\mu_n'=\max_n\mu_n+O(|\varepsilon|)$.
This completes part (i) of Theorem~\ref{thm:generic}.

\subsubsection{Riesz basis and spectral estimates}

We now prove part (ii) of Theorem~\ref{thm:generic}.
The high-mode estimates give closeness to the plant eigenbasis.
To obtain a Riesz basis, we first establish completeness.
Set $g=g_\theta$ for the given admissible design. Constants may depend on
$\lambda$ and $(a,\varepsilon,\theta)$, but not on mode indices.
We first prove completeness using the resolvent formula in
Lemma~\ref{lem:spectrum}. Choose real $z_0$ in the resolvent sets of both
operators. Lemma~\ref{lem:spectrum}(i) gives
\begin{align}
H&:=\;(\Ap-z_0)^{-1}=(I+R)H_0,
\qquad H_0:=(\Aop-z_0)^{-1},\label{eq:app-d-display-060}
\end{align}
where $R=\varepsilon\Xi(z_0)^{-1}\rho(z_0)\ip{\cdot}{g}$ has rank at most
one. The operator $H_0$ is compact, self-adjoint, and injective.
The plant spectral estimate \eqref{eq:mu-asymp} gives
$cj^{-4}\le s_j(H_0)\le Cj^{-4}$, hence $H_0\in\mathfrak S_1$,
with constants also depending on $z_0$.
Since $H^*=H_0(I+R^*)$ is injective, $\ker(I+R^*)=\{0\}$.
Since $H_0\in\mathfrak S_1$ is self-adjoint and $R^*$ is compact,
Keldysh's theorem \cite[Ch.~V, Thm.~8.1]{GohbergKrein} gives completeness
of the root vectors of $H^*$. The Fredholm alternative makes $I+R^*$,
and hence $I+R$, invertible. Thus $H$ is similar to the injective
product $H_0(I+R)$, to which the same theorem applies. Completeness
transfers to $\Ap$ and its adjoint. By (H1) of Definition~\ref{def:admissible}, their root vectors are
eigenvectors.

We next establish the Riesz-basis property using Lemma~\ref{lem:high}.
Use its high-mode eigenvectors and normalize the finitely many remaining
eigenvectors in $L^2$. The estimate \eqref{eq:high-est} gives
\begin{align}
\sum_{n\ge N_0}\|\chi_n-\varphi_n\|^2
&\le\;C\varepsilon^2\|g\|^2<\infty.\label{eq:app-d-display-059}
\end{align}
Define $S\varphi_n:=\chi_n-\varphi_n$.
Estimate \eqref{eq:app-d-display-059} makes $S$ Hilbert--Schmidt, so
$I+S$ is Fredholm of index zero. Its range is closed and contains the
complete family $\{\chi_n\}$, so it is surjective. Index zero then gives
injectivity. Thus $\chi_n=(I+S)\varphi_n$ is a Riesz basis.
This is the form of Bari's theorem used in
\cite[Ch.~VI, Sec.~2, Bari's theorem]{GohbergKrein} and
\cite[Thm.~6.3]{Guo2001}. The biorthogonal vectors are adjoint eigenvectors.

Finally, the Riesz basis gives the semigroup representation
\begin{align}
e^{t\Ap}f&=\;\sum_n e^{\mu_n't}\ip{f}{\tilde\chi_n}\chi_n.\label{eq:app-d-display-066}
\end{align}
Lemma~\ref{lem:high} and \eqref{eq:mu-asymp} give
$\mu_n'\to-\infty$, so the real parts are bounded above and the series
defines a strongly continuous semigroup.
The high-mode bounds in \eqref{eq:HL-hyp} follow from
\eqref{eq:mu-asymp}, Lemma~\ref{lem:high}, and \eqref{eq:prefeedback-high-gap}.
Quartic eigenvalue growth gives separation for pairs with one low and
one high index. Conditions (H1)--(H2) of Definition~\ref{def:admissible} cover the finitely many remaining
modes and pairs after adjusting the constants. This completes part (ii)
and the proof of Theorem~\ref{thm:generic}.

\section{Fredholm kernel always exists and its transformation is invertible}\label{sec:fredholm}

This section shows that \eqref{eq:modal} is the one-input Fredholm
backstepping problem of \cite{HayatLoko2026}, there called
$F$-equivalence, for the pair $(\Ap,B_2)$. Its operator form is
\eqref{eq:Fequiv-intro}. Theorem~\ref{thm:generic} verifies these hypotheses
for every $\lambda\in W^{1,\infty}(0,1)$ when $\theta$ lies outside a discrete
exceptional set, $\varepsilon$ is sufficiently small and nonzero, and $a>0$
exceeds the largest real part of the pre-compensated eigenvalues and avoids
the locally finite resonance set $\mathscr N'_a$. Using the Riesz basis and spectral estimates of
Theorem~\ref{thm:generic}(ii), we construct the kernel and gain and obtain
invertibility of $I-K$. Throughout this section, $g=g_\theta$ and
$g_n:=\ip{g_\theta}{\varphi_n}$.
Generic constants in this section may depend on $\lambda$ and
$(a,\varepsilon,\theta)$, but not on mode indices.

\subsection{Kernel construction and invertibility}

Theorem~\ref{thm:generic}(ii) supplies the Riesz basis and spectral and control-coefficient bounds needed to apply the Fredholm backstepping theorem of \cite{HayatLoko2026}, yielding an invertible transformation for the pre-compensated plant:

\begin{theorem}[Kernel and gain exist, and the transformation is invertible]\label{thm:kernel}
Let $(a,\varepsilon,\theta)\in\mathscr U_\lambda$, let $\Ap$ be the pre-compensated operator \eqref{eq:Aprime} and $B_2$ the curvature control vector. Then:
\begin{enumerate}[label=(\roman*)]
\item There exist a unique sequence $(K_n)_{n\ge1}$ with $(K_nb_n'n^{-1/2-\epsilon})\in\ell^2$ for all $\epsilon>0$ and a bounded operator $T$ on $\LL$ such that, with $h:=\sum_nK_n\overline{\tilde\chi_n}$ (so that $\ip{\chi_n}{h}=K_n$),
\begin{equation}\label{eq:Fequiv}
\begin{gathered}
T(\Ap+B_2\ip{h}{\cdot})=(\Ap-a)T\quad\text{in }\mathcal L(\mathcal H^{1+s},\mathcal H^{-3+s})\ \text{ for }|s|<\tfrac32,\\
TB_2=B_2\quad\text{in }\mathcal H^{-3/2-\epsilon}\ \text{ for every }\epsilon>0.
\end{gathered}
\end{equation}
Moreover $\sum_nn^{1-2\epsilon}|K_n|^2<\infty$ for every $\epsilon>0$, so that $h\in\LL$, and $h\in H^{1/2-\epsilon}(0,1)$ for every $\epsilon\in(0,\tfrac12]$. Also $\Ap+B_2\ip{h}{\cdot}$ generates a $C^0$ semigroup on $\mathcal H^r$ for $r\in(-9/2,5/2)$.

\item \emph{(Invertibility.)} $T$ is an isomorphism of $\mathcal H^r$ for every $r\in(-9/2,5/2)$. In particular $T=I-K$ is boundedly invertible on $\LL$, so that
\begin{equation}\label{eq:transformation-bounds}
c_1=\|I-K\|_{\mathcal L(L^2)}<\infty,\qquad c_2=\|(I-K)^{-1}\|_{\mathcal L(L^2)}<\infty .
\end{equation}
\item Explicitly, $T\chi_n=-K_n\sum_{p\ge1}\dfrac{b_p'}{\mu_n'-\mu_p'+a}\chi_p$, and $K_nb_n'=-(a+\kappa_n)$ where $(\kappa_n)$ is the unique solution in $\ell^\infty$ of
\begin{equation}\label{eq:kappa}
\kappa_m=-a\sum_{n\ne m}\frac{a+\kappa_n}{\mu_n'-\mu_m'+a},\qquad m\ge1,
\end{equation}
and $(\kappa_nn^{\epsilon})\in\ell^2$ for every $\epsilon\in(0,5/2)$. The feedback and kernel are
\begin{equation}\label{eq:hk-explicit}
h(y)=\sum_{n\ge1}K_n\overline{\tilde\chi_n(y)},\qquad
k(x,y)=\sum_{n,p\ge1}\Big(\delta_{np}+\frac{K_n\,b_p'}{\mu_n'-\mu_p'+a}\Big)\chi_p(x)\,\overline{\tilde\chi_n(y)}.
\end{equation}
the second series converging in $L^2((0,1)^2)$. The operator $K:=I-T$ is Hilbert--Schmidt with kernel $k$.
\item The coefficients $K_{mj}=\ip{K\varphi_j}{\varphi_m}$ and $h_j=\ip{h}{\varphi_j}$ solve the modal system \eqref{eq:modal}. Conversely any solution of \eqref{eq:modal} in the class of (i) coincides with it.
\end{enumerate}
\end{theorem}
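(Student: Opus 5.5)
The plan is to reduce everything to the abstract Fredholm backstepping theorem of \cite{HayatLoko2026} for the single-input pair $(\Ap,B_2)$, whose hypotheses are a Riesz basis of eigenvectors, simple spectrum with the gap and growth bounds of the type in \eqref{eq:HL-hyp}, and control coefficients $b_n'$ with $|b_n'|\asymp n$. All of these are provided by Theorem~\ref{thm:generic}(ii) for an admissible design. Condition (H3) of Definition~\ref{def:admissible} supplies the non-resonance $a\notin\mathscr N'_a$, so the denominators $\mu_n'-\mu_p'+a$ never vanish. I will work throughout in the spectral spaces $\mathcal H^s$ of \eqref{eq:modal-inner-product}, in which $\Ap$ is diagonal. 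Then $B_2$ has coordinates $\overline{b_n'}$-type, and since $|b_n'|\asymp n$ we get $B_2\in\mathcal H^{-3/2-\epsilon}$ for every $\epsilon>0$.

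First I construct $T$ and $h$ by the ansatz $T\chi_n=-K_n\sum_p\frac{b_p'}{\mu_n'-\mu_p'+a}\chi_p$. Applying both sides of \eqref{eq:Fequiv} to $\chi_n$ shows that the operator identity holds formally, with $\ip{\chi_n}{h}=K_n$. The normalization $TB_2=B_2$, tested against $\tilde\chi_m$, becomes $\sum_n b_n'K_n\,b_m'/(\mu_n'-\mu_m'+a)=-b_m'$. Substituting $K_nb_n'=-(a+\kappa_n)$ and isolating the diagonal term $n=m$, which equals $-(a+\kappa_m)/a$, turns this into exactly \eqref{eq:kappa}.

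Second, I solve \eqref{eq:kappa}. Write it as $\kappa=F+aL\kappa$, where $F_m=-a\sum_{n\neq m}a/(\mu_n'-\mu_m'+a)$ and $(L\kappa)_m=-\sum_{n\neq m}\kappa_n/(\mu_n'-\mu_m'+a)$. By the gap bound $|\mu_n'-\mu_m'|\ge c\,n^3|n-m|$, the matrix of $L$ is Hilbert--Schmidt on weighted $\ell^2$ spaces, so $I-aL$ is Fredholm of index zero. Existence therefore reduces to uniqueness: I must show that $(I-aL)\kappa=0$ with $\kappa\in\ell^\infty$ forces $\kappa=0$. I expect this to be the main obstacle. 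It is where \cite{HayatLoko2026} uses the Cauchy-matrix structure, or equivalently the invertibility of $T$ combined with the Hautus condition (H2). I would follow that argument. A nontrivial kernel element would produce a nonzero $T$ annihilating $B_2$, that is, an operator intertwining $\Ap+B_2\ip{h}{\cdot}$ with $\Ap-a$. Comparing the spectra of these two operators, which are disjoint because $a\notin\mathscr N'_a$, then yields a contradiction. Decay of the $F_m$, together with smoothing by $L$, bootstraps $\kappa_nn^{\epsilon}\in\ell^2$ for $\epsilon<5/2$.

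Third, I derive the regularity and the remaining conclusions from these coefficients.
\begin{enumerate}[label=(\alph*)]
\item Since $|b_n'|\asymp n$, we get $K_n\approx -a/b_n'+O(\kappa_n/n)$, which gives $\sum n^{1-2\epsilon}|K_n|^2<\infty$. Hence $h\in\mathcal H^{1/2-\epsilon}$, and Sobolev membership $h\in H^{1/2-\epsilon}$ follows by comparing spectral and Sobolev norms through the Riesz basis.
\item Write $T=-aI+T_{\rm off}$ modulo a diagonal correction, where $T_{\rm off}$ is a Cauchy-like matrix. With the gap bounds, $T_{\rm off}$ is bounded on $\mathcal H^r$ for $r\in(-9/2,5/2)$, and $K=I-T$ is Hilbert--Schmidt with the kernel stated in \eqref{eq:hk-explicit}.
\item Invertibility of $T$ on $\mathcal H^r$ follows from the Fredholm property (compact perturbation of a diagonal) together with injectivity, which again comes from the uniqueness argument of the second step. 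Taking $r=0$ gives \eqref{eq:transformation-bounds}.
\item Semigroup generation for $\Ap+B_2\ip{h}{\cdot}$ follows by similarity: this operator equals $T^{-1}(\Ap-a)T$.
\end{enumerate}
Part (iv) is a change of basis from $\{\chi_n\}$ to $\{\varphi_j\}$. The operator identity, tested on the $\varphi_j$ using the adjoint formula of Lemma~\ref{lem:spectrum}(ii), reproduces \eqref{eq:modal}. Uniqueness of solutions of \eqref{eq:modal} transfers from uniqueness of $\kappa$.
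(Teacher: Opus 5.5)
Your overall plan is the paper's: verify the hypotheses of \cite{HayatLoko2026} via Theorem~\ref{thm:generic}(ii) and (H3), then take the conclusions from there. Several of your steps are correct: the ansatz for $T$, the derivation of \eqref{eq:kappa} from $TB_2=B_2$, the Fredholm reduction, and the decay bootstrap for $\kappa$. The gap is in the step you single out as the crux, uniqueness for \eqref{eq:kappa}, because the mechanism you sketch fails. Your ansatz satisfies $T\Ap-(\Ap-a)T=-B_2\ip{h}{\cdot}$ for \emph{every} sequence $(K_n)$. Take a homogeneous solution $\kappa^0\neq0$. The resulting $T^0$ has $T^0B_2=0$, hence $T^0(\Ap+B_2\ip{h^0}{\cdot})=(\Ap-a)T^0-B_2\ip{h^0}{\cdot}$. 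So $T^0$ is not an intertwiner, and the rank-one defect is nonzero exactly because $\kappa^0\neq0$. Moreover, $a\notin\mathscr N'_a$ gives $\sigma(\Ap)\cap\sigma(\Ap-a)=\emptyset$. It does not make $\sigma(\Ap+B_2\ip{h}{\cdot})$ and $\sigma(\Ap-a)$ disjoint; for the actual design these two spectra coincide. Disjointness of $\sigma(\Ap)$ and $\sigma(\Ap-a)$ only shows that $(\Ap-a-\mu_n')T\chi_n=K_nB_2$ determines $T$ from $h$. That is the right ingredient for uniqueness of $T$ in (i), but it cannot exclude a nonzero kernel of $I-aL$.

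What you actually need is that the Cauchy-type matrix $\big((\mu_n'-\mu_m'+a)^{-1}\big)_{m,n}$ has trivial kernel. One argument that works:
\begin{enumerate}
\item A bounded homogeneous solution obeys $|\kappa^0_m|\le Cm^{-3}\log(m+1)$.
\item Hence $f(z)=\sum_n\kappa^0_n/(\mu_n'-z)$ is meromorphic, with simple poles at the $\mu_n'$ and zeros at every $\mu_m'-a$.
\item Multiply $f$ by a canonical product vanishing at the $\mu_n'$ and divide by one vanishing at the $\mu_m'-a$. The result is entire and tends to $0$ on circles $|z|=R_k\to\infty$ passing between the quartically separated eigenvalues.
\item So it vanishes identically, and all residues $-\kappa^0_n$ are zero.
\end{enumerate}
The same argument applied to $1-\prod_p(\mu_p'-a-z)/(\mu_p'-z)$ gives the explicit solution $a+\kappa_n=a\prod_{p\neq n}(\mu_p'-\mu_n'-a)/(\mu_p'-\mu_n')$, which is nonzero by nonresonance.

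Your step (c) needs this nonvanishing as well. By (H2), $Tv=0$ reduces to the same Cauchy system for $(v_nK_n)$. Injectivity of $T$ therefore also requires $K_n\neq0$ for every $n$, since otherwise $T\chi_n=0$, and you do not address this. The paper sidesteps all of this by citing Proposition~4.1 and Lemmas~5.7--5.8 of \cite{HayatLoko2026}.

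Two smaller slips. First, the diagonal of $T$ in the $\chi$-basis is $1+\kappa_n/a$, not $-a$, and $T=I+{}$compact relies on $\kappa_n\to0$. Second, $h$ is synthesized from $\overline{\tilde\chi_n}$, so its $H^{1/2-\epsilon}$ regularity should come from the adjoint synthesis map, using $D(\Ap{}^*)=D(\Aop)$ and interpolation, not from placing $h$ in $\mathcal H^{1/2-\epsilon}$.
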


\begin{proof}
We first prove (i). We apply \cite[Prop.~4.1 and Thm.~3.3]{HayatLoko2026} to
$(\mathcal A,B)=(\Ap,B_2)$ with $m=1$, using Theorem~\ref{thm:generic}(ii)
to verify the hypotheses. Condition (H1) of Definition~\ref{def:admissible} gives algebraic simplicity,
and Theorem~\ref{thm:generic}(ii) gives semigroup generation, a Riesz basis,
and the bounds \eqref{eq:HL-hyp}. These verify conditions (6)--(7)
and (13) of that reference with $\alpha=4$, $\beta=-1$, and $\gamma=0$.
Condition (H3) of Definition~\ref{def:admissible} supplies the admissible target shift $a$.

Proposition~4.1 of that reference gives a feedback functional
$\mathcal F$, with coefficients $\mathcal F\chi_n=K_n$, satisfying
\begin{align}
\mathcal F&\in\mathcal L(\mathcal H^{-1/2+\epsilon};\C),
\qquad (n^{1/2-\epsilon}K_n)_n\in\ell^2,
\qquad \epsilon>0,\label{eq:app-e-display-003}
\end{align}
and a transformation satisfying \eqref{eq:Fequiv-intro}, with
\begin{align}
T:\mathcal H^r&\longrightarrow\;\mathcal H^r
\quad\text{an isomorphism},\qquad -\frac92<r<\frac52.\label{eq:app-e-display-004}
\end{align}
This is the cited range for $\mathcal H^{\beta+r}$, with $\beta=-1$
and $1/2-\alpha<r<\alpha-1/2$. Its equations (50) and (90) give
$TB_2=B_2$ and the intertwining identity in
$\mathcal L(\mathcal H^{1+s},\mathcal H^{-3+s})$ for $|s|<3/2$.
Its Lemma~5.19 gives semigroup generation for the closed-loop operator
$\Ap+B_2\mathcal F$.

Lemma~5.7 of \cite{HayatLoko2026} gives uniqueness of the feedback
coefficients. It remains to verify the gain regularity.
By Lemma~\ref{lem:spectrum}(ii), $D(\Ap{}^*)=D(\Aop)$, with graph norm
equivalent to the $H^4$ norm. Interpolating the adjoint eigenvector
synthesis map between $\ell^2\to L^2$ and $\ell^2_4\to H^4$, using the
Riesz and eigenvalue bounds of Theorem~\ref{thm:generic}(ii), gives boundedness
$\ell^2_s\to H^s$ for $0\le s\le4$. Thus
\eqref{eq:app-e-display-003} gives $h\in H^s(0,1)$ for $0\le s<1/2$,
with its series converging in $L^2$.

Uniqueness and the real plant data give a real-valued gain. Together with
the operator identities and semigroup generation above, this proves (i).

We next prove (ii). The isomorphism assertion is \eqref{eq:app-e-display-004}.
Taking $r=0$ in \eqref{eq:app-e-display-004} and using the Riesz-basis
norm equivalence gives bounded invertibility on $\LL$, proving (ii).

We next prove (iii). Equations (55), (77), and (83) of
\cite{HayatLoko2026} give the stated spectral formulas. The proof of its Lemma~5.8 gives
$(\kappa_n n^\epsilon)_n\in\ell^2$ for $0<\epsilon<5/2$.

In the Riesz basis, the matrix of $K=I-T$ is
\begin{align}
\hat K_{pn}&=\;\delta_{np}
 +\frac{K_nb_p'}{\mu_n'-\mu_p'+a},
\qquad \hat K_{nn}=-\frac{\kappa_n}{a},\label{eq:app-e-display-007}
\end{align}
where the diagonal identity follows from $K_nb_n'=-(a+\kappa_n)$.
Boundedness of $(\kappa_n)$ and Theorem~\ref{thm:generic}(ii) give
$|K_n|\le Cn^{-1}$ and $|b_p'|\le Cp$. Together with the spectral gaps,
(H3) of Definition~\ref{def:admissible}, and $(\kappa_n)\in\ell^2$, these imply
\begin{align}
\sum_{p,n}|\hat K_{pn}|^2&<\;\infty.\label{eq:app-e-display-009}
\end{align}
Thus $K$ is Hilbert--Schmidt, and the kernel series
\eqref{eq:hk-explicit} converges in $L^2((0,1)^2)$.
Uniqueness and the real plant data make the coefficients conjugate-symmetric,
so conjugate eigenvalue pairs combine to give real $h$ and $k$.
This proves (iii).

Finally, to prove (iv), project \eqref{eq:Fequiv-intro} onto the original plant
eigenbasis, using \eqref{eq:Aprime-modal} and $T=I-K$.
Reversing the computation in Proposition~\ref{prop:modal}, $TB_2=B_2$
gives \eqref{eq:modal-b}, and the intertwining gives \eqref{eq:modal-a}.
Conversely, a solution in the stated class gives the same operator identities,
so uniqueness of the feedback and transformation gives the same modal coefficients.
This proves (iv).

\end{proof}

The eigenvalues and eigenfunctions in \eqref{eq:intro-eig} are $\tilde\mu_n=\overline{\mu_n'}$ and $\tilde\chi_n$, the eigenpairs of $\Ap{}^*$ in \eqref{eq:adjoint-eigenvector-normalization}, and $\tilde\kappa_n=\overline{\kappa_n}$ is the bounded solution of \eqref{eq:intro-kappa}. Since $h$ is real, \eqref{eq:hk-explicit} with $K_n=-(a+\kappa_n)/b_n'$ and $b_n'=-\overline{\tilde\chi_n'(0)}$ coincides with \eqref{eq:intro-gain}. Theorems~\ref{thm:generic} and~\ref{thm:kernel} therefore prove the assertions of Theorem~\ref{thm:intro} on the eigenvalues, the gain and the coefficients $\tilde\kappa_n$.

For each admissible design, the following lemma establishes smoothing of $K$
and bounded invertibility of $T$ and its modal adjoint on the stated modal
spaces. Uniform bounds over the admissible class are established in
Section~\ref{ssec:uniform}. Here $T^\sharp$ denotes the modal adjoint,
defined by $\hp{Tv}{w}=\hp{v}{T^\sharp w}$.

\begin{lemma}[Smoothing of $K$ and invertibility on modal spaces]\label{lem:smoothing}
Let $(a,\varepsilon,\theta)\in\mathscr U_\lambda$ and let $T=I-K$ be the Fredholm transformation \eqref{eq:transform} with the kernel of Theorem~\ref{thm:kernel}.
\begin{enumerate}[label=(\roman*)]
\item $K\in\mathcal L(\mathcal H^{0},\mathcal H^{2})\cap\mathcal L(\mathcal H^{-1},\mathcal H^{1})$. In particular $K:\LL\to\mathcal H^2\subset H^2(0,1)$ is compact and $I-K$ is Fredholm of index zero on $\LL$.
\item $T$, $T^{-1}$ are isomorphisms of $\mathcal H^r$ for $r\in(-\tfrac92,\tfrac52)$, and $T^\sharp$, $(T^\sharp)^{-1}$ are isomorphisms of $\mathcal H^{-r}$ for the same $r$; in particular $T^\sharp\in\mathcal L(\mathcal H^1)\cap\mathcal L(\mathcal H^2)$, with $\|T^\sharp\|_{\mathcal L(\mathcal H^{-r})}=\|T\|_{\mathcal L(\mathcal H^r)}$.
\end{enumerate}
\end{lemma}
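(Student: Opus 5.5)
We work throughout in the Riesz basis $\{\chi_p\}$, where $\mathcal H^s$ has norm $\|v\|_{\mathcal H^s}^2=\sum_p p^{2s}|v_p|^2$. The matrix of $K$ is given by \eqref{eq:app-e-display-007}:
\begin{align*}
\hat K_{pn}=\delta_{np}+\frac{K_nb_p'}{\mu_n'-\mu_p'+a},\qquad \hat K_{nn}=-\frac{\kappa_n}{a}.
\end{align*}
The plan for (i) is to bound weighted Hilbert--Schmidt (Schur) sums of this matrix. The inputs are $|K_n|\le Cn^{-1}$ (boundedness of $\kappa_n$ and $|b_n'|\ge cn$), $|b_p'|\le Cp$, and $\kappa\in\ell^2$. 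We also need a lower bound on $|\mu_n'-\mu_p'+a|$. From \eqref{eq:HL-hyp} it is at least $c\max(n,p)^3|n-p|$ for $n\ne p$ with large indices. For the finitely many remaining pairs, (H3) keeps it away from zero.

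To show $K\in\mathcal L(\mathcal H^0,\mathcal H^2)$, it suffices to check
\begin{align*}
\sum_{p,n}p^4|\hat K_{pn}|^2<\infty .
\end{align*}
The diagonal part is $\sum_n n^4|\kappa_n|^2/a^2$. This is finite because $(\kappa_nn^\epsilon)\in\ell^2$ for every $\epsilon<5/2$; we take $\epsilon=2$. For the off-diagonal part, the summand is bounded by
\begin{align*}
\frac{C\,p^4\,p^2\,n^{-2}}{\max(n,p)^6|n-p|^2}\le \frac{C}{n^2|n-p|^2}.
\end{align*}
This is summable over $n\ne p$. So $K$ is Hilbert--Schmidt from $\mathcal H^0$ to $\mathcal H^2$, and in particular compact.

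For $\mathcal L(\mathcal H^{-1},\mathcal H^1)$ we use the weight $p^2n^2$. The off-diagonal summand is then bounded by $p^2n^2p^2n^{-2}/(\max^6|n-p|^2)\le C/(\max^2|n-p|^2)$. This is summable, since summing over $p$ for fixed $n$ gives $O(n^{-2})$. The diagonal part requires $\sum_n n^4|\kappa_n|^2<\infty$, which holds as before.

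We then need $\mathcal H^2\subset H^2(0,1)$. This follows from the synthesis bound $\ell^2_s\to H^s$ established in the proof of Theorem~\ref{thm:kernel}. That bound was proved for the adjoint eigenvectors. For the $\chi_n$ we argue the same way: the eigenvectors satisfy $\|\chi_n\|_{H^4}\le C n^4$ by the elliptic estimate for $\Ap$, and interpolating between $\ell^2\to L^2$ and $\ell^2_4\to H^4$ gives the claim. Finally, $I-K$ is the identity minus a compact operator, so it is Fredholm of index zero.

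For (ii), the isomorphism property of $T$ on $\mathcal H^r$ is Theorem~\ref{thm:kernel}(ii), and $T^{-1}$ inherits it. For the modal adjoint we use duality. The modal pairing $\hp{\cdot}{\cdot}$ identifies $(\mathcal H^r)'$ with $\mathcal H^{-r}$ isometrically, since the weights $n^{r}$ and $n^{-r}$ are dual. Hence $T^\sharp$ is the Banach adjoint of $T$ under this identification, and
\begin{align*}
\|T^\sharp\|_{\mathcal L(\mathcal H^{-r})}=\|T\|_{\mathcal L(\mathcal H^r)},\qquad (T^\sharp)^{-1}=(T^{-1})^\sharp .
\end{align*}
Taking $r=-1$ and $r=-2$ gives $T^\sharp\in\mathcal L(\mathcal H^1)\cap\mathcal L(\mathcal H^2)$.

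The main obstacle is the diagonal term at the top weight. It needs $\kappa_nn^2\in\ell^2$, which sits just inside the range $\epsilon<5/2$ quoted from \cite{HayatLoko2026}. The off-diagonal low-index pairs also need care, where the gap bound is replaced by (H3).
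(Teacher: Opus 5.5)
Your proposal is correct and follows the paper's proof. It uses the same weighted Hilbert--Schmidt sums for the Riesz-basis matrix \eqref{eq:app-e-display-007}, with $|K_n|\le Cn^{-1}$, $|b_p'|\le Cp$, and the cubic gap bound plus (H3) for the finitely many low pairs, and it uses the same duality in the modal pairing for $T^\sharp$. There are two minor differences. First, you control the diagonal with the quoted $(\kappa_nn^2)\in\ell^2$, whereas the paper derives the pointwise bound $|\hat K_{nn}|\le Cn^{-3}\log(n+1)$ from \eqref{eq:kappa}, which is the form later made uniform in Theorem~\ref{thm:lipschitz}. Second, for $\mathcal H^2\subset H^2(0,1)$, termwise bounds $\|\chi_n\|_{H^4}\le Cn^4$ alone do not give $\ell^2_4\to H^4$. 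That bound should come from applying the elliptic estimate for $\Ap$ to the whole sum $\sum_n c_n\chi_n$, together with the Riesz property and $|\mu_n'|\le Cn^4$.
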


\begin{proof}[Proof of Lemma~\ref{lem:smoothing}]
We first prove (i) by estimating the kernel matrix
\eqref{eq:app-e-display-007} in weighted sequence spaces.
Theorem~\ref{thm:kernel}(iii), Theorem~\ref{thm:generic}(ii), and (H3) of Definition~\ref{def:admissible} give
\begin{align}
|K_n|&\le\;Cn^{-1},\qquad |b_p'|\le Cp,
\qquad |\mu_n'-\mu_p'+a|\ge c(1+|n^4-p^4|)\quad(p\ne n),\label{eq:app-e-display-012}
\end{align}
with constants depending on $\lambda$ and $(a,\varepsilon,\theta)$,
but not on the mode indices. Applying these estimates to the matrix
formula \eqref{eq:app-e-display-007} and the coefficient equation
\eqref{eq:kappa}, using boundedness of $(\kappa_n)$, gives
\begin{align}
|\hat K_{pn}|&\le\;\frac{Cn^{-1}p}{1+|n^4-p^4|}\quad(p\ne n),
\qquad |\hat K_{nn}|\le Cn^{-3}\log(n+1).\label{eq:app-e-display-014}
\end{align}
The logarithm comes from summing $1/|m-n|$ over indices comparable to $n$.
To prove square summability of the weighted matrix, split the
 off-diagonal indices into three regions:
\begin{align}
p^rn^{-r}|\hat K_{pn}|&\le\;C
\begin{cases}
p^{r-3}n^{-r-1},&p>2n,\\
(1+n^3|n-p|)^{-1},&n/2\le p\le2n,\ p\ne n,\\
p^{r+1}n^{-r-5},&p<n/2.
\end{cases}\label{eq:app-g-display-025}
\end{align}
Squaring and summing these bounds, with an additional factor $p^2$
in the matrix entries for smoothing, and using
\eqref{eq:app-e-display-014} on the diagonal gives
\begin{align}
\sum_{p,n}\big(p^rn^{-r}|\hat K_{pn}|\big)^2
 &\le\;C,&&r\in\{-1,0,1,2\},\notag\\
\sum_{p,n}\big(p^{r+2}n^{-r}|\hat K_{pn}|\big)^2
 &\le\;C,&&r\in\{-1,0\}.\label{eq:HSsums}
\end{align}
The second bound proves the smoothing assertions and makes
$K:\LL\to\mathcal H^2$ Hilbert--Schmidt, hence compact.
The Riesz-basis property gives the continuous embedding
$\mathcal H^2\hookrightarrow\LL$, so $K$ is compact on $\LL$ and
$I-K$ is Fredholm of index zero. This proves (i).

To prove (ii), use the isomorphism range \eqref{eq:app-e-display-004}
and duality in the modal pairing. Since
$(\mathcal H^r)^*=\mathcal H^{-r}$, duality gives the norm identity
in Lemma~\ref{lem:smoothing}(ii). Applying the same argument to $T^{-1}$
proves (ii).

\end{proof}

\section{Gain operator, continuity and neural-operator approximation}\label{sec:continuity}

\subsection{Gain operator and admissible class}

\begin{definition}[Gain operator]\label{def:designmap}
For $\lambda\in W^{1,\infty}(0,1)$ and $(a,\varepsilon,\theta)$ admissible for
$\lambda$, let $h$ be the curvature gain in \eqref{eq:hk-explicit}.
We define the \emph{gain operator} as the mapping
\begin{equation}\label{eq:G}
\mathcal G:\ \mathscr U\to\LL,\qquad \mathcal G(\lambda,a,\varepsilon,\theta)=h .
\end{equation}
\end{definition}

This is the operator to be approximated, and the only one challenging to
solve for, since the displacement gain $\varepsilon g_\theta$ is explicit.
However, to approximate this operator, we will do so over a class of
functions $\lambda$ and admissible parameters $(a,\varepsilon,\theta)$.
This class requires two key properties. First, it must be
compact, as neural network approximation theorems rely on compact domains \cite{LanthalerLiStuart2025}.
Second, it must be bounded \emph{away} from inadmissibility to tolerate
any error $\epsilon$ in the approximation.

With the right eigenvectors normalized by $\|\chi_n\|_{L^2}=1$ and $\tilde\chi_n$ chosen biorthogonally, $\|\tilde\chi_n\|\ge1$, so the margin on $\tilde\chi_n'(0)$ in \eqref{eq:margins-lower} gives $|b_n'|\ge dn$. The class $\mathscr U_{B,d}$ coincides with \eqref{eq:intro-class}.

Hence, we define the approximation class as follows. Fix a \emph{size}
$B>0$ and a \emph{margin} $d\in(0,1]$, and let $\mathscr U_{B,d}$ be the
set of parameters $\mathbf p=(\lambda,a,\varepsilon,\theta)$ with
$(a,\varepsilon,\theta)$ admissible for $\lambda$ such that the following holds:
\begin{subequations}\label{eq:margins}
\begin{align}
&\|\lambda\|_{W^{2,\infty}}\le B,\qquad
 a+|\varepsilon|+\theta+\theta^{-1}\le B,\label{eq:margins-size}\\
&\min_{i\ne j}\frac{|\mu_i'-\mu_j'|}{\max(i,j)^3}\ge d,\quad
\min_n\frac{|\tilde\chi_n'(0)|}{n\,\|\tilde\chi_n\|}\ge d,\quad
\operatorname{dist}(a,\mathscr N_a')\ge d,\quad
a-\max_n\operatorname{Re}\mu_n'\ge d.\label{eq:margins-lower}
\end{align}
\end{subequations}
Notice that the bound on $\lambda$ is placed in $W^{2,\infty}$ rather than
in the $W^{1,\infty}$ assumed in Section~\ref{sec:control}. This ensures that the class of
coefficients $\lambda$ is compact in $W^{1,\infty}$ and is needed only for
the training set. The design of Section~\ref{sec:kernel} requires nothing beyond
$\lambda\in W^{1,\infty}$.

\subsection{Continuity of the kernel and gain}

The following theorem establishes continuity of \eqref{eq:G} on the compact admissible class.

\begin{theorem}[Design map is continuous, with uniformly bounded design constants]\label{thm:lipschitz}
Let $B>0$ and $d\in(0,1]$. Then $\mathscr U_{B,d}$ is compact in $W^{1,\infty}(0,1)\times\R^3$, the gain operator $\mathcal G:\mathscr U_{B,d}\to\LL$ is continuous, and
\begin{equation}\label{eq:unif-const}
\sup_{\mathscr U_{B,d}}\Big(\|h\|_{\LL}+\|k\|_{L^2((0,1)^2)}+c_1+c_2+m^{-1}+M+c_\gamma^{-1}+\Lambda+C_{\rm N}\Big)<\infty ,
\end{equation}
where $c_1,c_2$ are defined in \eqref{eq:transformation-bounds},
$m,M$ in \eqref{eq:riesz-const}, and $c_\gamma,\Lambda,C_{\rm N}$ in
\eqref{eq:dissipation-coercivity}, \eqref{eq:Lambda}, and
\eqref{eq:nonlinear-constant}, with $\nu_0=d/2$.
\end{theorem}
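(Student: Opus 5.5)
We prove the three assertions in order: compactness of $\mathscr U_{B,d}$, continuity of the spectral data, and then continuity of $(\kappa_n)$, $h$ and $k$ together with the uniform bounds \eqref{eq:unif-const}.

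\emph{Compactness.} The ball $\{\|\lambda\|_{W^{2,\infty}}\le B\}$ is compact in $W^{1,\infty}(0,1)$ by Arzel\`a--Ascoli applied to $\lambda$ and $\lambda'$; it is closed because $W^{2,\infty}$ bounds are preserved under $W^{1,\infty}$ limits, by weak-$*$ compactness of $\lambda''$. The scalar constraints in \eqref{eq:margins-size} define a compact set in $\R^3$ with $\theta\in[B^{-1},B]$. It remains to show that the margin conditions \eqref{eq:margins-lower} are closed. This rests on the continuity statement below: the eigenvalues $\mu_n'$, the normalized quantities $\tilde\chi_n'(0)/\|\tilde\chi_n\|$, and $\max_n\operatorname{Re}\mu_n'$ are continuous in $\mathbf p$. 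Uniform high-mode control makes the infima over infinitely many indices semicontinuous in the right direction. Non-strict inequalities then pass to limits, and the limits are again admissible with the same margin $d$. In particular (H1) follows from the gap margin and (H2) from the trace margin.

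\emph{Continuity of spectral data.} The resolvent formula \eqref{eq:resolvent} expresses $(\Ap-z)^{-1}$ through $(\Aop-z)^{-1}$, $\rho(z)$ and $\Xi(z)$. Each of these depends continuously on $(\lambda,\varepsilon,\theta)$ in operator norm, uniformly for $z$ on compact sets away from the spectrum. For the plant part, this follows from a second-resolvent identity with the relative bound \eqref{eq:relbound} and the elliptic estimate \eqref{eq:control-elliptic}. The parameter-dependent domain is handled by the lifting $\rho$. Riesz projections over small circles are therefore continuous.

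\emph{High modes and the coefficient system.} For high modes we redo Lemma~\ref{lem:high} with constants depending only on $(B,d)$. The relevant inputs are $\|g_\theta\|$ and the tails $\sum_{m\ge N}|g_m|^2$; these tails are uniform because $g_\theta$ is smooth with $\theta\in[B^{-1},B]$ and $\varphi_n$ is close to $s_n$ uniformly by Lemma~\ref{lem:plant-spectral}(iii). This gives a common cutoff $N_0(B,d)$. Above $N_0$, each $\mu_n'$ stays in its disk and the estimates \eqref{eq:high-est} hold uniformly. Below $N_0$, the gap margin guarantees that the finitely many eigenvalues stay separated, so the Riesz projections are rank one and continuous. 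This yields continuity of each pair $(\mu_n',\chi_n,\tilde\chi_n)$ in $H^4$, and hence of $b_n'$, together with uniform bounds \eqref{eq:HL-hyp} with $c,C$ depending only on $(B,d)$. The Riesz constants $m,M$ come from the Bari argument, with $\|S\|_{HS}$ bounded uniformly.

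Next we treat \eqref{eq:kappa} as a fixed-point problem $\kappa=-a\,\mathbf 1^\ast+\mathcal M(\mathbf p)\kappa$ on $\ell^\infty$. Equivalently, we view it as the invertibility of $I+a\,\mathcal S$ with $\mathcal S_{mn}=(\mu_n'-\mu_m'+a)^{-1}$ for $n\ne m$. The margins $\operatorname{dist}(a,\mathscr N_a')\ge d$ and the cubic gap give a uniform bound on the row sums of $\mathcal S$. These bounds split into a compact finite block plus a tail that is uniformly small in operator norm. Uniqueness from Theorem~\ref{thm:kernel} gives invertibility at each $\mathbf p$, and compactness of the class together with continuity of the operator in $\mathbf p$ gives a uniform bound on the inverse. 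It follows that $(\kappa_n)$ is continuous in the weighted space $\ell^2_\epsilon$.

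\emph{Gain, kernel and remaining constants.} Write $h=\sum_n K_n\overline{\tilde\chi_n}$ with $|K_n|\le C(B,d)n^{-1}$. Since $\sum_n n^{1-2\epsilon}|K_n|^2$ converges, we split the series as a finite sum, which is continuous termwise, plus a tail $\sum_{n>N}$ whose $L^2$ norm is at most $M(\sum_{n>N}|K_n|^2)^{1/2}$. To make this tail uniformly small we use the uniform weighted bound $(\kappa_nn^{\epsilon})\in\ell^2$, which gives $\sup_{\mathbf p}\sum_{n>N}n^{-2}|a+\kappa_n|^2\to0$. An $\epsilon/3$ argument then gives continuity of $\mathcal G$. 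The same splitting applied to the Hilbert--Schmidt matrix \eqref{eq:app-e-display-007}, using the estimates \eqref{eq:app-e-display-014} with uniform constants, bounds $\|k\|_{L^2}$ and $c_1$.

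The bound on $c_2$ requires a uniform bound on $T^{-1}$. Since $T$ is continuous in $\mathbf p$ in operator norm by the splitting just described, and inversion is continuous on invertible operators, $\|T^{-1}\|$ is continuous on the compact class and hence bounded. The remaining constants $c_\gamma,\Lambda,C_{\rm N}$ are defined later from these quantities, with $\nu_0=d/2$, so they inherit the uniform bounds.

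\emph{Main obstacle.} The hardest step is the uniformity of the coefficient system \eqref{eq:kappa}. The infinite system is not a contraction in general, because $a$ can be large relative to the low gaps. Our plan is to use a compactness argument: the operator $\mathbf p\mapsto I+a\mathcal S(\mathbf p)$ is continuous into the space of operators on $\ell^2_\epsilon$, its tail is uniformly controlled, and it is injective at every $\mathbf p$. Together these give a uniform bound on the inverse over the compact class.
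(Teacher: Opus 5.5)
Your treatment of compactness, resolvent and Riesz-projection continuity, the uniform high-mode estimates, the coefficient system and the continuity of $h$ follows the paper's route. Your $\ell^2_\epsilon$ formulation for $\kappa$ is a workable variant of the paper's $\ell^\infty$ formulation for $c_n=(a+\kappa_n)/a$.

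The genuine gap is your last sentence: $C_{\rm N}$ does not inherit a uniform bound from what you have controlled. By \eqref{eq:nonlinear-constant}, $C_{\rm N}$ involves $\|T^\sharp\|_{\mathcal L(\mathcal H^1)}$, $\|T^{-1}\|_{\mathcal L(\mathcal H^1)}$, $\|\Gamma\|_{\mathcal L(\mathcal H^1,H^1)}$ and the norm of the embedding $\mathcal H^1\hookrightarrow H^1(0,1)$. Your argument bounds only the $L^2$ quantities $c_1,c_2,m,M$, and norms on the design-dependent spectral space $\mathcal H^1$, or into $H^1(0,1)$, are not controlled by these. Two further steps are needed.

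\begin{itemize}
\item Run your continuity-plus-compactness argument for the transformation on the weighted sequence spaces $\ell^2_r$, $r\in\{-1,0,1,2\}$. The weighted Hilbert--Schmidt bounds \eqref{eq:HSsums}, with constants made uniform in $(B,d)$, give operator-norm continuity of the matrix $\hat K$ on each $\ell^2_r$. Lemma~\ref{lem:smoothing}(ii) gives invertibility of $\hat T=I-\hat K$ there at each design, and compactness then bounds $\hat T^{\pm1}$ uniformly. Interpolation and modal duality yield the bounds for $T^{\pm1}$ on $\mathcal H^r$ and for $(T^\sharp)^{\pm1}$ on $\mathcal H^{-r}$.
\item Establish uniform elliptic regularity for $\Ap$ and $\Ap{}^*$: their graph norms must be equivalent to the $H^4$ norm with constants depending only on $B$. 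This is obtained by absorbing the nonlocal boundary value $\phi(0)=\varepsilon\ip{g_\theta}{\phi}$ and the adjoint term $\varepsilon\,(\psi'''(0)+\lambda(0)\psi'(0))\,g_\theta$. It bounds $\Gamma:\mathcal H^4\to H^4(0,1)$ and the spectral embedding uniformly, and interpolation gives the $\mathcal H^1$ bounds. This is the content of Lemma~\ref{lem:gamma} and of the last part of the paper's proof.
\end{itemize}

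A smaller gap of the same kind concerns $M$. A uniform bound on $\|S\|_{HS}$ in the Bari argument bounds $\|J\|^2=m^{-1}$, but not $M=\|J^{-1}\|^2=\|(I+S)^{-1}\|^2$. Bari's theorem gives invertibility of $I+S$ only qualitatively, and $I+S$ can be nearly singular while $\|S\|_{HS}$ stays bounded. You need three ingredients, exactly as you later use for $T$:
\begin{itemize}
\item operator-norm continuity of the synthesis map $J$, with eigenvectors compared against the fixed sine basis, since plant eigenfunctions cannot be followed continuously through a double eigenvalue;
\item invertibility of $J$ at each design;
\item compactness of the class.
\end{itemize}
The same continuity of $J^{\pm1}$ is what transfers your Hilbert--Schmidt continuity of $\hat K$ to operator-norm continuity of $T=I-J\hat KJ^{-1}$ on $\LL$. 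Your bound for $c_2$ relies on that continuity.
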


Theorem~\ref{thm:lipschitz} is not a routine continuity statement, for two
reasons. First, the gain is not given by a formula in $\lambda$ but through
the spectral data $(\mu_n',\chi_n,\tilde\chi_n,b_n')$ of the
pre-compensated operator $\Ap$, which is non-self-adjoint and whose domain
\eqref{eq:Aprime} moves with $\varepsilon$ and $\theta$. Hence,
perturbation results for a fixed operator do not apply directly. Second,
the eigenfunctions of $\Aop$ need not depend continuously on $\lambda$: at
a double eigenvalue of $\Aop$, which is exactly the configuration the
two-input design exists to handle, the eigenspace varies continuously but
no continuous choice of individual eigenfunctions exists. A
coefficient-by-coefficient argument therefore fails. The proof instead
reads all low-frequency spectral data of $\Ap$ off
its Riesz projections, which are continuous, and controls the
high-frequency data by asymptotics that hold uniformly on the class. In
this way, the gain and the transformation are shown to be continuous where
no continuous eigenbasis exists.

\subsection{Proof of Theorem~\ref{thm:lipschitz}: continuity of the kernel and gain}\label{ssec:continuity-proof}

We follow how a change in the plant coefficient and design parameters
passes through the construction of the gain and kernel. We first show
that the pre-compensated resolvent varies continuously and that limits
preserve the admissibility margins, proving compactness of
$\mathscr U_{B,d}$. Spectral projections then allow us to choose
eigenvectors continuously near each admissible design. Although the
eigenbasis changes with the design, its modal coefficients belong to the
same sequence spaces. We use these spaces to compare the reconstruction
maps and the infinite coefficient system, proving that its solution
varies continuously. Combining the coefficients with the eigenvectors
then gives continuity of the gain in $L^2(0,1)$ and the kernel in
$L^2((0,1)^2)$. At each stage, we handle finitely many modes by continuity
and control the remaining modes by uniform summable estimates, including
the kernel bounds from Lemma~\ref{lem:smoothing}(i). Finally, compactness
and continuity of inversion bound the transformations and their inverses.
The spectral estimates and Sobolev bounds from Appendix~\ref{app:modal-sobolev}
then give uniform bounds on the remaining stability constants.

\subsubsection{Resolvent continuity and uniform spectral estimates}\label{sssec:continuity-resolvent}
We first establish resolvent convergence under the size bounds
\eqref{eq:margins-size}, without assuming that the limiting parameters
are admissible. Subscript $0$ denotes the limiting parameters and
associated objects. The plant operators have a fixed domain. Since
$A_\lambda-A_{\lambda_0}=-\partial_x((\lambda-\lambda_0)\partial_x\,\cdot\,)$,
its $H^4$-to-$\LL$ norm tends to zero. The resolvent identity gives
\begin{align}
(z-A_\lambda)^{-1}&\longrightarrow\;(z-A_{\lambda_0})^{-1}
\quad\text{in }\mathcal L(\LL)
\text{ and }\mathcal L(\LL,H^4),\label{eq:app-g-display-010}
\end{align}
uniformly on compact subsets of the common resolvent set. Subtracting
from $\rho(z)$ a fixed smooth function with the same boundary data reduces
its boundary-value problem to one with homogeneous boundary conditions.
Thus \eqref{eq:app-g-display-010} gives $\rho(z)\to\rho_0(z)$ in $H^4$.
Continuity of $g_\theta$ then gives $\Xi(z)\to\Xi_0(z)$, with $\rho$ and
$\Xi$ defined in Lemma~\ref{lem:spectrum}.

To obtain a common resolvent point without assuming admissibility,
choose a sufficiently large positive $z_0$. The plant spectral and trace
bounds in \eqref{eq:mu-asymp} and Lemma~\ref{lem:plant-spectral}(iii) give
\begin{align}
\|\rho(z_0)\|^2
&=\;\sum_n\frac{|\alpha_n|^2}{|\mu_n-z_0|^2}
\le C(B)\sum_n\frac{n^6}{(z_0+n^4)^2}
\le C(B)z_0^{-1/4}.
\label{eq:common-resolvent-lifting}
\end{align}
Since $|\varepsilon|\|g_\theta\|$ is bounded by the size constraints,
we may choose $z_0=z_0(B)$ so that
\begin{align}
|\Xi(z_0)-1|
&\le\;|\varepsilon|\|g_\theta\|\|\rho(z_0)\|
\le\tfrac12.
\label{eq:common-resolvent-secular}
\end{align}
Formula \eqref{eq:resolvent} therefore gives convergence of the
pre-compensated resolvents at $z_0$ in both operator norms above,
including at a possibly inadmissible limit. The resolvent identity
extends this convergence to compact subsets of the limiting resolvent set.
The adjoint operators have the common domain $D(\Aop)$ by
Lemma~\ref{lem:spectrum}(ii). Their coefficients and boundary-trace term
vary continuously from $H^4$ to $\LL$, so the same resolvent convergence
holds for the adjoints.

We next obtain estimates that control the high modes uniformly.
The eigenvectors supplied by Section~\ref{sec:existence} are rescaled
to the unit-norm convention above before applying the admissibility margins.
The constants in Lemma~\ref{lem:high} depend only on the coefficient bound,
the size of the pre-feedback, and the modal tails of $g_\theta$.
Integration by parts and Lemma~\ref{lem:plant-spectral}(iii) give
$|g_n|\le C(B)/n$ uniformly over $\mathscr U_{B,d}$.

Lemma~\ref{lem:high} thus has a uniform threshold $N_0=N_0(B)$. For every design and $n\ge N_0$, the disk of radius $\mathfrak g_n/2$ about $\mu_n$ contains exactly one real, simple eigenvalue $\mu_n'$. With $C=C(B)$,
\begin{align}
|\mu_n'-\mu_n|&\le\;Cn^2,\quad
\tfrac12|\beta_n|\le|b_n'|\le2|\beta_n|,\quad
\|\chi_n-\varphi_n\|+\|\tilde\chi_n-\varphi_n\|\le C/n.\label{eq:unif-high}
\end{align}
The first estimate follows from
$|\mu_n'-\mu_n|\le2|\varepsilon||g_n||\alpha_n|\le Cn^2$
in Lemma~\ref{lem:high}.

This perturbation is smaller than the cubic gaps and the fourth-order
spectral growth. After enlarging the threshold, the off-diagonal
nonresonance estimate is
\begin{align}
|\mu_n'-\mu_m'+a|
&\ge\;c(B,d)(1+|n^4-m^4|),\qquad n\ne m.\label{eq:app-g-display-003}
\end{align}
These estimates control the high-mode constraints in \eqref{eq:margins}. We treat the finitely many low modes separately.

\subsubsection{Compactness of the admissible class}\label{sssec:continuity-compactness}
We now show that limits preserve the admissibility margins.
Arzel\`a--Ascoli applied to $\lambda$ and $\lambda'$ makes the bounded
$W^{2,\infty}$ coefficient class relatively compact in $W^{1,\infty}$.
The limit remains in the same ball by weak-$*$ closedness, and the scalar
parameters range over a compact set. Consider a sequence
$\mathbf p_j\in\mathscr U_{B,d}$ with
$\mathbf p_j\to\mathbf p$ in $W^{1,\infty}\times\R^3$.

Section~\ref{sssec:continuity-resolvent} gives norm convergence of the compact resolvents at $z_0$ without
requiring admissibility of $\mathbf p$. Thus isolated eigenvalues and
their Riesz projections converge. The uniform high-mode estimates leave
only finitely many eigenvalues in a common bounded region to consider.
After taking a subsequence and relabeling, their limits
$\zeta_1,\ldots,\zeta_{N_0-1}$ satisfy
\begin{align}
|\zeta_i-\zeta_l|&\ge\;d\max(i,l)^3,\qquad i\ne l.\label{eq:app-g-display-006}
\end{align}
These limits are distinct eigenvalues of the limiting operator. Norm
convergence of the compact resolvents excludes additional eigenvalues
in this bounded region. Choose a small circle $\Gamma_n$ around each
$\zeta_n$, enclosing no other limiting eigenvalue. By Section~\ref{sssec:continuity-resolvent}, the
resolvents converge uniformly on $\Gamma_n$, so the associated Riesz
projections converge and retain rank one. Thus each $\zeta_n$ is
algebraically simple. Applying the same argument
to the adjoints gives locally normalized eigenvectors converging in $H^4$.
The boundary traces therefore satisfy
\begin{align}
b_n'(\mathbf p_j)
&=\;-\overline{\tilde\chi_n'(0;\mathbf p_j)}
\longrightarrow b_n'(\mathbf p).\label{eq:app-g-display-008}
\end{align}
The separation and curvature-input margins pass to these limits.
Section~\ref{sssec:continuity-resolvent} controls the remaining high modes uniformly, and the nonresonance
and target-stability margins are preserved by the same spectral convergence.
The limiting design therefore belongs to $\mathscr U_{B,d}$, proving compactness.

\subsubsection{Continuity of the spectral coordinate maps}\label{sssec:continuity-coordinates}
Fix an admissible design $\mathbf p_0$. We choose eigenvectors continuously
near this design, then compare their modal reconstruction maps on $\ell^2$.

For each fixed $n$, surround $\mu_n'(\mathbf p_0)$ by an isolating
contour $\Gamma$. Choose a fixed $v$ with $P_n(\mathbf p_0)v\ne0$ and set
\begin{align}
P_n(\mathbf p)&=\;\frac1{2\pi i}\oint_\Gamma(z-\Ap)^{-1}\dd z,
\qquad \chi_n(\mathbf p)=\frac{P_n(\mathbf p)v}{\|P_n(\mathbf p)v\|}.
\label{eq:app-g-display-012}
\end{align}
Section~\ref{sssec:continuity-resolvent} makes $P_n$ continuous in both operator norms. Its rank remains
one because projections at operator-norm distance less than one have
equal rank. The eigenvalue and normalized eigenvector therefore vary
continuously. The biorthogonal adjoint eigenvector can also be chosen
continuously in $H^4$, so the trace inequality gives continuity of $b_n'$.
Local eigenvector choices suffice because the reconstructed gain and
kernel are independent of their phases.

Define the synthesis maps by $Je_n=\chi_n$,
$\tilde Je_n=\tilde\chi_n$, and $J_0e_n=s_n$, with $J_0$ unitary.
Comparing with the fixed sine basis avoids following plant eigenfunctions
through a double eigenvalue. For high modes, choose the phases so that
$\ip{\chi_n}{s_n}>0$. Then \eqref{eq:unif-high} and
Lemma~\ref{lem:plant-spectral}(iii) give
\begin{align}
\|\chi_n-s_n\|+\|\tilde\chi_n-s_n\|
&\le\;\frac{C(B)}{n},\qquad n\ge N_0(B).\label{eq:JJ0}
\end{align}
Continuity of each eigenvector and the uniform square-summable tail in
\eqref{eq:JJ0} give operator-norm continuity of $J$ and $\tilde J$.
Let $m,M>0$ be the optimal Riesz constants, defined by
\begin{align}
m\|w\|^2\le\|w\|_{\mathcal H^0}^2\le M\|w\|^2,
\qquad w\in\LL.\label{eq:riesz-const}
\end{align}
Thus $m^{-1}=\|J\|^2$ and $M=\|J^{-1}\|^2$.
Compactness from Section~\ref{sssec:continuity-compactness} and
invertibility of the synthesis maps from Theorem~\ref{thm:generic}(ii) give
\begin{align}
\|J\|+\|\tilde J\|+\|J^{-1}\|&\le\;C(B,d),
\qquad m^{-1},M\le C(B,d).\label{eq:app-g-display-022}
\end{align}
\subsubsection{Continuity of the feedback coefficients}\label{sssec:continuity-coefficients}
We solve the coefficient system in $\ell^\infty$, since the rescaled
coefficients $c_n$ below tend to one. Define
\begin{align}
\mathsf C_{mn}&=\;\begin{cases}
(\mu_n'-\mu_m'+a)^{-1},&n\ne m,\\
0,&n=m,
\end{cases}
\qquad c_n=\frac{a+\kappa_n}{a}.\label{eq:app-g-display-016}
\end{align}
Then \eqref{eq:kappa} becomes
\begin{align}
(I+a\mathsf C)c&=\;\mathbf1.\label{eq:app-g-display-017}
\end{align}
Estimate \eqref{eq:app-g-display-003} bounds the row and column sums:
\begin{align}
\max\left\{\sum_n|\mathsf C_{mn}|,\sum_n|\mathsf C_{nm}|\right\}
&\le\;C(B,d)m^{-3}\log(m+1).\label{eq:app-g-display-018}
\end{align}
These bounds give uniform boundedness on $\ell^\infty$ and compactness
by matrix truncation.
Each entry is continuous by Section~\ref{sssec:continuity-coordinates}, and the uniform tails upgrade this
to operator-norm continuity.

Uniqueness of the bounded solution in Theorem~\ref{thm:kernel}(iii)
implies that $I+a\mathsf C$ has trivial kernel on $\ell^\infty$.
The Fredholm alternative therefore gives invertibility. Continuity of
inversion gives continuity of $c=(I+a\mathsf C)^{-1}\mathbf1$ in
$\ell^\infty$, and compactness of the design class yields
\begin{align}
\sup_{\mathscr U_{B,d}}
\|(I+a\mathsf C)^{-1}\|_{\mathcal L(\ell^\infty)}&<\;\infty.\label{eq:c-cont}
\end{align}
The gain and kernel use two sequences with summable tails. First,
\begin{align}
K_n&=\;-\frac{ac_n}{b_n'},\qquad |K_n|\le C(B,d)n^{-1}.\label{eq:app-g-display-019}
\end{align}
Continuity and the square-summable bound give continuity of $(K_n)$ in
$\ell^2$. Second, the coefficient equation gives
$\kappa=-a^2\mathsf Cc$, so the row-sum estimate yields
\begin{align}
|\kappa_n|
&\le\;a^2\|c\|_{\ell^\infty}\sum_{m\ne n}|\mathsf C_{nm}|
\le C(B,d)n^{-3}\log(n+1).\label{eq:kappa-tail}
\end{align}
Dominated convergence therefore also gives continuity of $(\kappa_n)$
in $\ell^2$.

\subsubsection{Reconstruction of the gain and kernel}\label{sssec:continuity-reconstruction}

We now combine the continuous coefficients and coordinate maps to recover
the gain and kernel as functions.
The gain is reconstructed as
\begin{align}
h&=\;\overline{\tilde J\,(\overline{K_n})_n}.\label{eq:app-g-display-023}
\end{align}
Conjugation is an isometry of $\ell^2$ and of $\LL$, so Sections~\ref{sssec:continuity-coordinates}--\ref{sssec:continuity-coefficients} prove
continuity and uniform boundedness of $h$ in $\LL$.

The kernel matrix is given by \eqref{eq:app-e-display-007}.
Sections~\ref{sssec:continuity-coordinates}--\ref{sssec:continuity-coefficients} give uniform constants in the bounds for $c_n$, $K_n$, and $b_n'$, while
\eqref{eq:app-g-display-003} controls the off-diagonal denominators.
The coefficient equation \eqref{eq:kappa} and \eqref{eq:kappa-tail}
control $\kappa_n$, giving a uniform diagonal bound. Thus the bounds
\eqref{eq:app-e-display-014}--\eqref{eq:app-g-display-025} from the proof of
Lemma~\ref{lem:smoothing}(i) hold uniformly with constants depending only
on $B,d$. They provide a summable majorant for the squared matrix entries.
Entrywise continuity and dominated convergence therefore give
Hilbert--Schmidt continuity of $\hat K$. Since $K=J\hat KJ^{-1}$,
continuity of $J$, $J^{-1}$, and $\hat K$ gives Hilbert--Schmidt continuity
of $K$, equivalently continuity of $k$ in $L^2((0,1)^2)$.

\subsubsection{Uniform bounds on the design constants}\label{ssec:uniform}

The continuity established above bounds $\|h\|$ and $\|k\|$ on
the compact class $\mathscr U_{B,d}$. The coordinate estimate
\eqref{eq:app-g-display-022} bounds $m^{-1}$ and $M$.
We therefore need only bound the transformation and forcing constants.

We now establish uniform bounds for the transformations. Since the
modal spaces depend on the design, we compare the transformations on
the fixed spaces $\ell^2_r$. The matrix bounds
\eqref{eq:app-e-display-014}--\eqref{eq:app-g-display-025}, made uniform
in Section~\ref{sssec:continuity-reconstruction}, and the weighted
summability in \eqref{eq:HSsums} give operator-norm continuity of
$\hat K$ on $\ell^2_r$ for $r\in\{-1,0,1,2\}$.
Lemma~\ref{lem:smoothing}(ii) makes $\hat T=I-\hat K$ invertible at
every design. Continuity of inversion and compactness then give
\begin{align}
\sup_{\mathscr U_{B,d}}
\|\hat T^{\pm1}\|_{\mathcal L(\ell^2_r)}&<\;\infty,
\qquad r\in\{-1,0,1,2\}.\label{eq:app-g-display-026}
\end{align}
Returning to the modal spaces and interpolating gives the bounds for
$T^{\pm1}$. The norm identity in Lemma~\ref{lem:smoothing}(ii),
also applied to $T^{-1}$, gives the bounds for $(T^\sharp)^{\pm1}$.
Together with the smoothing estimate in \eqref{eq:HSsums}, this gives
\begin{align}
\|T^{\pm1}\|_{\mathcal L(\mathcal H^r)}
+\|(T^\sharp)^{\pm1}\|_{\mathcal L(\mathcal H^{-r})}
&\le\;C(B,d),\qquad r\in[-1,2],\label{eq:uniform-transform-scale}\\
\|K\|_{\mathcal L(\mathcal H^r,\mathcal H^{r+2})}
&\le\;C(B,d),\qquad r\in\{-1,0\}.\label{eq:uniform-smoothing-scale}
\end{align}
In particular, \eqref{eq:uniform-transform-scale} controls $T^\sharp$ on
$\mathcal H^1$, as needed for the nonlinear estimate.
Taking $r=0$ and using the uniform $L^2$ norm equivalence from
\eqref{eq:app-g-display-022} yields $c_1,c_2\le C(B,d)$.

To define the nonlinear constant, we first relate the modal inner product
to the usual $L^2$ pairing through
\begin{equation}\label{eq:Gamma}
\Gamma w:=\sum_n w_n\tilde\chi_n=\tilde JJ^{-1}w.
\end{equation}
Then $\hp{f}{w}=\ip{f}{\Gamma w}$ for $f,w\in\LL$.
Lemma~\ref{lem:gamma} in Appendix~\ref{app:modal-sobolev} gives
$\|\Gamma\|_{\mathcal L(\LL)}\le M$ and
$\|\Gamma^{-1}\|_{\mathcal L(\LL)}\le m^{-1}$.
It also shows that $\Gamma:\mathcal H^s\to H^s(0,1)$ is bounded
for $0\le s\le4$, uniformly on $\mathscr U_{B,d}$, and its images
vanish at both endpoints for $1/2<s\le4$.

We next define the dissipation weights and boundary-forcing constant.
For an admissible design, let
\begin{align}
\nu:=a-\max_n\operatorname{Re}\mu_n'=a-\operatorname{Re}\mu_1'>0.
\label{eq:nu}
\end{align}
For $0<\nu_0<\nu$, define
\begin{align}
\gamma_n:=2\big(a-\operatorname{Re}\mu_n'-\nu_0\big).
\label{eq:dissipation-weights}
\end{align}
Every weight satisfies $\gamma_n\ge2(\nu-\nu_0)>0$.
The plant estimate \eqref{eq:mu-asymp} and the displacement estimate
\eqref{eq:prefeedback-shift-small} give
\begin{align}
a-\operatorname{Re}\mu_n'-\nu_0
&=\;\pi^4n^4+o(n^3)\ge\tfrac12\pi^4n^4,
\qquad n\text{ sufficiently large}.\label{eq:app-f-display-012}
\end{align}
Enlarge $N_0$ so that this holds for every $n\ge N_0$, and set
\begin{align}
c_\gamma:=\min\{\pi^4,2(\nu-\nu_0)N_0^{-4}\}>0.
\label{eq:dissipation-coercivity}
\end{align}
Then $\gamma_n\ge c_\gamma n^4$ for all $n\ge1$. Consequently,
\begin{align}\label{eq:dissipation-constants}
\Sigma(w):=\sum_n\gamma_n|w_n|^2\ge c_\gamma\|w\|_{\mathcal H^2}^2,
\qquad w\in\mathcal H^2.
\end{align}
Define
\begin{align}
\Lambda:=\left(\sum_{n\ge1}\frac{|b_n'|^2}{\gamma_n}\right)^{1/2}<\infty.
\label{eq:Lambda}
\end{align}
The series converges because Theorem~\ref{thm:generic}(ii) gives
$|b_n'|\le Cn$, while $\gamma_n\ge c_\gamma n^4$.

For uniform bounds on $\mathscr U_{B,d}$, take $\nu_0=d/2$.
The admissibility margin \eqref{eq:margins-lower} gives $\nu\ge d$.
With $\nu_0=d/2$, this and the uniform high-mode estimates in
Section~\ref{sssec:continuity-resolvent} give
\begin{align}
\gamma_n&\ge\;d\quad(n\ge1),
\qquad \gamma_n\ge\pi^4n^4\quad(n\ge N_0),\label{eq:app-g-display-028}
\end{align}
where $N_0$ is uniform over $\mathscr U_{B,d}$.
Thus $c_\gamma\ge\min\{d/N_0^4,\pi^4\}>0$.
The spectral and coordinate estimates in
Sections~\ref{sssec:continuity-resolvent}--\ref{sssec:continuity-coordinates} give $|b_n'|\le C(B,d)n$ uniformly.
Splitting the series for $\Lambda^2$ at $N_0$ therefore bounds the finite
sum and the summable tail, yielding
\begin{align}
c_\gamma^{-1}&\le\;C(B,d),
\qquad \Lambda^2=\sum_n\frac{|b_n'|^2}{\gamma_n}\le C(B,d).\label{eq:app-g-display-029}
\end{align}

Finally, let $C_{\rm GN}$ be a constant in the Gagliardo--Nirenberg
inequality $\|v\|_{L^4}^2\le C_{\rm GN}\|v\|^{3/2}\|v\|_{H^1}^{1/2}$
\cite[p.~233, Eq.~(42)]{Brezis2011}, and let $C_{\rm emb}$ be the norm
of the embedding $\mathcal H^1\hookrightarrow H^1(0,1)$.
Define the nonlinear constant by
\begin{align}
C_{\rm N}:=C_{\rm GN}\,\|\Gamma\|_{\mathcal L(\mathcal H^1,H^1)}
\,\|T^\sharp\|_{\mathcal L(\mathcal H^1)}
\,\|T^{-1}\|_{\mathcal L(\mathcal H^1)}^{1/2}\,C_{\rm emb}^{1/2}.
\label{eq:nonlinear-constant}
\end{align}
The transformation bounds \eqref{eq:uniform-transform-scale} and the
Sobolev bounds proved in Lemma~\ref{lem:gamma} uniformly bound every
factor. Hence $C_{\rm N}\le C(B,d)$, completing the proof of
\eqref{eq:unif-const}.
\qed

\subsection{Neural-operator approximation of the Fredholm gain}\label{sec:neural-approximation}

We approximate $\mathcal G$ by a neural operator
\cite{LuEtAl2021DeepONet,LanthalerLiStuart2025}, as defined in
\cite[Def.~3]{BhanKrsticShi2026Horizons}, mapping $\lambda$ and
$(a,\varepsilon,\theta)$ to the curvature gain.
Theorem~\ref{thm:lipschitz} and neural-operator universality yield:

\begin{theorem}[Neural-operator approximation {\cite[Thm.~2]{LanthalerLiStuart2025}}]\label{thm:NO}
Let $B>0$, $d\in(0,1]$ and $\epsilon>0$. There exists a neural operator
$\hat{\mathcal G}$, with function input $\lambda$, scalar inputs
$(a,\varepsilon,\theta)$, and output in $\LL$, such that
\begin{equation}\label{eq:NO}
\sup_{\mathbf p\in\mathscr U_{B,d}}
\|\mathcal G(\mathbf p)-\hat{\mathcal G}(\mathbf p)\|\le\epsilon,
\qquad \mathbf p=(\lambda,a,\varepsilon,\theta).
\end{equation}
\end{theorem}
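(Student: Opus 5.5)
The plan is to reduce Theorem~\ref{thm:NO} to the universal approximation theorem of \cite[Thm.~2]{LanthalerLiStuart2025}. That theorem covers continuous operators from a compact subset of a Banach input space into a Banach output space. The two ingredients it requires are compactness of the input set and continuity of the target map, and Theorem~\ref{thm:lipschitz} supplies both. It shows that $\mathscr U_{B,d}$ is compact in $W^{1,\infty}(0,1)\times\R^3$ and that $\mathcal G:\mathscr U_{B,d}\to\LL$ is continuous.

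First, I would fix the input space $\mathcal X:=W^{1,\infty}(0,1)\times\R^3$, or a convenient larger space into which it embeds. Continuity and compactness are inherited under a continuous embedding, so the choice is flexible. A natural option is $C([0,1])\times\R^3$, since $W^{1,\infty}\hookrightarrow C([0,1])$ continuously. The scalar inputs $(a,\varepsilon,\theta)$ are encoded as constant functions and appended as extra channels. The map $\mathbf p\mapsto(\lambda,a\mathbf 1,\varepsilon\mathbf 1,\theta\mathbf 1)$ is continuous and injective. The image of $\mathscr U_{B,d}$ under it is therefore a compact subset $\mathcal K$ of $C([0,1];\R^4)$. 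On $\mathcal K$ the operator $\mathcal G$ is continuous, because a continuous bijection from a compact space to a Hausdorff space is a homeomorphism. The output space $\LL$ is separable Hilbert, and real-valuedness of $h$ from Theorem~\ref{thm:kernel}(iii) places it in real $L^2$.

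Second, I would apply the universality result to obtain a neural operator within $\epsilon$ of $\mathcal G$ uniformly on $\mathcal K$. If the cited theorem is phrased only for maps defined on all of the input space, I would first extend $\mathcal G|_{\mathcal K}$ to a continuous map on all of $C([0,1];\R^4)$ with values in $\LL$. Dugundji's extension theorem permits this, because $\mathcal K$ is closed and $\LL$ is a locally convex target. Any approximation on compacts of the extension then restricts to \eqref{eq:NO}. Composing with the encoding map gives $\hat{\mathcal G}$ with function input $\lambda$ and scalar inputs $(a,\varepsilon,\theta)$.

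The main obstacle is not the analysis here. The substantive work is the continuity and compactness already proved in Theorem~\ref{thm:lipschitz}, in particular handling double eigenvalues of $\Aop$ through the Riesz projections. What remains is to check that the architectural hypotheses of \cite[Def.~3]{BhanKrsticShi2026Horizons} match those of \cite{LanthalerLiStuart2025}. The points to verify are the input topology (sup norm versus $L^2$ or $W^{1,\infty}$) and the handling of the scalar channels. I would resolve the topology issue by noting that compactness in $W^{1,\infty}$ implies compactness in any weaker norm, such as $C([0,1])$ or $L^2$. Continuity of $\mathcal G$ in that weaker norm on $\mathcal K$ again follows from the homeomorphism argument above.
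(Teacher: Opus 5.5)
Your proposal is correct and takes the paper's route: the paper simply combines compactness of $\mathscr U_{B,d}$ and continuity of $\mathcal G$ from Theorem~\ref{thm:lipschitz} with the universal approximation theorem of \cite{LanthalerLiStuart2025}. Encoding the scalars as constant channels, the homeomorphism argument for weaker input topologies, and the Dugundji extension are valid bookkeeping that the paper leaves implicit.
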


This theorem establishes existence without bounding network size,
training cost, or sample complexity. These questions are studied separately
through parametric complexity bounds \cite{LanthalerStuart2026Complexity},
data complexity estimates \cite{KovachkiLanthalerMhaskar2024Data}, and
empirical cost--accuracy comparisons \cite{DeHoopEtAl2022CostAccuracy}.
Here, the result supplies the approximation required in the stability
estimate: a single map, accurate to a prescribed tolerance
across the whole design class.

\section{Closed-loop stability}\label{sec:target}
We first establish exponential stability of the target system.
We then control the nonlinear forcing and use the bounded operators
$T$ and $T^{-1}$ to obtain local rapid stabilization of the plant
under the exact curvature gain. We then extend this result to the case where the 
gain is approximated by perscribing a maximum error tolerance for which
the system remains locally stable. 

\subsection{Stability of the target system}\label{subsec:stability-target}
We study the linear part of the target \eqref{eq:target} together with its
inhomogeneous version
\begin{align}
w_t=(\Ap-a)w+f,\qquad w(0)=w_0\in\LL .\label{eq:target-lin}
\end{align}
The forcing $f$ collects everything beyond the linear design. In the
closed-loop variable $w=Tu$, it is
$f=-T(uu_x)+B_2\ip{\hat h-h}{u}$, where $\hat h$ is the implemented
curvature gain. The first term is the transformed Burgers nonlinearity,
and the second is the effect of approximating the gain. Thus the
forced-target estimates below apply to both the exact-gain analysis in
Section~\ref{ssec:exact-stability} and the gain-error analysis in
Section~\ref{sec:approx-stability}.

We use the decay margin $\nu$ defined in \eqref{eq:nu}.
Using the modal inner product and norms from Section~\ref{ssec:notation}, we define the Lyapunov function by
\begin{align}
V(w)&:=\sum_{n=1}^{\infty}|w_n|^2=\|w\|_{\mathcal H}^2.
\label{eq:lyapunov}
\end{align}
The Riesz bounds \eqref{eq:riesz-const} give
$m\|w\|^2\le V(w)\le M\|w\|^2$.

\begin{proposition}[Exponential stability of forced target system]\label{prop:target-lin}
Let $(a,\varepsilon,\theta)\in\mathscr U_\lambda$, let $\Ap$ be the pre-compensated operator \eqref{eq:Aprime}, let $\nu=a-\max_n\operatorname{Re}\mu_n'$ be as in \eqref{eq:nu}, and consider the forced target system \eqref{eq:target-lin}, $w_t=(\Ap-a)w+f$, $w(0)=w_0\in\LL$. Then:
\begin{enumerate}[label=(\roman*)]
\item $\Ap-a$ generates an analytic semigroup on $\LL$, namely $e^{t(\Ap-a)}w=\sum_ne^{(\mu_n'-a)t}w_n\chi_n$, and for every $w_0\in\LL$ and $f\in L^2_{\rm loc}([0,\infty);\mathcal H^{-2})$ problem \eqref{eq:target-lin} has a unique mild solution, and $w\in C([0,\infty);\LL)\cap L^2_{\rm loc}([0,\infty);\mathcal H^2)$ with $w_t\in L^2_{\rm loc}([0,\infty);\mathcal H^{-2})$.
\item Along this solution, $t\mapsto V(w(t))$ is absolutely continuous and, for every $\nu_0\in(0,\nu)$,
\begin{equation}\label{eq:Vdot}
\begin{aligned}
\frac{d}{dt}V(w)&=-2\sum_{n\ge1}\big(a-\operatorname{Re}\mu_n'\big)|w_n|^2+2\operatorname{Re}\hp{f}{w}\\
&=-2\nu_0\,V(w)-\sum_{n\ge1}\gamma_n|w_n|^2+2\operatorname{Re}\hp{f}{w},
\end{aligned}
\end{equation}
where $\gamma_n$ and $c_\gamma$ are defined in
\eqref{eq:dissipation-weights} and \eqref{eq:dissipation-coercivity}.
In particular, $\gamma_n\ge2(\nu-\nu_0)>0$ and $\gamma_n\ge c_\gamma n^4$.
\item In particular, for $f\equiv0$,
\begin{equation}\label{eq:target-decay}
V(w(t))\le e^{-2\nu t}V(w_0),\qquad \|w(t)\|\le\sqrt{M/m}\;e^{-\nu t}\|w_0\|,\qquad t\ge0 ,
\end{equation}
and, for general $f$ and every $\delta\in(0,1)$,
\begin{equation}\label{eq:target-forced}
\begin{gathered}
\frac{d}{dt}V(w)\le-2\nu_0 V(w)-(1-\delta)\sum_n\gamma_n|w_n|^2+\frac{1}{\delta}\,\|f\|_{\mathcal H^{-2}_\gamma}^2,\\
\|f\|_{\mathcal H^{-2}_\gamma}^2:=\sum_n\frac{|f_n|^2}{\gamma_n}\le\frac{\|f\|_{\mathcal H^{-2}}^2}{c_\gamma} .
\end{gathered}
\end{equation}
\end{enumerate}
\end{proposition}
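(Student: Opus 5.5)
The plan is to work entirely in the modal coordinates supplied by the Riesz basis of Theorem~\ref{thm:generic}(ii). Writing $w=\sum_n w_n\chi_n$ with $w_n=\ip{w}{\tilde\chi_n}$, the operator $\Ap-a$ is diagonal with eigenvalues $\mu_n'-a$. Their real parts are at most $-\nu<0$, and by \eqref{eq:HL-hyp} together with \eqref{eq:app-f-display-012} they satisfy $-\operatorname{Re}(\mu_n'-a)\ge c n^4$ for large $n$.

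For (i), I would first check that the high eigenvalues are real: for $n\ge N_0$ this is Lemma~\ref{lem:high}, and only finitely many eigenvalues are nonreal. Hence the spectrum lies in a sector $\{|\arg(z-z_*)|\ge\pi/2+\delta\}$. A diagonal operator in a Riesz basis with sectorial spectrum generates an analytic semigroup, given by the stated series, and it is transported to $\LL$ by the isomorphism $J$. For the forced problem I would solve mode by mode,
\begin{align}
w_n(t)=e^{(\mu_n'-a)t}w_{0,n}+\int_0^te^{(\mu_n'-a)(t-s)}f_n(s)\dd s .
\end{align}
Young's inequality for convolutions, applied with weights $\gamma_n\sim n^4$, gives $\sup_t|w_n|^2+\gamma_n\int_0^t|w_n|^2\le C\big(|w_{0,n}|^2+\gamma_n^{-1}\int_0^t|f_n|^2\big)$. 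Summing over $n$ yields $w\in C([0,\infty);\LL)\cap L^2_{\rm loc}(\mathcal H^2)$. Continuity in time follows by dominated convergence on the tails. The bound $w_t\in L^2_{\rm loc}(\mathcal H^{-2})$ is read off from $\dot w_n=(\mu_n'-a)w_n+f_n$. Uniqueness of mild solutions is standard.

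For (ii), each $|w_n|^2$ is absolutely continuous with
\begin{align}
\frac{d}{dt}|w_n|^2=2\operatorname{Re}(\mu_n'-a)|w_n|^2+2\operatorname{Re}(f_n\overline{w_n}).
\end{align}
Summing this identity requires justification, and that is the main obstacle. I would use the Lions--Magenes duality pairing $\mathcal H^2$--$\mathcal H^{-2}$ in the modal inner product: $\sum_n|\operatorname{Re}(\mu_n'-a)||w_n|^2\le C\|w\|_{\mathcal H^2}^2\in L^1_{\rm loc}$ and $\sum_n|f_n||w_n|\le\|f\|_{\mathcal H^{-2}}\|w\|_{\mathcal H^2}\in L^1_{\rm loc}$. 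So the integrated identities sum by dominated (Fubini) convergence, and $V$ is absolutely continuous with the first line of \eqref{eq:Vdot}. The second line is simply the algebraic rewriting $2(a-\operatorname{Re}\mu_n')=2\nu_0+\gamma_n$ from \eqref{eq:dissipation-weights}. The bounds $\gamma_n\ge2(\nu-\nu_0)$ and $\gamma_n\ge c_\gamma n^4$ are those already shown after \eqref{eq:dissipation-coercivity}.

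For (iii), set $f=0$. Since $a-\operatorname{Re}\mu_n'\ge\nu$, we get $\dot V\le-2\nu V$, and Gr\"onwall gives the first estimate. Combined with $m\|w\|^2\le V\le M\|w\|^2$, this gives the second. For general $f$, I would bound
\begin{align}
2|f_n\overline{w_n}|\le\delta\gamma_n|w_n|^2+\delta^{-1}\gamma_n^{-1}|f_n|^2
\end{align}
and insert this into \eqref{eq:Vdot} to obtain \eqref{eq:target-forced}. The final inequality follows from $\gamma_n\ge c_\gamma n^4$, which gives $\sum_n|f_n|^2/\gamma_n\le c_\gamma^{-1}\sum_n n^{-4}|f_n|^2=c_\gamma^{-1}\|f\|_{\mathcal H^{-2}}^2$.
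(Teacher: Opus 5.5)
Your proposal is correct and follows the paper's proof essentially step by step. Both arguments use diagonalization in the Riesz basis with the sector argument via Lemma~\ref{lem:high}, then modal Duhamel formulas with per-mode bounds summed using $a-\operatorname{Re}\mu_n'\sim n^4$, then termwise summation of $\frac{d}{dt}|w_n|^2$ justified by the $\mathcal H^2$--$\mathcal H^{-2}$ duality bound, and finally the weighted Young inequality for \eqref{eq:target-forced}. The only cosmetic difference is that you obtain the per-mode estimate from Young's convolution inequality on the Duhamel formula, whereas the paper multiplies the modal equation by $\overline{w_n}$ and integrates; both give the same bound.
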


\begin{proof}
Theorem~\ref{thm:generic}(ii) supplies the Riesz basis and quartic spectral growth,
while Lemma~\ref{lem:high} shows that sufficiently high eigenvalues are
real and negative. We use these properties to prove (i), then establish
the forced energy identity in (ii) and the decay and forcing estimates in (iii).

We first prove (i). In the Riesz coordinates of Theorem~\ref{thm:generic}(ii),
$J^{-1}(\Ap-a)J=\operatorname{diag}(\mu_n'-a)$, and the semigroup is
obtained by shifting \eqref{eq:app-d-display-066}. By \eqref{eq:nu},
$\operatorname{Re}(\mu_n'-a)\le-\nu$. Since only finitely many eigenvalues
are nonreal, Lemma~\ref{lem:high} also gives
$|\arg(a-\mu_n')|\le\pi/2-\delta'$ for some $\delta'>0$.
The diagonal semigroup therefore extends analytically to a sector.

To establish existence and regularity in (i), fix a finite interval
$[0,\tau]$ and let
$f\in L^2(0,\tau;\mathcal H^{-2})$. The variation-of-constants formula
gives, in each coordinate,
\begin{align}
w_n(t)&=\;e^{(\mu_n'-a)t}w_{0,n}
 +\int_0^t e^{(\mu_n'-a)(t-s)}f_n(s)\dd s.\label{eq:app-f-display-005}
\end{align}
Taking the real part of the coordinate equation after multiplication by
$\overline{w_n}$, applying Young's inequality, and integrating gives
\begin{align}
\int_0^\tau(a-\operatorname{Re}\mu_n')|w_n|^2\dd t
&\le\;|w_{0,n}|^2+
\int_0^\tau\frac{|f_n|^2}{a-\operatorname{Re}\mu_n'}\dd t.\label{eq:app-f-display-006}
\end{align}
Theorem~\ref{thm:generic}(ii) and the positive margin \eqref{eq:nu} give
$cn^4\le a-\operatorname{Re}\mu_n'\le Cn^4$. Summing therefore gives
$w\in L^2(0,\tau;\mathcal H^2)$ and
$w_t\in L^2(0,\tau;\mathcal H^{-2})$.
Retaining the terminal energy in the same scalar estimate and summing
only over $n>N$ gives
\begin{align}
\sup_{0\le t\le\tau}\sum_{n>N}|w_n(t)|^2
&\le\;\sum_{n>N}|w_{0,n}|^2
+\int_0^\tau\sum_{n>N}\frac{|f_n(s)|^2}{a-\operatorname{Re}\mu_n'}\dd s
\longrightarrow0\quad(N\to\infty).\label{eq:target-continuous-tail}
\end{align}
Thus the finite modal sums converge in $C([0,\tau];\LL)$ to the mild
solution. Uniqueness follows from \eqref{eq:app-f-display-005}.
Since $\tau$ is arbitrary, this proves (i).

We next prove (ii). Each coordinate is absolutely continuous and satisfies, almost everywhere,
\begin{align}
\frac{\dd}{\dd t}|w_n|^2
&=\;2\operatorname{Re}\big((\mu_n'-a)|w_n|^2+f_n\bar w_n\big).\label{eq:app-f-display-008}
\end{align}
By \eqref{eq:app-f-display-006}, the summed dissipation is integrable.
Moreover, weighted Cauchy--Schwarz gives
\begin{align}
\sum_n|f_n\bar w_n|
&\le\;\|f\|_{\mathcal H^{-2}}\|w\|_{\mathcal H^2}
\in L^1(0,\tau).\label{eq:app-f-display-010}
\end{align}
The integrated identities for the partial sums converge uniformly, and
their derivatives converge in $L^1(0,\tau)$. Thus $V(w)$ is absolutely
continuous and satisfies \eqref{eq:Vdot}.

The weight bounds and the choice of $c_\gamma$ were established in
Section~\ref{ssec:uniform}. Together with \eqref{eq:Vdot}, they complete (ii).

Finally, we prove (iii). For $f=0$, \eqref{eq:app-f-display-005} and
\eqref{eq:nu} give $V(w(t))\le e^{-2\nu t}V(w_0)$.
The norm equivalence \eqref{eq:riesz-const} yields the second estimate
in \eqref{eq:target-decay}. For general $f$, weighted Cauchy--Schwarz
and Young's inequality give
\begin{align}
2\operatorname{Re}\hp fw
&\le\;\delta\sum_n\gamma_n|w_n|^2
 +\delta^{-1}\sum_n\frac{|f_n|^2}{\gamma_n},
\qquad \delta>0.\label{eq:app-f-display-014}
\end{align}

Substituting into \eqref{eq:Vdot} for $0<\delta<1$ gives
\eqref{eq:target-forced}. Its norm bound follows from
$\gamma_n\ge c_\gamma n^4$, completing (iii).
\end{proof}

This proposition establishes exponential decay of the linear target system
and provides the Lyapunov identity used below to control the nonlinear
forcing $f=-T(uu_x)$.

\subsection{Local rapid stabilization of the nonlinear plant with the exact gain}\label{ssec:exact-stability}
For the exact curvature gain, the transformed nonlinear system is
\begin{equation}\label{eq:target-nl}
w_t=(\Ap-a)w-T(uu_x),\qquad w(0)=Tu_0.
\end{equation}

\begin{definition}[Solutions of the closed loop]\label{def:solution}
Let $u_0\in\LL$ and set
$Y_\tau:=C([0,\tau);\LL)\cap L^2_{\rm loc}([0,\tau);\mathcal H^2)$.
A function $u\in Y_\tau$ is a \emph{solution} of the closed loop
\eqref{eq:plant}, \eqref{eq:U1}, \eqref{eq:U2hat} on the interval
$[0,\tau)$ if the function $w:=Tu$ is the mild solution of
\begin{equation}\label{eq:general-mild-solution}
w_t=(\Ap-a)w+f,\qquad w(0)=Tu_0,
\end{equation}
where $f\in L^2_{\rm loc}([0,\tau);\mathcal H^{-2})$ is the corresponding
target forcing.
\end{definition}
For the exact gain, this definition applies with $f=-T(uu_x)$. Proposition~\ref{prop:wp} in Appendix~\ref{app:well-posedness},
applied with $\hat h=h$, supplies local existence, uniqueness, and the
continuation criterion used below.

The identity $\hp{f}{w}=\ip{f}{\Gamma w}$, with $\Gamma$ defined in
\eqref{eq:Gamma}, lets us estimate the nonlinear term using integration
by parts. Lemma~\ref{lem:gamma} gives the Sobolev bounds and zero endpoint
values needed below.

\begin{lemma}[Estimate of the nonlinear forcing term]\label{lem:forcing}
Let $(a,\varepsilon,\theta)\in\mathscr U_\lambda$ and let $T$ be the Fredholm transformation \eqref{eq:transform}. Let $u\in\mathcal H^2$ and $w=Tu$. Then
\begin{equation}
\big|\hp{T(uu_x)}{w}\big|\le C_{\rm N}\,\|u\|^{3/2}\,\|w\|_{\mathcal H^1}^{3/2},\label{eq:nl-est}
\end{equation}
where $C_{\rm N}$ is defined in \eqref{eq:nonlinear-constant} and
satisfies $C_{\rm N}\le C(B,d)$ on $\mathscr U_{B,d}$.
\end{lemma}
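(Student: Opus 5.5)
The plan is to read off the structure of $C_{\rm N}$ in \eqref{eq:nonlinear-constant}: the bound should come from moving $T$ onto $w$ through the modal adjoint, converting the modal pairing into an $L^2$ pairing via $\Gamma$, integrating by parts once, and then applying Gagliardo--Nirenberg. Concretely, by the definition of $T^\sharp$ we first write
\begin{align}
\hp{T(uu_x)}{w}=\hp{uu_x}{T^\sharp w}=\ip{uu_x}{\Gamma T^\sharp w},
\end{align}
using $\hp{f}{v}=\ip{f}{\Gamma v}$ from \eqref{eq:Gamma}. Since $u\in\mathcal H^2$ and, by Lemma~\ref{lem:gamma}, $\Gamma$ maps $\mathcal H^s$ boundedly into $H^s(0,1)$, we have $uu_x\in L^2$. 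We set $\psi:=\Gamma T^\sharp w$. Lemma~\ref{lem:smoothing}(ii) gives $T^\sharp\in\mathcal L(\mathcal H^1)$, so $\psi\in H^1(0,1)$, and Lemma~\ref{lem:gamma} gives $\psi(0)=\psi(1)=0$ because $1>1/2$.

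Next we integrate by parts, writing $uu_x=\tfrac12(u^2)_x$, to get $\ip{uu_x}{\psi}=-\tfrac12\int_0^1u^2\,\overline{\psi'}\dd x$. The boundary terms vanish because $\psi$ vanishes at both endpoints; this is the point where the zero-endpoint property of $\Gamma$ is needed, since $u(0)=\varepsilon\ip{g_\theta}{u}$ need not vanish. Cauchy--Schwarz then gives $|\hp{T(uu_x)}{w}|\le\tfrac12\|u\|_{L^4}^2\|\psi'\|$. For the second factor,
\begin{align}
\|\psi'\|\le\|\Gamma\|_{\mathcal L(\mathcal H^1,H^1)}\|T^\sharp\|_{\mathcal L(\mathcal H^1)}\|w\|_{\mathcal H^1}.
\end{align}
For the first factor, the Gagliardo--Nirenberg inequality gives $\|u\|_{L^4}^2\le C_{\rm GN}\|u\|^{3/2}\|u\|_{H^1}^{1/2}$.

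It remains to express $\|u\|_{H^1}$ through $w$. Since $u=T^{-1}w$, the embedding $\mathcal H^1\hookrightarrow H^1(0,1)$ gives $\|u\|_{H^1}\le C_{\rm emb}\|T^{-1}\|_{\mathcal L(\mathcal H^1)}\|w\|_{\mathcal H^1}$, where $T^{-1}$ is bounded on $\mathcal H^1$ by Lemma~\ref{lem:smoothing}(ii). Taking square roots and multiplying the factors yields \eqref{eq:nl-est}, with exactly the constant \eqref{eq:nonlinear-constant}; the factor $\tfrac12$ can be dropped or absorbed. The uniform bound $C_{\rm N}\le C(B,d)$ on $\mathscr U_{B,d}$ is already part of \eqref{eq:unif-const} in Theorem~\ref{thm:lipschitz}.

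The delicate step is justifying the pairing identities for these regularities. Specifically, we need that $\hp{\cdot}{\cdot}$ and the $L^2$ pairing agree for $uu_x\in L^2$ and $w\in\mathcal H^1$, and that the embedding $\mathcal H^1\hookrightarrow H^1$ together with the endpoint vanishing of $\Gamma$ holds as stated in Lemma~\ref{lem:gamma}. Both hold because all objects lie in $L^2$, where $\hp{f}{v}=\ip{f}{\Gamma v}$ holds by definition. If needed, we would first prove the estimate for finite modal sums and then pass to the limit by density.
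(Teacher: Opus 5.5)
Your proposal is correct and follows the paper's proof essentially step for step: set $\phi=\Gamma T^\sharp w\in H^1_0(0,1)$, integrate by parts to $-\tfrac12\int_0^1u^2\overline{\phi'}$, apply Gagliardo--Nirenberg, and bound $\|u\|_{H^1}$ and $\|\phi\|_{H^1}$ by $C_{\rm emb}\|T^{-1}\|_{\mathcal L(\mathcal H^1)}\|w\|_{\mathcal H^1}$ and $\|\Gamma\|_{\mathcal L(\mathcal H^1,H^1)}\|T^\sharp\|_{\mathcal L(\mathcal H^1)}\|w\|_{\mathcal H^1}$, which gives exactly $C_{\rm N}$. One small slip: $uu_x\in L^2$ for $u\in\mathcal H^2$ follows from the spectral embedding $\mathcal H^2\hookrightarrow H^2(0,1)$ (the same embedding behind $C_{\rm emb}$), not from the mapping property of $\Gamma$, which concerns $\Gamma u$ rather than $u$.
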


\begin{proof}
To estimate the modal pairing in physical coordinates, set
$\phi=\Gamma T^\sharp w$. Lemma~\ref{lem:smoothing}(ii) and
Lemma~\ref{lem:gamma} give
$\phi\in H^1_0(0,1)$.
The definitions of $T^\sharp$ and $\Gamma$, followed by integration by
parts, give
\begin{align}
\hp{T(uu_x)}{w}
&=\;\ip{uu_x}{\phi}
=-\frac12\int_0^1u^2\overline{\phi'}\dd x.\label{eq:app-h-display-009}
\end{align}
The boundary term vanishes because $\phi$ vanishes at both endpoints,
although $u(0)$ need not vanish.
The Gagliardo--Nirenberg inequality \cite[p.~233, Eq.~(42)]{Brezis2011} gives
\begin{align}
|\hp{T(uu_x)}{w}|
&\le\;\tfrac12\|u\|_{L^4}^2\|\phi'\|
\le C_{\rm GN}\|u\|^{3/2}\|u\|_{H^1}^{1/2}\|\phi\|_{H^1}.\label{eq:app-h-display-011}
\end{align}
Since $u=T^{-1}w$, Lemma~\ref{lem:smoothing}(ii) and the Sobolev bounds
established in Lemma~\ref{lem:gamma} give
\begin{align}
\|u\|_{H^1}
&\le\;C_{\rm emb}\|T^{-1}\|_{\mathcal L(\mathcal H^1)}
   \|w\|_{\mathcal H^1},\notag\\
\|\phi\|_{H^1}
&\le\;\|\Gamma\|_{\mathcal L(\mathcal H^1,H^1)}
   \|T^\sharp\|_{\mathcal L(\mathcal H^1)}
   \|w\|_{\mathcal H^1}.\label{eq:app-h-display-012}
\end{align}
Here $C_{\rm emb}$ is the norm of the embedding from $\mathcal H^1$ into
$H^1$. Both $H^1$ factors are therefore controlled by
$\|w\|_{\mathcal H^1}$. Substitution into \eqref{eq:app-h-display-011}
gives \eqref{eq:nl-est} with the constant \eqref{eq:nonlinear-constant}.
The uniform transformation bounds \eqref{eq:uniform-transform-scale}
and the uniform Sobolev bounds in Lemma~\ref{lem:gamma} give
$C_{\rm N}\le C(B,d)$.
\end{proof}

The nonlinear estimate in Lemma~\ref{lem:forcing} gives the following result.

\begin{theorem}[Closed loop is locally exponentially stable]\label{thm:nominal}
Consider the plant \eqref{eq:plant} with the pre-feedback \eqref{eq:U1} and the
exact curvature feedback \eqref{eq:U2}. Let
$(a,\varepsilon,\theta)\in\mathscr U_\lambda$, fix rates
$0<\omega<\bar\nu<\nu$ and a fraction $\varrho\in(0,1)$, and form
$\gamma_n,c_\gamma,\Sigma$ and $C_{\rm N}$ as in
Section~\ref{ssec:uniform}, with $\nu_0:=\bar\nu$. Set
\begin{align}\label{eq:basin}
C_*:=\tfrac58\Big(\tfrac{8\varrho}{3}\Big)^{-3/5}
\Big(\frac{2C_{\rm N}c_2^{3/2}}{m^{3/4}c_\gamma^{3/8}}\Big)^{8/5},\quad
r_0:=\Big(\frac{\bar\nu-\omega}{C_*}\Big)^{5/4},\quad
\rho_0:=\frac{\sqrt{r_0/M}}{c_1}.
\end{align}
Then for every $u_0\in\LL$ with $\|u_0\|\le\rho_0$ the closed loop has a unique
solution on $[0,\infty)$ and
\begin{align}
\|u(t)\|\le c_1c_2\sqrt{M/m}\;e^{-\omega t}\,\|u_0\|,\qquad t\ge0,\label{eq:nominal-est}
\end{align}
together with the dissipation bound
\begin{align}
(1-\varrho)\int_0^\infty e^{2\omega t}\,\Sigma(Tu(t))\dd t\le M c_1^2\|u_0\|^2 .\label{eq:nominal-H2}
\end{align}
\end{theorem}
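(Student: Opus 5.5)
We work in the target variable $w=Tu$, which by Definition~\ref{def:solution} is the mild solution of \eqref{eq:target-nl}. We apply the Lyapunov inequality of Proposition~\ref{prop:target-lin}(ii) with $\nu_0=\bar\nu$ and $f=-T(uu_x)$. Identity \eqref{eq:Vdot} gives
\begin{align*}
\dot V\le-2\bar\nu V-\Sigma(w)+2\big|\hp{T(uu_x)}{w}\big| .
\end{align*}
To bound the forcing term we use Lemma~\ref{lem:forcing}, the bound $\|u\|\le c_2\|w\|\le c_2m^{-1/2}V^{1/2}$, and the coercivity $\|w\|_{\mathcal H^1}^2\le\|w\|_{\mathcal H^0}\|w\|_{\mathcal H^2}\le V^{1/2}(\Sigma/c_\gamma)^{1/2}$. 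The last step is Cauchy--Schwarz in the modal weights followed by \eqref{eq:dissipation-constants}. Together these give
\begin{align*}
2\big|\hp{T(uu_x)}{w}\big|\le \frac{2C_{\rm N}c_2^{3/2}}{m^{3/4}c_\gamma^{3/8}}\,V^{3/4+3/8}\,\Sigma^{3/8}.
\end{align*}

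Next we split this product with Young's inequality, using exponents $8/3$ and $8/5$. This absorbs $\varrho\Sigma$ into the dissipation and leaves a remainder $C_*V^{(9/8)(8/5)}=C_*V^{9/5}$. The constant $C_*$ in \eqref{eq:basin} is the one produced by $xy\le\varrho x^{8/3}+\tfrac58(\tfrac{8\varrho}{3})^{-3/5}y^{8/5}$. We therefore get
\begin{align*}
\dot V\le-2\bar\nu V-(1-\varrho)\Sigma+C_*V^{9/5}= -V\big(2\bar\nu-C_*V^{4/5}\big)-(1-\varrho)\Sigma .
\end{align*}
I expect the exact exponent bookkeeping to be the fiddly point. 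The paper's choice $r_0=((\bar\nu-\omega)/C_*)^{5/4}$ indicates that the remainder should take the form $C_*V^{1+4/5}$, so that $C_*V^{4/5}\le\bar\nu-\omega$ whenever $V\le r_0$. I would recheck this. The only thing the argument needs is some power larger than one, with the margin absorbed into $\bar\nu-\omega$; any factor-of-two slack can be taken from $2\bar\nu-2\omega\ge\bar\nu-\omega$.

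The estimate is then closed by a bootstrap on $V$. We have $V(w(0))\le M\|Tu_0\|^2\le Mc_1^2\rho_0^2=r_0$. Suppose $V\le r_0$ holds on some interval. There $C_*V^{4/5}\le\bar\nu-\omega$, so
\begin{align*}
\dot V\le-2\omega V-(1-\varrho)\Sigma .
\end{align*}
Hence $V$ is nonincreasing on that interval, and the set where $V\le r_0$ is both open and closed in the existence interval. The local existence and continuation criterion of Proposition~\ref{prop:wp} (Appendix~\ref{app:well-posedness}), with $\hat h=h$, then extends the solution to $[0,\infty)$ with $V\le r_0$ throughout. Uniqueness comes from the same proposition. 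Multiplying the differential inequality by $e^{2\omega t}$ and integrating gives
\begin{align*}
e^{2\omega t}V(t)+(1-\varrho)\int_0^te^{2\omega s}\Sigma\,ds\le V(0)\le Mc_1^2\|u_0\|^2 .
\end{align*}
This is \eqref{eq:nominal-H2}. Finally,
\begin{align*}
\|u\|\le c_2\|w\|\le c_2m^{-1/2}V^{1/2}\le c_1c_2\sqrt{M/m}\,e^{-\omega t}\|u_0\| ,
\end{align*}
which is \eqref{eq:nominal-est}.

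The one technical subtlety is justifying the differential inequality for mild solutions. Proposition~\ref{prop:target-lin}(ii) handles this once we know $f=-T(uu_x)\in L^2_{\rm loc}(\mathcal H^{-2})$. That follows from $u\in L^2_{\rm loc}(\mathcal H^2)\cap C(\LL)$, the uniform bound on $T$ on $\mathcal H^{-1}$, and $uu_x\in L^2_{\rm loc}(H^{-1})$.
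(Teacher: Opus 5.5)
Your proposal is correct and follows the paper's proof step for step. It uses the same Lyapunov identity with $\nu_0=\bar\nu$, the same bounds $\|u\|\le c_2m^{-1/2}V^{1/2}$ and $\|w\|_{\mathcal H^1}\le V^{1/4}(\Sigma/c_\gamma)^{1/4}$ giving $V^{9/8}\Sigma^{3/8}$, and the same Young split with exponents $8/3,8/5$ producing $C_*V^{9/5}=C_*V^{1+4/5}$, so your exponent bookkeeping is already right. It then closes exactly as the paper does, with invariance of $\{V\le r_0\}$ and the blow-up alternative of Proposition~\ref{prop:wp}.

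Two minor remarks. First, openness of $\{V\le r_0\}$ at a time where $V=r_0$ uses that $2\bar\nu-C_*V^{4/5}$ stays positive for $V$ slightly above $r_0$. Second, the absolute continuity of $V$ and the identity \eqref{eq:Vdot} can be taken directly from Proposition~\ref{prop:wp}, as the paper does. Your $H^{-1}$ argument would additionally need $\Gamma(\mathcal H^1)\subset H^1_0(0,1)$ from Lemma~\ref{lem:gamma} to pass from $H^{-1}$ to $\mathcal H^{-1}$.
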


\begin{proof}
Proposition~\ref{prop:wp} gives a local solution with $\hat h=h$
and its Lyapunov identity. We work on its maximal
interval of existence. Write $w=Tu$, $V=V(w)$, and $\Sigma=\Sigma(w)$,
with $\nu_0=\bar\nu$.

We first estimate the nonlinear contribution. Substituting the exact-gain
forcing from \eqref{eq:target-nl} into \eqref{eq:Vdot} gives
\begin{align}
\dot V&=\;-2\bar\nu V-\Sigma
 -2\operatorname{Re}\hp{T(uu_x)}{w}.\label{eq:Vdot-nominal}
\end{align}
The transformation bound \eqref{eq:transformation-bounds}, norm
equivalence \eqref{eq:riesz-const}, and spectral interpolation with
Proposition~\ref{prop:target-lin}(ii) give
\begin{align}
\|u\|&\le\;c_2m^{-1/2}V^{1/2},
\qquad
\|w\|_{\mathcal H^1}\le V^{1/4}(\Sigma/c_\gamma)^{1/4}.\label{eq:app-j-display-002}
\end{align}
Set $A:=2C_{\rm N}c_2^{3/2}/(m^{3/4}c_\gamma^{3/8})$.
Lemma~\ref{lem:forcing} and Young's inequality
\cite[App.~B.2]{Evans}, with exponents $8/3$ and $8/5$, yield
\begin{align}
2\big|\hp{T(uu_x)}{w}\big|
&\le\;A V^{9/8}\Sigma^{3/8}
\le\varrho\Sigma+C_*V^{9/5},\label{eq:nl-young1}
\end{align}
where $C_*=\tfrac58(\tfrac{8\varrho}{3})^{-3/5}A^{8/5}$,
exactly as in \eqref{eq:basin}. Since $\varrho<1$, positive dissipation
remains. Substitution into \eqref{eq:Vdot-nominal} gives
\begin{align}
\dot V
&\le\;-\big(2\bar\nu-C_*V^{4/5}\big)V-(1-\varrho)\Sigma.\label{eq:app-j-display-005}
\end{align}

We next use the initial-state bound to close the energy estimate.
By \eqref{eq:transformation-bounds}, \eqref{eq:riesz-const}, and the
definition of $\rho_0$ in \eqref{eq:basin},
\begin{align}
V(0)&\le\;Mc_1^2\|u_0\|^2\le r_0.\label{eq:app-j-display-008}
\end{align}
For $V\le r_0$, \eqref{eq:basin} gives
$C_*V^{4/5}\le\bar\nu-\omega$, hence
$2\bar\nu-C_*V^{4/5}\ge\bar\nu+\omega>2\omega$.
Scalar comparison applied to \eqref{eq:app-j-display-005} therefore
preserves $V\le r_0$. Along the solution,
\begin{align}
\dot V&\le\;-2\omega V-(1-\varrho)\Sigma.\label{eq:app-j-display-007}
\end{align}
Multiplying by $e^{2\omega t}$ and integrating gives
\begin{align}
e^{2\omega t}V(t)
+(1-\varrho)\int_0^t e^{2\omega s}\Sigma(w(s))\dd s
&\le\;V(0).\label{eq:app-j-display-010}
\end{align}
In particular, $V(t)\le e^{-2\omega t}V(0)$, so
\begin{align}
\|u(t)\|
&\le\;c_2m^{-1/2}e^{-\omega t}V(0)^{1/2}
\le c_1c_2\sqrt{M/m}\,e^{-\omega t}\|u_0\|.\label{eq:app-j-display-009}
\end{align}
This bound and the blow-up alternative in Proposition~\ref{prop:wp}
imply global existence and prove \eqref{eq:nominal-est}.

Finally, discarding the nonnegative terminal energy in
\eqref{eq:app-j-display-010} and letting $t\to\infty$ gives
\eqref{eq:nominal-H2}, using \eqref{eq:app-j-display-008}.
The physical $H^2$ estimate stated after the theorem follows from
Lemma~\ref{lem:smoothing}(ii), the embedding established in
Lemma~\ref{lem:gamma}, and Proposition~\ref{prop:target-lin}(ii):
\begin{align}
\|u\|_{H^2}
&\le\;C\|T^{-1}\|_{\mathcal L(\mathcal H^2)}
 (\Sigma/c_\gamma)^{1/2}.\label{eq:app-j-display-011}
\end{align}
\end{proof}

Theorem~\ref{thm:nominal}, applied with $\bar\nu\in(\omega,\nu)$ proves the closed-loop assertion of Theorem~\ref{thm:intro} for $\hat h=h$. Moreover, since $\Sigma\ge c_\gamma\|Tu\|_{\mathcal H^2}^2$, \eqref{eq:nominal-H2} also gives
$\int_0^\infty e^{2\omega t}\|u(t)\|_{H^2}^2\dd t\le C\|u_0\|^2$,
where $C$ depends on $\lambda$, $(a,\varepsilon,\theta)$, $\bar\nu$, and
$\varrho$, but not on $u_0$. This estimate bounds the remaining dissipation
after controlling the nonlinear term and will be used to absorb
the additional forcing due to gain approximation.

\subsection{Local rapid stabilization under gain approximation}\label{sec:approx-stability}\label{ssec:inexact-stability}
We now quantify the curvature-gain accuracy required to preserve stability.
For an admissible design $\mathbf p=(\lambda,a,\varepsilon,\theta)$, the
implemented curvature feedback is
\begin{equation}\label{eq:U2hat}
\hat U_2(t)=\ip{\hat h}{u(\cdot,t)},\qquad \hat h=\hat{\mathcal G}(\mathbf p),
\end{equation}
applied to the plant \eqref{eq:plant} with pre-feedback \eqref{eq:U1}.

Writing $e:=\hat h-h$ for the gain error, the exact transformation
$w=Tu$ gives
\begin{equation}\label{eq:target-full}
\begin{gathered}
w_t=(\Ap-a)w+f,\qquad w(\cdot,0)=Tu_0, \qquad
f:=B_2\ip{e}{u}-T(uu_x),\qquad u=T^{-1}w.
\end{gathered}
\end{equation}
Since $TB_2=B_2$, gain approximation adds the boundary forcing
$B_2\ip{e}{u}$ to the exact-gain dynamics \eqref{eq:target-nl}.
Hence, we use Definition~\ref{def:solution} to define solutions with the forcing  $f = B_2\ip{e}{u}-T(uu_x)$.

The nonlinear term is already bounded by Lemma~\ref{lem:forcing}.
It remains to bound the gain error as follows.

\begin{lemma}[Gain-error bound]\label{lem:gain-error}
For $u\in\mathcal H^2$, $w=Tu$, and $e\in\LL$,
\begin{align}
\big|\hp{B_2\ip{e}{u}}{w}\big|
&\le\Lambda\,\|e\|\,\|u\|\,\Sigma(w)^{1/2}.
\label{eq:err-est}
\end{align}
where $\Sigma$ and $\Lambda$ are defined in
\eqref{eq:dissipation-constants} and \eqref{eq:Lambda}, respectively.
\end{lemma}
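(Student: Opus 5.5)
The plan is to compute the modal pairing directly in the coordinates $v_n=\ip{v}{\tilde\chi_n}$ and then apply a weighted Cauchy--Schwarz inequality with weights $\gamma_n$. The scalar $\ip{e}{u}$ factors out of the pairing by sesquilinearity:
\begin{align*}
\hp{B_2\ip{e}{u}}{w}=\ip{e}{u}\,\hp{B_2}{w}.
\end{align*}
By definition of the modal inner product \eqref{eq:modal-inner-product}, $\hp{B_2}{w}=\sum_n (B_2)_n\overline{w_n}$. The modal coordinates of $B_2$ are $(B_2)_n=\ip{B_2}{\tilde\chi_n}=b_n'$, using the pairing convention of Definition~\ref{def:admissible}(H2). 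Here $B_2$ is a distribution in $\mathcal H^{-3/2-\epsilon}$, as in Theorem~\ref{thm:kernel}(i). We therefore first justify the pairing for $w\in\mathcal H^2$, where the series converges absolutely; this is also the step that checks the sum is convergent at all.

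Next, insert the weights, writing $b_n'\overline{w_n}=(b_n'\gamma_n^{-1/2})(\gamma_n^{1/2}\overline{w_n})$. Cauchy--Schwarz then gives
\begin{align*}
|\hp{B_2}{w}|\le\Big(\sum_n\frac{|b_n'|^2}{\gamma_n}\Big)^{1/2}\Big(\sum_n\gamma_n|w_n|^2\Big)^{1/2}=\Lambda\,\Sigma(w)^{1/2},
\end{align*}
with $\Lambda$ from \eqref{eq:Lambda} and $\Sigma$ from \eqref{eq:dissipation-constants}. Finiteness of both factors follows from $|b_n'|\le Cn$ in Theorem~\ref{thm:generic}(ii), from $\gamma_n\ge c_\gamma n^4$, and from $w\in\mathcal H^2$; the last holds because $u\in\mathcal H^2$ and $T\in\mathcal L(\mathcal H^2)$ by Lemma~\ref{lem:smoothing}(ii). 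Finally, the physical $L^2$ Cauchy--Schwarz inequality gives $|\ip{e}{u}|\le\|e\|\,\|u\|$, and multiplying the two bounds yields \eqref{eq:err-est}.

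The main obstacle is the first step: rigorously identifying $\hp{B_2}{w}$ with $\sum_n b_n'\overline{w_n}$ when $B_2$ is not in $\LL$. I would handle it by defining the modal pairing of $\mathcal H^{-2}$ with $\mathcal H^2$ as the continuous extension of $\hp{\cdot}{\cdot}$ from finite sums of $\chi_n$. This extension is consistent with the identity $TB_2=B_2$ in $\mathcal H^{-3/2-\epsilon}$ from \eqref{eq:Fequiv}. The coordinates of $B_2$ in this extension are exactly the boundary functionals $-\overline{\tilde\chi_n'(0)}=b_n'$, which are well defined because $\tilde\chi_n\in D(\Aop)\subset H^4$. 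The resulting sequence lies in $\ell^2_{-2}$, since $\sum_n n^{-4}|b_n'|^2\le C\sum_n n^{-2}<\infty$. This is precisely the sense in which the forcing $f=B_2\ip{e}{u}$ enters Proposition~\ref{prop:target-lin}.
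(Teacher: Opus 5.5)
Your proposal is correct and follows essentially the same route as the paper's proof. Both factor out $\ip{e}{u}$, identify the modal coordinates of $B_2$ as $b_n'$, apply the $\gamma_n$-weighted Cauchy--Schwarz inequality with $w\in\mathcal H^2$ from Lemma~\ref{lem:smoothing}(ii), and then multiply by $|\ip{e}{u}|\le\|e\|\,\|u\|$. Your extra discussion of the $\mathcal H^{-2}$--$\mathcal H^2$ pairing only spells out what the paper compresses into the remark that the same estimate shows the modal dual pairing is well defined. One small addition: finiteness of $\Sigma(w)$ for $w\in\mathcal H^2$ also needs the upper bound $\gamma_n\le Cn^4$, which comes from Theorem~\ref{thm:generic}(ii).
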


\begin{proof}
Since $w\in\mathcal H^2$ by Lemma~\ref{lem:smoothing}(ii), weighted
Cauchy--Schwarz gives
\begin{align}
|\hp{B_2}{w}|
&=\;\left|\sum_n b_n'\bar w_n\right|
\le\left(\sum_n\frac{|b_n'|^2}{\gamma_n}\right)^{1/2}
\Sigma(w)^{1/2}=\Lambda\Sigma(w)^{1/2}.\label{eq:app-h-display-006}
\end{align}
This also shows that the modal dual pairing is well-defined.
Multiplying by $|\ip{e}{u}|\le\|e\|\|u\|$ proves \eqref{eq:err-est}.
\end{proof}

Combining this bound with the nonlinear estimate gives an explicit
gain-error tolerance and initial-state radius.

\subsubsection{Local exponential stability under gain error}\label{sec:main}

\begin{theorem}[Closed loop is locally exponentially stable under gain error]\label{thm:main}
Consider the Kuramoto--Sivashinsky plant \eqref{eq:plant} with the pre-feedback \eqref{eq:U1} and the perturbed curvature feedback \eqref{eq:U2hat}. Let $(a,\varepsilon,\theta)\in\mathscr U_\lambda$, let $h$ be the exact curvature gain of Theorem~\ref{thm:kernel}, and let $\nu=a-\max_n\operatorname{Re}\mu_n'$ be the rate of the target system. Fix a target rate $\omega$ and an auxiliary rate $\bar\nu$ with $0<\omega<\bar\nu<\nu$, and fractions $\varkappa_e,\varkappa_{\rm N}\in(0,1)$ with $\varkappa_e+\varkappa_{\rm N}<1$. Let $m,M$ be the Riesz constants \eqref{eq:riesz-const}, $c_1,c_2$ the norms of $T,T^{-1}$ on $\LL$, and $\gamma_n,c_\gamma,\Sigma,\Lambda,C_{\rm N}$ as defined in Section~\ref{ssec:uniform} with $\nu_0:=\bar\nu$. Set
\begin{equation}\label{eq:tolerance}
\epsilon^*:=\frac{\sqrt{\varkappa_e\,m\,(\bar\nu-\omega)}}{c_2\,\Lambda},\qquad
r_0:=\Big(\frac{\bar\nu-\omega}{C_*}\Big)^{5/4},\qquad \rho_0:=\frac{\sqrt{r_0/M}}{c_1},\qquad 
C_*:=\tfrac58\Big(\tfrac{8\varkappa_{\rm N}}{3}\Big)^{-3/5}\Big(\frac{2C_{\rm N}c_2^{3/2}}{m^{3/4}c_\gamma^{3/8}}\Big)^{8/5}.
\end{equation}
Let $\hat h=\hat{\mathcal G}(\lambda,a,\varepsilon,\theta)$ be the implemented gain, with $\|\hat h-h\|\le\epsilon^*$. Then for every $u_0\in\LL$ with $\|u_0\|\le\rho_0$ the closed loop \eqref{eq:plant}, \eqref{eq:U1}, \eqref{eq:U2hat} has a unique solution on $[0,\infty)$ in the sense of Definition~\ref{def:solution}, and
\begin{equation}\label{eq:main-est}
\|u(t)\|\le c_1c_2\sqrt{M/m}\;e^{-\omega t}\,\|u_0\|,\qquad t\ge0,
\end{equation}
and
\begin{equation}\label{eq:main-H2}
(1-\varkappa_e-\varkappa_{\rm N})\int_0^\infty e^{2\omega t}\,\Sigma(Tu(t))\,dt\le Mc_1^2\|u_0\|^2 .
\end{equation}
\end{theorem}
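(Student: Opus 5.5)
The plan mirrors the proof of Theorem~\ref{thm:nominal}, with the gain error treated as one more forcing term absorbed into the dissipation. Proposition~\ref{prop:wp}, applied with the implemented gain $\hat h$, gives a unique local solution on a maximal interval together with a blow-up alternative. On that interval we set $w=Tu$, $V=V(w)$, $\Sigma=\Sigma(w)$ and $\nu_0=\bar\nu$. The forcing $f=B_2\ip{e}{u}-T(uu_x)$ from \eqref{eq:target-full} lies in $L^2_{\rm loc}(\mathcal H^{-2})$. Then Proposition~\ref{prop:target-lin}(ii) gives
\begin{align*}
\dot V=-2\bar\nu V-\Sigma+2\operatorname{Re}\hp{B_2\ip{e}{u}}{w}-2\operatorname{Re}\hp{T(uu_x)}{w}.
\end{align*}

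The two forcing terms are bounded separately. For the gain-error term, Lemma~\ref{lem:gain-error} together with $\|u\|\le c_2m^{-1/2}V^{1/2}$ gives
\begin{align*}
2|\hp{B_2\ip{e}{u}}{w}|\le 2\Lambda\epsilon^*c_2m^{-1/2}V^{1/2}\Sigma^{1/2}.
\end{align*}
Young's inequality with weight $\varkappa_e$ bounds this by $\varkappa_e\Sigma+\Lambda^2(\epsilon^*)^2c_2^2m^{-1}\varkappa_e^{-1}V$. The definition of $\epsilon^*$ in \eqref{eq:tolerance} makes the second coefficient equal to $(\bar\nu-\omega)$, so the gain-error term contributes at most $\varkappa_e\Sigma+(\bar\nu-\omega)V$.

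For the nonlinear term, Lemma~\ref{lem:forcing} and the interpolation \eqref{eq:app-j-display-002} reproduce \eqref{eq:nl-young1} with $\varrho$ replaced by $\varkappa_{\rm N}$, giving the bound $\varkappa_{\rm N}\Sigma+C_*V^{9/5}$. Adding the two estimates yields
\begin{align*}
\dot V\le-\big(2\bar\nu-(\bar\nu-\omega)-C_*V^{4/5}\big)V-(1-\varkappa_e-\varkappa_{\rm N})\Sigma.
\end{align*}

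The constants must be budgeted so that the total rate loss stays below $2(\bar\nu-\omega)$; this is the one delicate point. The gain error costs exactly $\bar\nu-\omega$, and the definition of $r_0$ ensures $C_*V^{4/5}\le\bar\nu-\omega$ whenever $V\le r_0$. The bracket is then at least $2\omega$, and the margin is just enough. If this budget failed, the fix would be to split $\bar\nu-\omega$ in halves. The initial bound $\|u_0\|\le\rho_0$ gives $V(0)\le Mc_1^2\|u_0\|^2\le r_0$. A scalar comparison (continuity) argument then shows that $V\le r_0$ is invariant, since on that set $\dot V\le-2\omega V-(1-\varkappa_e-\varkappa_{\rm N})\Sigma$.

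Multiplying this inequality by $e^{2\omega t}$ and integrating gives $e^{2\omega t}V(t)+(1-\varkappa_e-\varkappa_{\rm N})\int_0^te^{2\omega s}\Sigma\dd s\le V(0)$. Dropping the integral and converting back via $c_2$ and $m$ yields \eqref{eq:main-est}. The blow-up alternative then gives global existence. Dropping instead the terminal term and letting $t\to\infty$ gives \eqref{eq:main-H2}. Uniqueness is inherited from Proposition~\ref{prop:wp}.

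A technical point is that the pairing $\hp{B_2}{w}$ requires $w\in\mathcal H^2$. This holds for almost every $t$ by the $L^2_{\rm loc}(\mathcal H^2)$ regularity, and the proof of Lemma~\ref{lem:gain-error} shows the pairing is well defined there, so the differential inequality holds almost everywhere. That suffices for the absolutely continuous $V$.
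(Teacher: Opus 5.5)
Your proposal is correct and follows the paper's proof essentially step for step. It uses the Lyapunov identity from Proposition~\ref{prop:wp} with $\nu_0=\bar\nu$, and splits the gain-error term into $\varkappa_e\Sigma+(\bar\nu-\omega)V$ through the definition of $\epsilon^*$, the nonlinear bound \eqref{eq:nl-young1} with $\varrho=\varkappa_{\rm N}$, and the invariance of $\{V\le r_0\}$ giving exactly the rate $2\omega$; your remark on the a.e.\ validity of the pairing $\hp{B_2}{w}$ is a harmless addition that the paper leaves implicit in Proposition~\ref{prop:wp}.
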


\begin{proof}
Use the local solution and Lyapunov identity from
Proposition~\ref{prop:wp}, with $\nu_0=\bar\nu$.
Writing $w=Tu$, $V=V(w)$, and $\Sigma=\Sigma(w)$, the additional
boundary forcing changes \eqref{eq:Vdot-nominal} to
\begin{align}
\dot V&=\;-2\bar\nu V-\Sigma
 +2\operatorname{Re}\hp{B_2\ip{e}{u}}{w}
 -2\operatorname{Re}\hp{T(uu_x)}{w}.\label{eq:Vdot-full}
\end{align}
The nonlinear estimate \eqref{eq:nl-young1} applies with
$\varrho=\varkappa_{\rm N}$, giving the constant $C_*$ in
\eqref{eq:tolerance}. By $\|e\|\le\epsilon^*$,
Lemma~\ref{lem:gain-error}, and the plant-norm estimate
\eqref{eq:app-j-display-002}, Young's inequality with exponents $2,2$ gives
\begin{align}
2\big|\hp{B_2\ip{e}{u}}{w}\big|
&\le\;2\epsilon^*\Lambda c_2m^{-1/2}V^{1/2}\Sigma^{1/2}\notag\\
&\le\;\varkappa_e\Sigma+(\bar\nu-\omega)V.\label{eq:err-young}
\end{align}
Here $(\epsilon^*)^2\Lambda^2c_2^2/(\varkappa_e m)=\bar\nu-\omega$
by \eqref{eq:tolerance}.
Combining the two forcing estimates leaves the positive dissipation
fraction $1-\varkappa_e-\varkappa_{\rm N}$ and gives
\begin{align}
\dot V
&\le\;-\big(\bar\nu+\omega-C_*V^{4/5}\big)V
 -(1-\varkappa_e-\varkappa_{\rm N})\Sigma.\label{eq:app-i-display-015}
\end{align}

As in the proof of Theorem~\ref{thm:nominal}, the initial bound gives
$V(0)\le r_0$, and $C_*V^{4/5}\le\bar\nu-\omega$ while $V\le r_0$.
The same comparison argument preserves this bound. Multiplying by
$e^{2\omega t}$ and integrating yields
\begin{align}
e^{2\omega t}V(t)
+(1-\varkappa_e-\varkappa_{\rm N})
\int_0^t e^{2\omega s}\Sigma(w(s))\dd s&\le\;V(0).
\label{eq:app-i-display-020}
\end{align}
The norm conversion and continuation argument of Theorem~\ref{thm:nominal}
now give global existence and \eqref{eq:main-est}. Letting $t\to\infty$
and using $V(0)\le Mc_1^2\|u_0\|^2$ gives \eqref{eq:main-H2}.
The physical $H^2$ estimate follows from \eqref{eq:app-j-display-011}.
\end{proof}

The tolerance scales with $\sqrt{\bar\nu-\omega}$ for fixed design and
$\bar\nu$. It applies to any approximate gain, including gains computed
by numerical schemes. Applied with $\bar\nu\in(\omega,\nu)$ to a fixed admissible design, Theorem~\ref{thm:main} proves the approximate-gain assertion of Theorem~\ref{thm:intro}. Applied uniformly over $\mathscr U_{B,d}$ with $\bar\nu=d/2$, $\omega=d/4$, and
$\varkappa_e=\varkappa_{\rm N}=1/4$, it proves Theorem~\ref{thm:intro-NO}. In particular, the uniform bounds of Theorem~\ref{thm:lipschitz} then give positive
constants $\epsilon^*_{B,d},\rho_{B,d},C_{B,d}$, depending only on $B,d$,
such that
\begin{equation}\label{eq:uniform-stability-choices}
\epsilon^*\ge\epsilon^*_{B,d},\qquad
\rho_0\ge\rho_{B,d},\qquad c_1c_2\sqrt{M/m}\le C_{B,d}
\end{equation}
throughout the class. Hence every map with uniform gain error at most
$\epsilon^*_{B,d}$ and every initial state with $\|u_0\|\le\rho_{B,d}$
satisfy the hypotheses of Theorem~\ref{thm:main}, which gives the unique
global solution and \eqref{eq:intro-NO-decay}.

\section{Numerical examples}\label{sec:numerics}
We first illustrate prescribed decay rates for a plant with an unstable
double eigenvalue. We then define a family of coefficients, train a neural
operator approximation of the curvature gain, and evaluate its nonlinear
closed-loop performance and computational cost.

\subsection{Prescribed rate stabilization with the modal gain}\label{sec:numerical}

\begin{figure}[!htbp]
\centering
\includegraphics[width=0.85\textwidth]{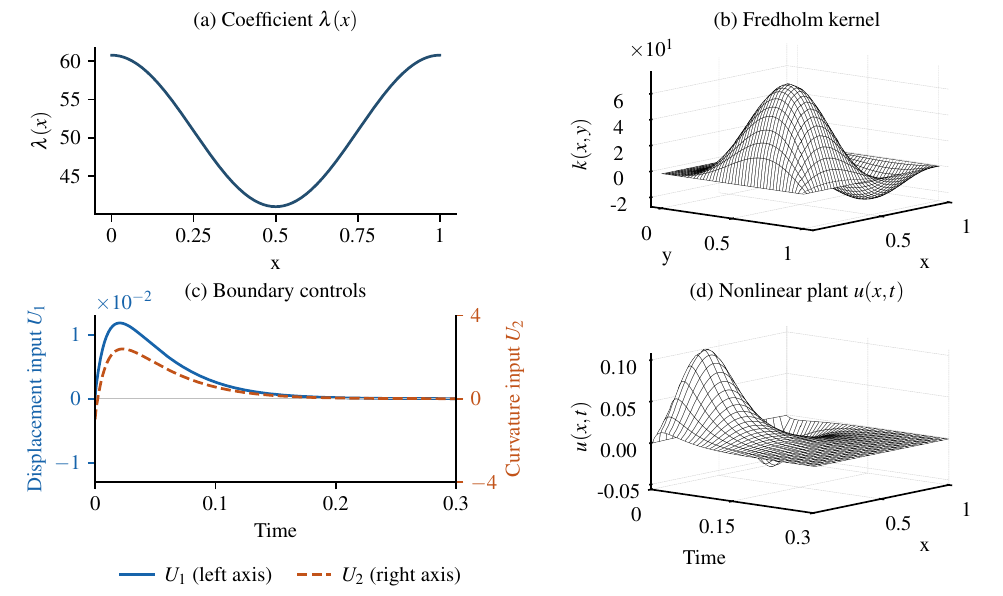}
\caption{Stabilization of the nonlinear plant with coefficient
\eqref{eq:numerical-c}, initial condition
$u_0(x)=10^{-3}\sqrt{2}\sin(\pi x)$, and
$(a,\theta,\varepsilon)=(480,1,0.5)$.
(a) The coefficient $\lambda(x)$. (b) The Fredholm kernel computed with
$32$ modes. (c) The two boundary controls. (d) The nonlinear plant state.}
\label{fig:numerical}
\end{figure}
We illustrate the two-input design on a plant with a repeated positive
eigenvalue. Consider \eqref{eq:plant} with the spatially varying coefficient
\begin{align}
\lambda(x)&=\pi^2\bigl(c_*+\cos(2\pi x)\bigr), \qquad c_*\approx5.156.\label{eq:numerical-c}
\end{align}
Here $c_*$ is selected so that the two largest eigenvalues coincide:
$\mu_1=\mu_2=\mu_*\approx459.114$.
By Proposition~\ref{prop:obstruction}, this unstable double eigenvalue
precludes exponential stabilization through the curvature input $U_2$
alone. This example therefore illustrates the role of the displacement
pre-feedback $U_1$ in the two-input construction.

For this experiment, we choose the admissible design parameters
$(a,\theta,\varepsilon)=(480,1,0.5)$ and the initial condition
$u_0(x)=10^{-3}\sqrt{2}\sin(\pi x)$, with
$\|u_0\|_{L^2(0,1)}=10^{-3}$.
The gain \eqref{eq:intro-gain} is computed by truncating the modal
kernel construction to $32$ kernel modes.
Figure~\ref{fig:numerical} shows the coefficient, the computed Fredholm
kernel, and the resulting boundary controls and nonlinear plant response.
Following an initial transient, the state and both controls decay toward
zero for this initial condition.

\begin{figure}[!htbp]
\centering
\includegraphics[width=0.85\textwidth]{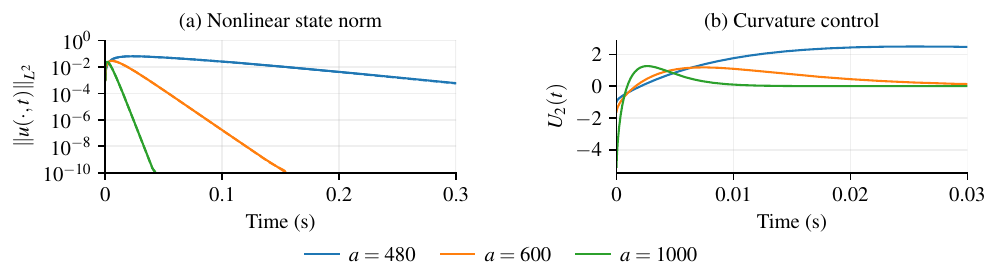}
\caption{Selectable decay rates for the fixed plant
\eqref{eq:numerical-c}, with $(\theta,\varepsilon)=(1,0.5)$ and
$u_0(x)=10^{-3}\sqrt{2}\sin(\pi x)$.
(a) Nonlinear state norms for $a=480,600,1000$.
(b) The corresponding curvature controls, with the
first $0.03$ seconds displayed to resolve the transient.
Each curvature gain is recomputed using $1536$ modes.}
\label{fig:rate-study}
\end{figure}

\begin{figure}[!b]
\centering
\begin{minipage}[t]{0.48\textwidth}
\vspace{0pt}
\centering
\includegraphics[width=\linewidth]{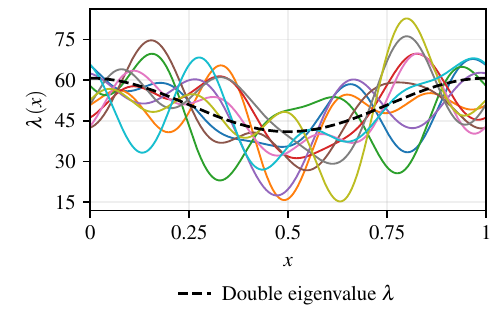}
\caption{Ten randomly selected training coefficients from the
Fourier family \eqref{eq:data-family}. The dashed
curve is the coefficient \eqref{eq:numerical-c} with an unstable double eigenvalue.}
\label{fig:part2-coefficients}
\end{minipage}\hfill
\begin{minipage}[t]{0.50\textwidth}
\vspace{0pt}
\setlength{\tabcolsep}{3pt}
\renewcommand{\arraystretch}{1.45}
\centering
\normalsize
\setlength{\captionindent}{0pt}
\begin{tabular}{lrr}
\hline
Method & Time (ms) & Error (\%)\\
\hline
Modal ($N=4$) & $0.229\pm0.007$ & 26.388 \\
Modal ($N=64$) & $1.143\pm0.019$ & 3.680 \\
Modal ($N=128$) & $4.137\pm0.022$ & 2.417 \\
Modal ($N=512$) & $152.894\pm2.649$ & 0.885 \\
Modal ($N=1024$) & $814.620\pm5.503$ & 0.000 \\
FNO (CPU) & $25.925\pm0.467$ & 0.128 \\
FNO (GPU) & $2.622\pm0.023$ & 0.129 \\
\hline
\end{tabular}
\par\smallskip
\captionof{table}{\normalfont Gain-generation time (mean $\pm$ standard deviation)
and mean relative $L^2$ error over $200$ test profiles, using the
$1024$-mode reference. Modal timings use one CPU thread.}
\label{tab:part2-timing}
\end{minipage}
\end{figure}

\begin{figure}[!t]
\centering
\includegraphics[width=0.9\textwidth]{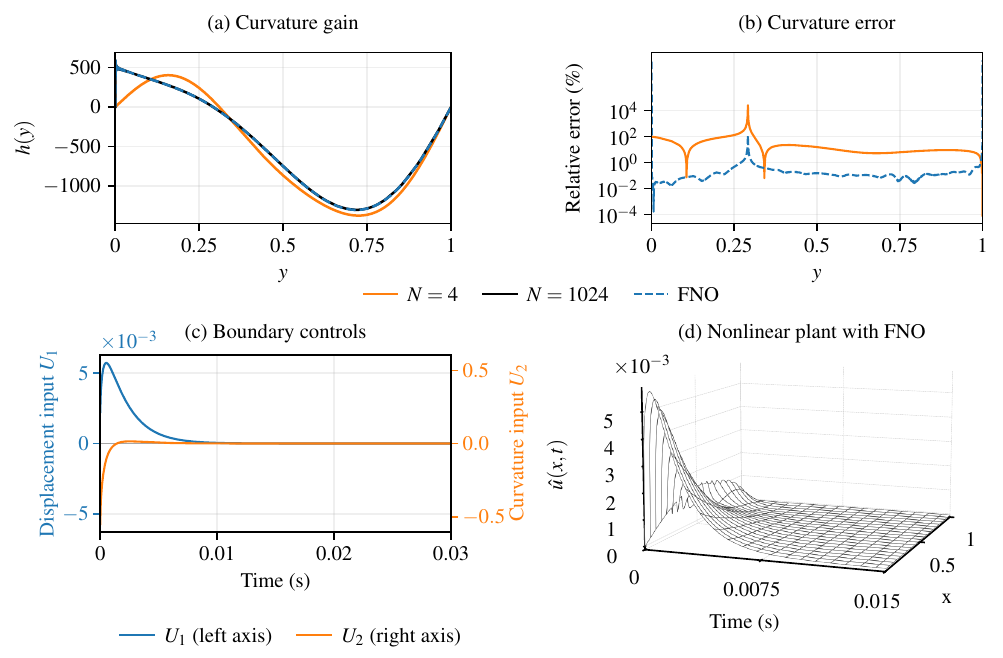}
\caption{Gain approximation and nonlinear stabilization for the coefficient \eqref{eq:numerical-c}, with $(a,\theta,\varepsilon)=(1000,1,4)$ and
$u_0(x)=10^{-3}\sqrt{2}\sin(\pi x)$.
(a) Gains from $N=4,1024$ and FNO.
(b) Relative pointwise error $|\hat h(y)-h_{\rm ref}(y)|/(|h_{\rm ref}(y)|+10^{-6})$,
expressed as a percentage.
The FNO relative $L^2$ error is $0.1107\%$.
(c) FNO controls $U_1$ (blue, left axis) and $U_2$ (orange, right axis).
(d) Nonlinear state under neural feedback up to $t=0.015$.}
\label{fig:part2-results}
\end{figure}

We next examine how the target shift $a$ sets the decay rate. For an
admissible design, Theorem~\ref{thm:intro} guarantees local exponential
stability at any rate $0<\omega<\nu$, where
$\nu=a-\max_n\operatorname{Re}\mu_n'$ is the target spectral margin
defined in \eqref{eq:nu}, also for every gain $\hat h$ with $\|\hat h-h\|\le\epsilon^*$.
A truncated modal gain is covered provided its $L^2$ error is below $\epsilon^*$; this
inequality is not verified here, so the simulations below are empirical.
With the plant and pre-feedback fixed, increasing $a$ increases this
margin, while the sufficient initial-state radius depends on the design
and chosen rate. Keeping $\lambda$, $(\theta,\varepsilon)=(1,0.5)$, and $u_0$ fixed, we
compare $a=480,600,1000$ in Figure~\ref{fig:rate-study}.
For this comparison, each curvature gain is recomputed using $1536$
modes. Panel~(a) shows faster decay after the initial transient as $a$
increases. Panel~(b) shows the corresponding curvature-control
transients over the first $0.03$ seconds. Note that, despite an increase in 
the gain, the control magnitudes do not necessarily increase and yet
rapid stabilization is attained. 
\subsection{Stabilization with neural operator approximated gain}
We now present experiments using a gain approximation. We first specify the
set of design classes for which will we train a neural operator.

\subsubsection{Dataset construction}

We consider smooth spatially varying coefficients generated by
\begin{align}
\lambda(x)=\pi^2\Bigl(c+\sum_{\ell=1}^{4}
\bigl[\alpha_\ell\cos(2\pi\ell x)+\beta_\ell\sin(2\pi\ell x)\bigr]\Bigr).
\label{eq:data-family}
\end{align}
The nine Fourier parameters are sampled independently and uniformly from
\begin{align}
c&\in[4.8,5.3],\qquad \alpha_1\in[0.8,1.2], \qquad 
\beta_1\in[-0.2,0.2],\qquad
\alpha_\ell,\beta_\ell\in[-1,1],\qquad \ell=2,3,4. \notag
\end{align}
The higher harmonics allow substantial changes in spatial shape, as
illustrated by the ten randomly selected training profiles in
Fig.~\ref{fig:part2-coefficients}.
These bounded Fourier coefficients vary the mean, harmonic amplitudes,
and phases but remain a uniformly bounded
$W^{2,\infty}$ family. 

We fix $(a,\theta,\varepsilon)=(1000,1,4)$ to learn the dependence on
$\lambda$, improving conditioning and decay relative to Section~\ref{sec:numerical}.
We generate $1800$ gain labels using at least $1024$ modes on a uniform
$3073$-point grid, split into $1400$ training, $200$ validation, and $200$
test samples.\newline
Dataset generation takes approximately $226$ seconds with a standard CPU.

\subsubsection{Neural operator training}

We train a Fourier neural operator (FNO) \cite{LiEtAl2021FNO} for the map
$\lambda\mapsto h$ at the fixed design parameters above. Its inputs are
samples of $\lambda$ and the spatial coordinates and its output is the
curvature gain $\hat h(y)$ used in \eqref{eq:U2hat}. To improve numerical conditioning during training, we center and rescale
the input coefficients using the mean and variance of the training data.
The same rescaling is used for validation and testing. This normalization is standard 
in machine learning \cite{Bishop2006}.

We minimize the squared $L^2(0,1)$ error between $\hat h$ and the modal
label $h_N$. The FNO has four $64$-neuron layers, $32$ Fourier modes, and
$329281$ trainable parameters. We train for $400$ epochs using AdamW
\cite{LoshchilovHutter2019}, initial learning rate $10^{-4}$, batch size
$32$, and gradient clipping at norm $1$, selecting the checkpoint with
lowest validation loss. Training takes $406$ seconds on an RTX~3090~Ti.

Against the reference gain $h_{\rm ref}=h_{1024}$, the median relative
$L^2$ errors are $0.0916\%$ on the training set and $0.1041\%$ on the test
set.
The closed-loop tests below show that this accuracy is sufficient for
stabilization.

\subsubsection{Neural operator closed-loop stability}

\emph{Nonlinear closed-loop stability.}
We simulate the nonlinear closed loop for all $200$ held-out test profiles
using the FNO and each modal approximation in Table~\ref{tab:part2-timing}.
The simulations use $384$ plant modes and the initial condition
$u_0(x)=10^{-3}\sqrt{2}\sin(\pi x)$, with $\|u_0\|_{L^2(0,1)}=10^{-3}$.
Every method produces decay over the simulation horizon $0\le t\le0.8$,
with $\|u(0.8)\|_{L^2}<\|u_0\|_{L^2}$ in all $200$ cases. The largest
state norm under FNO feedback is $3.30\times10^{-3}$.

Figure~\ref{fig:part2-results}(c)--(d) illustrates the nonlinear response
for the coefficient \eqref{eq:numerical-c} with an unstable double eigenvalue, using
$384$ plant modes and the same initial condition. Panel (c) shows the
displacement and curvature inputs, and panel (d) shows the state under
FNO feedback. The FNO and $1024$-mode reference gain produce nearly
identical decay, and all five modal approximations also stabilize this
example. These simulations demonstrate the effectiveness of the learned
gain on the held-out cases.

\emph{Computation time.}
Table~\ref{tab:part2-timing} compares gain-generation time and relative
error over $200$ test profiles on the $3073$-point grid. Modal timings
include assembly, spectral computation, the coefficient solve, and gain
reconstruction. FNO timings include normalization and GPU transfers for
one profile. Offline training and dataset generation are excluded.
Timings use a Ryzen~9~5900X CPU (one thread) or RTX~3090~Ti.

Low-mode approximations are inexpensive but have larger gain errors.
The FNO achieves approximately $0.13\%$ mean relative error in $25.9$~ms
on the CPU or $2.62$~ms on the GPU, compared with $815$~ms for the
$1024$-mode solve. For a fixed plant, the gain is computed once and reused.
The FNO's speed is particularly valuable in adaptive control, where gains
are updated as the coefficient estimate changes
\cite{BhanShiKrstic2025}.

\section{Conclusion}\label{sec:conclusion}

We have given a Fredholm backstepping design for the KS equation with a
spatially varying anti-diffusion coefficient. It is valid for every
$\lambda\in W^{1,\infty}(0,1)$, with no further restriction on the plant.
What makes this possible is the second boundary input. At an unstable
repeated eigenvalue of the plant operator, no linear feedback through a
single channel can stabilize the system
(Proposition~\ref{prop:obstruction}). The displacement input is spent
entirely on a controllability-ensuring pre-feedback, in the sense of
Heymann's Lemma, so that every remaining restriction falls on the design
parameters $(a,\varepsilon,\theta)$ and none on the plant
(Theorem~\ref{thm:generic}). We then show that the Fredholm transformation is invertible
(Theorem~\ref{thm:kernel}(ii)), and its invertibility is a consequence of
controllability of the pre-compensated plant.

Under this two-input design, the Fredholm gain is still challenging to implement numerically. Consequently, we show that the operator mapping the design
parameters and anti-diffusion coefficient $\lambda$ to the gain $h$ is
continuous (Theorem~\ref{thm:lipschitz}). This yields the existence of a
single neural operator approximation of the gain to any prescribed $L^2$
accuracy on a compact admissible design class (Theorem~\ref{thm:NO}).
We then establish stability for both the exact gain and sufficiently
accurate neural-operator approximations
(Theorems~\ref{thm:nominal} and~\ref{thm:main}). Explicitly, we show that local rapid
stabilization is achievable in both cases
(Theorems~\ref{thm:intro} and~\ref{thm:intro-NO}). We conclude with
numerical results illustrating prescribed decay rates and stabilization
using the learned gain.

Hence, this work opens the door to several significant extensions. First,
it opens the way to adaptive and observer-based control via Fredholm
backstepping, a completely unexplored topic. Second, the idea of using two
controls can apply to numerous problems beyond the KS equation, as already
shown in the companion papers \cite{Krstic2026NavierStokesHeymann,Krstic2026YihInterface}. By making controllability part
of the feedback design, this approach extends the reach of Fredholm
backstepping to plants previously excluded by their spectral structure,
enabling new directions in PDE backstepping.

\begin{acknowledgement}
\textbf{Acknowledgment.}
The authors' problems, ideas, and results were developed with the assistance
of Claude and ChatGPT in final theorem formulation, proofs, simulations,
and drafting throughout the paper, under the authors' correction and
complete verification.
\end{acknowledgement}

\begin{appendices}
\renewcommand{\sectionname}{\appendixname}
\renewcommand{\theequation}{\thesection\arabic{equation}}

\section{Additional proofs for Section~\ref{sec:kernel}: verification of the kernel and modal equations}\label{app:kernel-equations}

\subsection{Proof of Proposition~\ref{prop:kernel}: verification of the plant-to-target transformation}\label{app:kernel-transformation-proof}

\begin{proof}
Write $L_y=\partial_y^4+\partial_y(\lambda(y)\partial_y)$.
Integration by parts, using the plant and kernel boundary conditions, gives
\begin{align}
\int_0^1k(x,y)L_yu(y,t)\dd y
&=\;\big\langle L_yk(x,\cdot),u(\cdot,t)\big\rangle
  +k_y(x,0)U_2(t)  +\big(k_{yyy}(x,0)+\lambda(0)k_y(x,0)\big)U_1(t).\label{eq:app-c-display-001}
\end{align}
The first term on the right is the distributional pairing in $y$.
Since $k_y(x,0)=0$, only the displacement boundary term remains.
Substituting $U_1=\varepsilon\ip{g_\theta}{u}$, this term cancels the
kernel source $\varepsilon k_{yyy}(x,0)g_\theta(y)$ after integration
against $u$, giving the target equation.

At the left endpoint, \eqref{eq:kpde-e} and $w=u-Ku$ give
\begin{align}
w(0)&=\;\varepsilon\ip{g_\theta}{u}-\ip{k(0,\cdot)}{u}
=\varepsilon\ip{g_\theta}{u-Ku}
=\varepsilon\ip{g_\theta}{w}.\label{eq:app-c-display-004}
\end{align}
The remaining target boundary conditions follow from the kernel traces.
\end{proof}

\subsection{Proof of Proposition~\ref{prop:modal}: verification of the modal equivalence of the kernel}\label{app:modal-equivalence-proof}

\begin{proof}
We first project in the $y$ variable and then in the $x$ variable.
The section $k(x,\cdot)$ need not belong to $D(\Aop)$ because its third
derivative jumps at $y=x$. The first projection must therefore use the
distributional kernel equation, rather than an operator-domain identity.
Define
\begin{align}
k_j(x)&:=\;\ip{k(x,\cdot)}{\varphi_j},
\qquad \ell(x):=k_{yyy}(x,0).\label{eq:app-c-display-005}
\end{align}
Integration by parts has no endpoint terms because both functions and
their second derivatives vanish at $y=0,1$, by \eqref{eq:kpde-b} and
$\varphi_j\in D(\Aop)$. Since $-L_y\varphi_j=\mu_j\varphi_j$,
\begin{align}
\ip{-L_yk(x,\cdot)}{\varphi_j}
&=\;\ip{k(x,\cdot)}{-L_y\varphi_j}
=\mu_jk_j(x).\label{eq:app-c-display-006}
\end{align}
Here $L_yk$ is interpreted distributionally, so it includes the terms
generated by derivative jumps at $y=x$. Thus the identity does not
require $k(x,\cdot)\in D(\Aop)$.
Hence \eqref{eq:kpde-a} becomes
\begin{align}
k_j''''+(\lambda k_j')'+(\mu_j+a)k_j
= a\varphi_j+\varepsilon\ell g_j,\qquad
\ell(x)=\sum_i k_i(x)\alpha_i.\label{eq:kj}
\end{align}
The second identity uses $k_y(x,0)=0$, which also gives
$\sum_i k_i(x)\beta_i=0$ and hence \eqref{eq:modal-b} after projection in $x$.

For the remaining modal equation, the boundary data for $k_j$ are
\begin{align}
k_j(1)&=\;k_j''(1)=0,
\qquad k_j(0)=\varepsilon\sum_pK_{pj}g_p,
\qquad k_j''(0)=h_j.\label{eq:app-c-display-008}
\end{align}
Pairing \eqref{eq:kj} with $\varphi_m$ and applying the boundary identity
of Lemma~\ref{lem:ibp} gives \eqref{eq:modal-a}.
\end{proof}

\section{Sobolev bounds for the modal coordinate map}\label{app:modal-sobolev}

Here generic constants may depend on $\lambda$, $(a,\varepsilon,\theta)$, and $\nu_0$, but not on the state, mode indices, or time. Bounds $C(B,d)$ are uniform over $\mathscr U_{B,d}$ for the stated common rate.

We first establish the Sobolev bounds and boundary properties of $\Gamma$
needed in the proof of Lemma~\ref{lem:forcing}. These properties justify
integration by parts in the modal pairing. Their uniform bounds are also
used in Theorem~\ref{thm:lipschitz}.

\begin{lemma}[The operator $\Gamma$]\label{lem:gamma}
Let $(a,\varepsilon,\theta)\in\mathscr U_\lambda$ and let $\Gamma$ be the map \eqref{eq:Gamma} relating the modal inner product of the eigenbasis of $\Ap$ to the $L^2$ inner product. Then $\Gamma=\tilde JJ^{-1}$ is an isomorphism of $\LL$ with $\|\Gamma\|\le M$ and $\|\Gamma^{-1}\|\le m^{-1}$. It is bounded from $\mathcal H^s$ into $H^s(0,1)$ for every $s\in[0,4]$, with 
\begin{align}
\Gamma(\mathcal H^s)&\subset\;\{v\in H^s(0,1):v(0)=v(1)=0\}\label{eq:app-h-display-001}
\end{align}
 for $\frac12<s\le4$, in particular $\Gamma(\mathcal H^1)\subset H^1_0(0,1)$. On $\mathscr U_{B,d}$ the norms $\|\Gamma\|_{\mathcal L(\mathcal H^s,H^s)}$ are bounded by $C(B,d)$.
\end{lemma}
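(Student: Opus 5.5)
The plan is to treat $\Gamma=\tilde JJ^{-1}$ as the composition of the coordinate map $J^{-1}:w\mapsto(w_n)_n$ with the adjoint synthesis map $\tilde J:(c_n)\mapsto\sum_nc_n\tilde\chi_n$, and to prove each assertion separately for these two factors.

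\emph{$L^2$ bounds.} On $\LL$, \eqref{eq:riesz-const} gives $\|J^{-1}\|^2\le M$ and $\|J\|^2=m^{-1}$. Biorthogonality makes $\tilde J$ the inverse adjoint of $J$, that is $\tilde J=(J^{-1})^*$ under the identification of $\ell^2$ with its dual. Hence $\|\tilde J\|=\|J^{-1}\|\le M^{1/2}$ and $\|\tilde J^{-1}\|=\|J\|\le m^{-1/2}$. Multiplying the two factors gives $\|\Gamma\|\le M$ and $\|\Gamma^{-1}\|\le m^{-1}$.

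\emph{Sobolev scale.} By definition $J^{-1}$ is an isometry from $\mathcal H^s$ onto $\ell^2_s$, so it suffices to show that $\tilde J:\ell^2_s\to H^s(0,1)$ is bounded for $0\le s\le4$. We do this by interpolation between the endpoint cases, as in the gain-regularity step of the proof of Theorem~\ref{thm:kernel}. At $s=0$ the bound is the Riesz bound above. At $s=4$ we use that $\tilde\chi_n\in D(\Ap{}^*)=D(\Aop)$ by Lemma~\ref{lem:spectrum}(ii), together with $\Ap{}^*\tilde J c=\tilde J(\overline{\mu_n'}c_n)$. The graph norm of $\Ap{}^*$ is equivalent to the $H^4$ norm, which follows from \eqref{eq:control-elliptic} because the rank-one term $-\varepsilon\qd\phi(0)g$ in \eqref{eq:app-d-display-016} is controlled by the trace inequality \eqref{eq:control-trace} and interpolation. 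With $|\mu_n'|\le Cn^4$ from \eqref{eq:HL-hyp}, this gives $\|\tilde Jc\|_{H^4}\le C\|c\|_{\ell^2_4}$. Complex interpolation of the scales $\ell^2_s$ and $H^s$ then yields every $s\in[0,4]$.

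\emph{Boundary values.} Each $\tilde\chi_n$ lies in $D(\Aop)$, so $\tilde\chi_n(0)=\tilde\chi_n(1)=0$. For $s>1/2$ the point traces are continuous on $H^s(0,1)$. Finite modal sums are dense in $\ell^2_s$ and map into $\{v:v(0)=v(1)=0\}$, and this set is closed in $H^s$, so the whole image $\Gamma(\mathcal H^s)$ lies in it, which is \eqref{eq:app-h-display-001}.

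\emph{Uniformity on $\mathscr U_{B,d}$.} This is the main obstacle. The constants above involve $M$, the $H^4$ graph-norm equivalence, which depends only on $\|\lambda\|_{W^{1,\infty}}$ and $|\varepsilon|\|g_\theta\|_{L^\infty}$ and is therefore bounded by $C(B)$, and the eigenvalue growth constant in \eqref{eq:HL-hyp}. The first is bounded by \eqref{eq:app-g-display-022} and the second trivially. For the third, the uniform high-mode estimate \eqref{eq:unif-high} gives $|\mu_n'|\le C(B)n^4$ for $n\ge N_0(B)$. For the finitely many low modes, the eigenvalues lie in a bounded set by the compactness and resolvent convergence of Sections~\ref{sssec:continuity-resolvent}--\ref{sssec:continuity-compactness}. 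Interpolation constants are universal, so the resulting $\|\Gamma\|_{\mathcal L(\mathcal H^s,H^s)}$ bound depends only on $(B,d)$.
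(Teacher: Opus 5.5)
Your proposal is correct and follows the paper's proof essentially step for step. The identity $\tilde J=(J^{-1})^*$ gives the $L^2$ bounds. The $H^4$ graph-norm equivalence for $\Ap{}^*$, with the trace term $\varepsilon\qd\phi(0)g$ absorbed, combined with $|\mu_n'|\le Cn^4$ gives the $\mathcal H^4\to H^4$ bound, and interpolation covers $s\in[0,4]$. Density of finite sums and continuity of the trace give the endpoint conditions.

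The one step you leave implicit is that the infinite series $\tilde Jc$ lies in $D(\Ap{}^*)$ with $\Ap{}^*\tilde Jc=\tilde J(\overline{\mu_n'}c_n)$. The paper gets this from the closedness of $\Ap{}^*$, since both series converge in $\LL$.
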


\begin{proof}[Proof of Lemma~\ref{lem:gamma}]
The coordinate maps give the $L^2$ bounds directly. Indeed,
\begin{align}
\Gamma J&=\;\tilde J,
\qquad \Gamma=\tilde JJ^{-1},
\qquad \tilde J=(J^{-1})^*.\label{eq:app-h-display-002}
\end{align}
Consequently,
\begin{align}
\|\Gamma\|&\le\;\|J^{-1}\|^2=M,
\qquad \|\Gamma^{-1}\|\le\|J\|^2=m^{-1}.\label{eq:app-h-display-003}
\end{align}

For Sobolev regularity, we use graph-norm estimates for $\Ap$ and
its adjoint. The following constants depend on $\lambda$ and
$(a,\varepsilon,\theta)$ for a fixed design and are uniform on
$\mathscr U_{B,d}$ as indicated.
For $\phi\in D(\Ap)$, the equation and boundary condition give
\begin{align}
\phi''''&=\;-\Ap\phi-(\lambda\phi')',
\qquad |\phi(0)|=|\varepsilon\ip{g_\theta}{\phi}|
\le C(B)\|\phi\|.\label{eq:app-g-display-030}
\end{align}
The elliptic estimate, with interpolation to absorb the $H^2$ term, gives
\begin{align}
\|\phi\|_{H^4}&\le\;C(B)(\|\Ap\phi\|+\|\phi\|).\label{eq:app-g-display-031}
\end{align}
For the adjoint, the additional trace term is absorbed using
\begin{align}
|\qd\psi(0)|&\le\;\delta\|\psi\|_{H^4}
 +C_{\delta,B}\|\psi\|,\label{eq:app-g-display-032}
\end{align}
with $\delta$ small. This is uniform because
$|\varepsilon|\|g_\theta\|$ is bounded on the class. Hence both operators
have uniform equivalence of graph and $H^4$ norms. For $\phi\in\mathcal H^4$,
the series
\begin{align}
\Gamma\phi&=\;\sum_n\phi_n\tilde\chi_n,
\qquad \sum_n\bar\mu_n'\phi_n\tilde\chi_n
\label{eq:app-h-display-004}
\end{align}
converge in $\LL$ by the spectral growth and Riesz bounds.
Closedness of $\Ap{}^*$ puts the first series in its domain.
The uniform graph-norm estimate then gives
\begin{align}
\|\Gamma\|_{\mathcal L(\mathcal H^4,H^4)}&\le\;C(B,d).\label{eq:app-g-display-033}
\end{align}
Interpolating with the $L^2$ bound above gives the uniform bound for every
$s\in[0,4]$, and the same argument applies to the spectral embedding.
Each finite adjoint eigenfunction sum vanishes at both endpoints.
Convergence in $H^s$ and continuity of the trace for $s>1/2$ pass these
boundary conditions to the limit.
\end{proof}

\section{Closed-loop well-posedness}\label{app:well-posedness}

Throughout this appendix, fix an admissible design. Generic constants may
depend on $\lambda$, $(a,\varepsilon,\theta)$, and $\nu_0$, but not on
the state, mode indices, or time.

Proposition~\ref{prop:target-lin} treats a prescribed forcing $f$.
For the closed loop, $f$ depends on the unknown state, so existence and
uniqueness require a fixed-point argument. The following result uses
the general forced-target estimate together with
Lemmas~\ref{lem:forcing} and~\ref{lem:gain-error}.

\begin{proposition}[Local existence and uniqueness for closed-loop plant]\label{prop:wp}
Consider the Kuramoto--Sivashinsky plant \eqref{eq:plant} with the pre-feedback \eqref{eq:U1} and the perturbed curvature feedback \eqref{eq:U2hat}, that is $U_2=\ip{\hat h}{u}$. For every $u_0\in\LL$ and every $\hat h\in\LL$ there is $\tau>0$, depending only on $\|u_0\|$, $\|\hat h-h\|$ and the design, and a unique solution of the closed loop on $[0,\tau)$ in the sense of Definition~\ref{def:solution}. Any two solutions coincide on their common interval of existence. If $[0,\tau^*)$ is the maximal interval of existence and $\tau^*<\infty$, then $\|u(t)\|\to\infty$ as $t\uparrow\tau^*$. Moreover $t\mapsto V(Tu(t))$ is absolutely continuous on compact subsets of $[0,\tau^*)$ and satisfies \eqref{eq:Vdot} with $f$ as in \eqref{eq:target-full}.
\end{proposition}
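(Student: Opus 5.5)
We will prove Proposition~\ref{prop:wp} by a contraction argument in the target coordinates $w=Tu$, using the forced-target theory of Proposition~\ref{prop:target-lin} as the linear solution operator. Fix $\tau>0$ and work in
\[
X_\tau:=C([0,\tau];\mathcal H)\cap L^2(0,\tau;\mathcal H^2),\qquad \|w\|_{X_\tau}^2:=\sup_{[0,\tau]}V(w)+\int_0^\tau\Sigma(w)\dd t.
\]
By Lemma~\ref{lem:smoothing}(ii), $u=T^{-1}w$ lies in $C([0,\tau];\LL)\cap L^2(0,\tau;\mathcal H^2)$ with equivalent norms, so solutions in the sense of Definition~\ref{def:solution} correspond to fixed points of
\[
\Phi(w)(t):=e^{t(\Ap-a)}Tu_0+\int_0^te^{(t-s)(\Ap-a)}f(w(s))\dd s,\qquad f(w)=B_2\ip{e}{T^{-1}w}-T\big((T^{-1}w)(T^{-1}w)_x\big),
\]
with $e=\hat h-h$. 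The plan is: (1) show that $f$ maps $X_\tau$ into $L^2(0,\tau;\mathcal H^{-2}_\gamma)$ locally Lipschitz; (2) use the energy estimate \eqref{eq:target-forced}, in its integrated form \eqref{eq:app-f-display-006}, to bound $\|\Phi(w)\|_{X_\tau}$ and $\|\Phi(w)-\Phi(\tilde w)\|_{X_\tau}$; (3) conclude with Banach's fixed point theorem on a ball of radius $2R$, with $R^2\simeq V(Tu_0)$, for small $\tau$.

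For (1), the boundary term is linear and satisfies $\|B_2\|_{\mathcal H^{-2}_\gamma}=\Lambda$, so
\[
\|B_2\ip{e}{u}\|_{L^2(0,\tau;\mathcal H^{-2}_\gamma)}\le\Lambda\|e\|c_2m^{-1/2}\sqrt\tau\,\|w\|_{X_\tau}.
\]
This is the source of the factor $\sqrt\tau$, and it is why $\tau$ depends on $\|\hat h-h\|$. For the Burgers term we bound it by duality, as in Lemma~\ref{lem:forcing}. For $\psi\in\mathcal H^2$, set $\phi=\Gamma T^\sharp\psi\in H^2\cap H^1_0$ by Lemma~\ref{lem:gamma}. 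Then
\[
|\hp{T(uu_x)}{\psi}|=\tfrac12\Big|\int u^2\overline{\phi'}\Big|\le C\|u\|_{L^4}^2\|\psi\|_{\mathcal H^{1}}\le C\|u\|^{3/2}\|u\|_{H^1}^{1/2}\|\psi\|_{\mathcal H^2},
\]
which gives $\|T(uu_x)\|_{\mathcal H^{-2}}\le C\|u\|^{3/2}\|u\|_{H^1}^{1/2}$. Interpolation gives $\|u\|_{H^1}\le C\|w\|^{1/2}\|w\|_{\mathcal H^2}^{1/2}$. Integrating in time with Hölder,
\[
\int_0^\tau\|T(uu_x)\|_{\mathcal H^{-2}}^2\le C\sup\|w\|^{7/2}\int_0^\tau\|w\|_{\mathcal H^2}^{1/2}\le C\tau^{3/4}\|w\|_{X_\tau}^4 .
\]
The difference $u^2-\tilde u^2=(u-\tilde u)(u+\tilde u)$ is treated the same way, giving a Lipschitz constant $C\tau^{3/8}(\|w\|_{X_\tau}+\|\tilde w\|_{X_\tau})$.

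For (2), the linear estimate \eqref{eq:app-f-display-006}, summed over $n$ and combined with the terminal-energy version \eqref{eq:target-continuous-tail}, gives
\[
\|\Phi(w)\|_{X_\tau}^2\le C\big(V(Tu_0)+\|f\|_{L^2(0,\tau;\mathcal H^{-2}_\gamma)}^2\big).
\]
Choosing $\tau$ small in terms of $R$ and $\|e\|$ makes $\Phi$ a contraction of the ball into itself. This yields local existence, with $\tau$ depending only on $\|u_0\|$, $\|e\|$ and the design constants. Uniqueness on any common interval follows by applying the same Lipschitz bound on short subintervals and a Gronwall-type continuation. Because $\tau$ depends only on the size of the data, the standard maximal-interval argument yields the blow-up alternative: if $\|u(t)\|$ stayed bounded as $t\uparrow\tau^*$, we could restart with a uniform step and extend past $\tau^*$. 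The absolute continuity of $V$ and the identity \eqref{eq:Vdot} then follow directly from Proposition~\ref{prop:target-lin}(ii), since $f\in L^2_{\rm loc}(\mathcal H^{-2})$.

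The main obstacle is the duality estimate for $T(uu_x)$ in $\mathcal H^{-2}$. It requires that $T^\sharp$ preserve $\mathcal H^2$ (Lemma~\ref{lem:smoothing}(ii)) and that $\Gamma$ map into $H^1_0$, so that the boundary term in the integration by parts vanishes even though $u(0)\ne0$. If the $\mathcal H^{-2}$ route proves delicate, the fallback is to place $T(uu_x)$ in $\mathcal H^{-1}$ using only the $\mathcal H^1$ bounds, exactly as in Lemma~\ref{lem:forcing}, and then embed $\mathcal H^{-1}\subset\mathcal H^{-2}$.
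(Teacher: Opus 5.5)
Your proposal is correct and follows essentially the same route as the paper. Both arguments run a Banach contraction for the target-coordinate mild-solution map in $C([0,\tau];\mathcal H)\cap L^2(0,\tau;\mathcal H^2)$ and bound the boundary forcing through $\|B_2\|_{\mathcal H^{-2}_\gamma}=\Lambda$ (gaining $\tau^{1/2}$). They bound the Burgers term by duality through $\Gamma T^\sharp$ and Gagliardo--Nirenberg (gaining $\tau^{3/8}$), restart to obtain continuation, and invoke Proposition~\ref{prop:target-lin}(ii) for the Lyapunov identity. One small precision: to get $\|u(t)\|\to\infty$ rather than only unboundedness, apply your restart argument at a single time close to $\tau^*$ at which $\|u\|$ is bounded, as the paper does.
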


\begin{proof}[Proof of Proposition~\ref{prop:wp}]
We first prove local existence and uniqueness by contraction in target coordinates. Fix
$0<\nu_0<\nu$ and use the weighted norm
$\|f\|_{\mathcal H^{-2}_\gamma}$ defined in \eqref{eq:target-forced}. Equivalence follows from
$cn^4\le\gamma_n\le Cn^4$ by Proposition~\ref{prop:target-lin} and
Theorem~\ref{thm:generic}(ii). Write $w=Tu$, $u=T^{-1}w$, $e=\hat h-h$, and work in
\begin{align}
X_\tau&=\;C([0,\tau];\mathcal H^0)
       \cap L^2(0,\tau;\mathcal H^2),
\qquad
\|w\|_{X_\tau}
=\sup_{0\le t\le\tau}\|w(t)\|_{\mathcal H^0}
 +\left(\int_0^\tau\Sigma(w(t))\dd t\right)^{1/2}.\label{eq:app-i-display-003}
\end{align}
Define $\Phi(w)$ as the mild solution of the linear target system with
initial condition $Tu_0$ and forcing $f(w)=B_2\ip{e}{u}-T(uu_x)$ from \eqref{eq:target-full}.
A fixed point gives a solution in the sense of Definition~\ref{def:solution}.
Integrating the identity of Proposition~\ref{prop:target-lin}(ii) with
\eqref{eq:app-f-display-014} at $\delta=1/2$ gives
\begin{align}
\|\Phi(w)\|_{X_\tau}
&\le\;3\|Tu_0\|_{\mathcal H^0}
 +4\|f(w)\|_{L^2(0,\tau;\mathcal H^{-2}_\gamma)}.\label{eq:app-i-display-009}
\end{align}

In the proof of Lemma~\ref{lem:forcing}, replace $w$ in the second slot
by an arbitrary $v\in\mathcal H^2$. Lemma~\ref{lem:smoothing}(ii) and
Lemma~\ref{lem:gamma} control $\Gamma T^\sharp v$ in $H^1_0(0,1)$.
Using $\|v\|_{\mathcal H^1}\le c_\gamma^{-1/2}\Sigma(v)^{1/2}$ from
\eqref{eq:dissipation-constants} and taking the supremum over
$\Sigma(v)\le1$ gives the first bound below. Spectral interpolation
and the $L^2$ norm equivalence give the second:
\begin{align}
\|T(uu_x)\|_{\mathcal H^{-2}_\gamma}
&\le\;C\|u\|^{3/2}\|u\|_{\mathcal H^1}^{1/2}
\le C\|u\|^{7/4}\|u\|_{\mathcal H^2}^{1/4}.\label{eq:nlH2}
\end{align}
H\"older's inequality in time and boundedness of $T^{-1}$ on
$\mathcal H^0$ and $\mathcal H^2$, from Lemma~\ref{lem:smoothing}(ii), give
\begin{align}
\|T(uu_x)\|_{L^2(0,\tau;\mathcal H^{-2}_\gamma)}
&\le\;C\tau^{3/8}\|w\|_{X_\tau}^2.\label{eq:app-i-display-006}
\end{align}
For the boundary forcing, \eqref{eq:Lambda} gives
$\|B_2\|_{\mathcal H^{-2}_\gamma}=\Lambda$, so
\begin{align}
\|B_2\ip{e}{u}\|_{L^2(0,\tau;\mathcal H^{-2}_\gamma)}
&\le\;C\Lambda\|e\|\tau^{1/2}\|w\|_{X_\tau}.\label{eq:app-i-display-007}
\end{align}
For differences, use
$uu_x-vv_x=\tfrac12\partial_x((u-v)(u+v))$.
The same product estimates and the linear estimate with zero initial data give
\begin{align}
\|\Phi(w)-\Phi(z)\|_{X_\tau}
&\le\;C\big[\tau^{3/8}(\|w\|_{X_\tau}+\|z\|_{X_\tau})
 +\Lambda\|e\|\tau^{1/2}\big]\|w-z\|_{X_\tau}.
\label{eq:local-contraction}
\end{align}

Choose $R=6\|Tu_0\|_{\mathcal H^0}+1$. Estimates
\eqref{eq:app-i-display-009}, \eqref{eq:app-i-display-006}, and
\eqref{eq:app-i-display-007} show that $\Phi$ maps the closed radius-$R$
ball of $X_\tau$ into itself for sufficiently small $\tau$.
Estimate \eqref{eq:local-contraction} makes it a strict contraction.
Banach's fixed-point theorem \cite[Thm.~5.7]{Brezis2011} gives a unique
fixed point. The choice of $\tau$ depends only on the stated data and
the design. Applying the same difference estimate on short intervals
gives uniqueness on overlapping intervals.

We next prove continuation and the blow-up alternative. Restarting the local construction gives a maximal
interval of existence. The existence time is uniform when the initial
norm is bounded. If the solution remained bounded along times approaching
a finite $\tau^*$, restarting at one of those times would extend it
beyond $\tau^*$, proving the blow-up alternative.

Finally, we prove the Lyapunov identity. The forcing estimates place $f$ in
$L^2(0,\tau;\mathcal H^{-2})$ on every compact subinterval of existence.
Proposition~\ref{prop:target-lin}(ii) therefore gives absolute continuity
of $V(Tu)$ and identity \eqref{eq:Vdot}.
\end{proof}

\end{appendices}

\bibliographystyle{plain}
\bibliography{references}
\end{document}

%% file: diagrams/styles.tex
\definecolor{ksblue}{HTML}{1764AB}
\definecolor{ksorange}{HTML}{C25319}
\tikzset{
  ksbox/.style={draw=black!55,rounded corners=2pt,fill=black!2,
    align=center,inner sep=5pt,font=\fontsize{9}{10.5}\selectfont,minimum height=8mm},
  ksdesign/.style={ksbox,draw=ksblue,fill=ksblue!5},
  ksgain/.style={ksbox,draw=ksorange,fill=ksorange!5},
  ksarrow/.style={-{Latex[length=2mm]},line width=.7pt},
  kssignal/.style={font=\footnotesize,fill=white,inner sep=1.5pt},
  ksheading/.style={font=\small\bfseries,anchor=west},
  ksnote/.style={font=\footnotesize,align=center}
}

%% file: diagrams/part1-overview.tex
\begin{tikzpicture}[x=1cm,y=1cm]
\node[ksheading] at (0,0) {(a) Design sequence};
\node[ksheading] at (8.8,0) {(b) Closed-loop implementation};

\node[ksdesign,text width=3.0cm,minimum height=1.9cm] (pre) at (1.65,-1.4)
  {\textcolor{ksblue}{\textbf{1. Pre-feedback}}\\[4pt]
   Choose $(\varepsilon,\theta)$\\
   Make every mode\\
   controllable from $U_2$};
\node[ksgain,text width=3.4cm,minimum height=1.9cm] (fredholm) at (5.65,-1.4)
  {\textcolor{ksorange}{\textbf{2. Fredholm gain}}\\[4pt]
   Choose the shift $a$\\
   \mbox{to assign the decay rate}\\[2pt]
   Construct $h$ and $T$};
\draw[ksarrow] (pre.east)--(fredholm.west);
\node[ksnote,text width=7.4cm] at (3.65,-2.9)
  {Displacement enables controllability.\\
   Curvature feedback stabilizes the plant.};

\node[ksdesign,text width=2.35cm,minimum height=.85cm] (uone) at (10.35,-.85)
  {\textcolor{ksblue}{\textbf{Displacement}}\\
   $U_1=\varepsilon\langle g_\theta,u\rangle$};
\node[ksgain,text width=2.35cm,minimum height=.85cm] (utwo) at (10.35,-2.05)
  {\textcolor{ksorange}{\textbf{Curvature}}\\
   $U_2=\langle h,u\rangle$};
\node[ksbox,text width=2.2cm,minimum height=2.05cm] (plant) at (14.1,-1.45)
  {\textbf{KS plant}\\[4pt]
   $u(0)=U_1$\\
   $u_{xx}(0)=U_2$};
\draw[ksarrow,draw=ksblue] (uone.east)--node[kssignal,above] {$U_1$} (plant.west |- uone.east);
\draw[ksarrow,draw=ksorange] (utwo.east)--node[kssignal,above] {$U_2$} (plant.west |- utwo.east);
\draw[ksarrow] (plant.east)--(16.35,-1.45);
\node[kssignal,anchor=south east] at (16.4,-1.4) {$u(\cdot,t)$};
\draw (16.0,-1.45)--(16.0,-3.0)--(8.8,-3.0)--(8.8,-.85);
\draw[ksarrow] (8.8,-.85)--(uone.west);
\draw[ksarrow] (8.8,-2.05)--(utwo.west);
\fill (8.8,-2.05) circle (1pt);
\fill (16.0,-1.45) circle (1pt);
\node[kssignal] at (12.3,-3.0) {full state $u(\cdot,t)$};
\end{tikzpicture}